\renewcommand{\R}{\ensuremath{\mathbb{R}}\xspace}
\newcommand{\Z}{\ensuremath{\mathbb{Z}}\xspace}
\newcommand{\Q}{\ensuremath{\mathbb{Q}}\xspace}
\newcommand{\N}{\ensuremath{\mathbb{N}}\xspace}
\newcommand{\ER}{\ensuremath{\exists\R}\xspace}
\newcommand{\UR}{\ensuremath{\forall\R}\xspace}
\newcommand{\UER}[1][]{{\normalfont\ensuremath{\forall\exists_{#1}\R}}\xspace}
\newcommand{\EUR}[1][]{{\normalfont\ensuremath{\exists\forall_{#1}\R}}\xspace}
\newcommand{\QFF}[1][]{{\normalfont\ensuremath{\mathrm{QFF_{#1}}}}\xspace}
\newcommand{\eps}{\varepsilon}
\newcommand{\til}{\widetilde}
\newcommand{\dH}{d_{\mathrm{H}}}
\newcommand{\dDH}{\vec{d}_{\mathrm{H}}}
\DeclarePairedDelimiter\abs{\lvert}{\rvert}
\DeclarePairedDelimiter\norm{\lVert}{\rVert}
\newcommand{\sepdot}{\, . \,}
\newcommand{\counterexample}[1]{\bot(#1)}
\newcommand{\dequiv}{:\equiv}
\newcommand{\wequiv}{\ \equiv\ }
\newcommand{\wdequiv}{\ :\equiv\ }
\newcommand{\problemname}[1]{{\normalfont\textsc{#1}}}
\newcommand{\Hausdorff}{\problemname{Hausdorff}\xspace}
\newcommand{\DirectedHausdorff}{\problemname{Directed Hausdorff}\xspace}
\newcommand{\UETR}{\problemname{UETR}\xspace}
\newcommand{\StrictUETR}{\problemname{Strict-UETR}\xspace}
\newcommand{\UStrict}{$\forall$-strict\xspace}
\newcommand{\ExoticUETR}{\problemname{Exotic-UETR}\xspace}
\newtheorem{openproblem}{Open Problem}
\title{The Complexity of the Hausdorff Distance}
\titlerunning{The Complexity of the Hausdorff Distance}
\author{Paul Jungeblut}{Karlsruhe Institute of Technology, Germany}{paul.jungeblut@kit.edu}{https://orcid.org/0000-0001-8241-2102}{}
\author{Linda Kleist}{Technische Universität Braunschweig, Germany}{kleist@ibr.cs.tu-bs.de}{https://orcid.org/0000-0002-3786-916X}{}
\author{Tillmann Miltzow}{Utrecht University, The Netherlands}{t.miltzow@uu.nl}{https://orcid.org/0000-0003-4563-2864}{}
\authorrunning{P. Jungeblut, L. Kleist and T. Miltzow}
\keywords{%
    Hausdorff Distance,
    Semi-Algebraic Set,
    Existential Theory of the Reals,
    Universal Existential Theory of the Reals,
    Complexity Theory
}
\begin{document}

\maketitle

\begin{abstract}
  We investigate the computational complexity of computing the Hausdorff distance.
  Specifically, we show that the decision problem of whether the Hausdorff distance of  two semi-algebraic sets is bounded by  a given threshold is complete for the complexity class \UER[<].
  This implies that the problem is \NP-, \co-\NP-, \ER- and \UR-hard.
\end{abstract}

\section{Introduction}

The question of \enquote{how similar are two given objects} occurs in numerous settings. For three concrete examples, consider \cref{fig:hausdorff_examples}.

\begin{figure}[htb]
    \centering
    \begin{subfigure}[b]{0.24\textwidth}
        \centering
        \includegraphics[page=1]{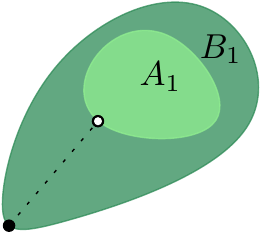}
        \caption{$\dH(A_1,B_1)$}
        \label{fig:hausdorff_examples_1}
    \end{subfigure}
    \hfill
    \begin{subfigure}[b]{0.3\textwidth}
        \centering
        \includegraphics[page=2]{figures/hausdorff-examples.pdf}
        \caption{$\dH(A_2,B_2)$}
        \label{fig:hausdorff_examples_2}
    \end{subfigure}
    \hfill
    \begin{subfigure}[b]{0.2\textwidth}
        \centering
        \includegraphics[page=3]{figures/hausdorff-examples.pdf}
        \caption{$\dH(A_3,B_3)$}
        \label{fig:hausdorff_examples_3}
    \end{subfigure}
    \caption{How similar are these sets?}
    \label{fig:hausdorff_examples}
\end{figure} 

A typical tool to quantify their similarity is the Hausdorff distance.
Two sets have a small Hausdorff distance if every point of one set is close to some point of the other set and vice versa.
It is well known that the Hausdorff distance appears in many branches of science.
To illustrate the range of use cases, we consider two examples, for illustrations see \cref{fig:motivation}.
In mathematics, the Hausdorff distance provides a metric on sets and henceforth also a topology.
This topology can be used to discuss continuous transformations of one set to another~\cite{Bredon2013_Topology}.
In computer vision and geographical information science, the Hausdorff distance is used to measure the similarity between spacial objects~\cite{Min2007_HausdorffInGIS,Rucklidge1996_Hausdorff}, for example the quality of quadrangulations of complex 3D models~\cite{Verhoeven2022_Dev2PQ}.
In this paper, we study the computational complexity of the Hausdorff distance from a theoretical perspective.

\begin{figure}[htb]
    \centering
    \begin{subfigure}[b]{0.47\textwidth}
        \centering
        \includegraphics[page=1]{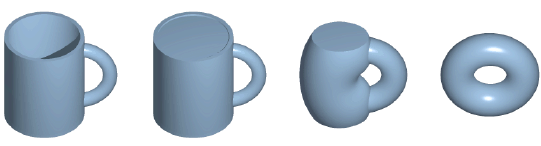}
        \caption{Continuous deformation of a cup into a doughnut~\cite{Wikipedia_Homotopy}.}
        \label{fig:motivation_cups}
    \end{subfigure}
    \hfill
    \begin{subfigure}[b]{0.45\textwidth}
        \centering
        \includegraphics[page=2]{figures/motivating-examples.pdf}
        \caption{Quadrangulation of a smooth surface used for rendering~\cite{Verhoeven2022_Dev2PQ}.}
        \label{fig:motivation_quadragular_mesh}
    \end{subfigure}
    \caption{The Hausdorff distance in different branches of science.}
    \label{fig:motivation}
\end{figure}

\subsection{Problem Definition}
The \emph{directed Hausdorff distance} between two non-empty sets $A,B \subseteq \R^n$ is defined as
\[
    \dDH(A,B) :=
    \adjustlimits \sup_{a \in A} \inf_{b \in B} \norm{a - b}
    \text{.}
\]
The directed Hausdorff distance between~$A$ and~$B$ can be interpreted as the smallest value $t \geq 0$ such that the (closed) $t$-neighborhood of~$B$ contains~$A$.
Hence, it nicely captures the intuition of how much~$B$ has to be expanded uniformly in all directions to contain~$A$.
Note that this definition is not symmetric, so $\dDH(A,B)$ and $\dDH(B,A)$ may differ. 
For an example, consider \cref{fig:hausdorff_examples_1};
while $A_1 \subseteq B_1$ and thus $\dDH(A_1,B_1) = 0$, it holds that $\dDH(B_1,A_1) > 0$.
In contrast, the (undirected) \emph{Hausdorff distance} is symmetric and defined as
\[
    \dH(A,B) := \max\bigl\{
        \dDH(A,B),
        \dDH(B,A)
    \bigr\}
    \text{.}
\]
In this paper, we investigate the computational complexity of deciding whether the Hausdorff distance of two sets is at most a given threshold.

\subsection{Semi-Algebraic Sets}
The algorithmic complexity of computing the Hausdorff distance clearly depends on the type of the underlying sets:
If both sets consist of finitely many points, their Hausdorff distance can be easily computed in polynomial time by checking all pairs of points.
However in practice, one often considers infinite sets such as collections of disks in the plane, cubic splines or surfaces in three (or more) dimensions, see also \cref{fig:semi_algebraic_set_examples}.

\begin{figure}[htb]
    \centering
    \begin{subfigure}[t]{0.27\textwidth}
        \centering
        \includegraphics[page=1]{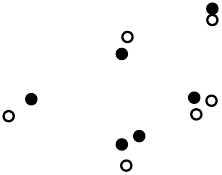}
        \caption{Two sets of white and black points in~$\R^2$.}
        \label{fig:semi_algebraic_set_points}
    \end{subfigure}
    \hfill
    \begin{subfigure}[t]{0.26\textwidth}
        \centering
        \includegraphics[page=2]{figures/settings-examples.pdf}
        \caption{Two sets of blue and red disks in $\R^2$.}
        \label{fig:semi_algebraic_set_disks}
    \end{subfigure}
    \hfill
    \begin{subfigure}[t]{0.37\textwidth}
        \centering
        \includegraphics[page=3]{figures/settings-examples.pdf}
        \caption{Two surfaces with different meshes in $\R^3$, image from~\cite{Verhoeven2022_Dev2PQ}.}
        \label{fig:semi_algebraic_set_surfaces}
    \end{subfigure}
    \caption{The Hausdorff distance in simple and more complicated settings.}
    \label{fig:semi_algebraic_set_examples}
\end{figure}

In this paper, we focus on semi-algebraic subsets of~$\R^n$ that can be described by polynomial inequalities with integer coefficients.
For simplicity, we use the term \emph{semi-algebraic set} in this paper.
Formally, a semi-algebraic set is the finite union of basic semi-algebraic sets.
A \emph{basic semi-algebraic set}~$S$ is specified by two families of polynomials~$\mathcal{P}$ and~$\mathcal{Q}$ with integer coefficients such that
\[
    S = \bigl\{
        x \in \R^n \mid
        \bigwedge_{P \in \mathcal{P}} P(x) \leq 0 \land
        \bigwedge_{Q \in \mathcal{Q}} Q(x) < 0
    \bigr\}
    \text{.}
\]
Semi-algebraic sets cover clearly the vast majority of practical cases.
Simultaneously, even in supposedly simple cases, i.e., when considering circles, ellipses or cubic splines, one has to use polynomial equations to describe the sets.

\subsection{Concrete Example}
In order to demonstrate how difficult it is in practice to compute the Hausdorff distance even between two curves in~$\R^2$, let us consider the following example (given by Bernd Sturmfels at a workshop in Saarbr\"{u}cken in 2019).
The two polynomials
\begin{align*}
    f(x,y) &:= x^4 + y^4 + 12x^3 + 2y^3 - 3xy + 11 \quad \text{and} \\
    g(x,y) &:= 7x^4 + 8y^4 - 1
\end{align*}
define sets $A = \{(x,y) \in \R^2 \mid f(x,y) = 0\}$ and $B = \{(x,y) \in \R^2 \mid g(x,y) = 0\}$.
For an illustration of~$A$ and~$B$, consider the blue and green curve in \cref{fig:concrete_example}, respectively.

\begin{figure}[htb]
    \centering
    \includegraphics{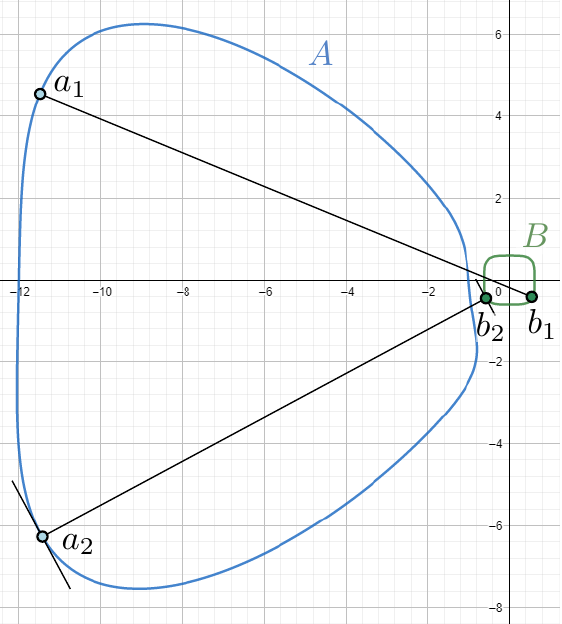}
    \caption{
        The Hausdorff distance between the compact semi-algebraic sets (in blue and green) is attained at points $(a_2, b_2)$ such that the segment $a_2b_2$ is orthogonal to the tangents at~$a_2$ and~$b_2$.
        While the segment $a_1b_1$ is longer than $a_2b_2$, the pair $(a_1, b_1)$ does not realize the Hausdorff distance because the segment $a_1b_1$ crosses both~$A$ and~$B$.
    }
    \label{fig:concrete_example}
\end{figure}

It can be argued using convexity and continuity that the Hausdorff distance is attained at points $a \in A$ and $b \in B$ such that the segment~$ab$ is orthogonal to the tangents at~$a$ and~$b$.
This yields a set of polynomial equations in four variables.
The system has 240 complex solutions, eight of which are real.
These 240 solutions can be computed using computer algebra systems based on Gr\"{o}bner bases.
For some real solutions~$(a,b)$, the segment~$ab$ crosses~$A$ and~$B$, for example~$a_1b_1$ as in \cref{fig:concrete_example}.
These solutions can be discarded.
Among the remaining solutions the points $a_2 \approx (-11.48362, -6.1760)$ and $b_2 \approx (-0.56460, -0.43583)$ realize the Hausdorff distance of approximately $12.33591$.
This approach does not easily generalize to general semi-algebraic sets.
In the next paragraph, we present a slower, but more general method.

\subsection{General Decision Algorithm}
Consider a situation where we are given two semi-algebraic sets~$A$ and~$B$ as well as a threshold~$t \in \N$.
The statement~$\dDH(A,B) \leq t$ can be encoded into a logical sentence of the form
\begin{equation}
    \label{eqn:directed_hausdorff}
    \forall \eps > 0, a \in A \sepdot
    \exists b \in B :
    \norm{a-b}^2 \leq t^2 + \eps
    \text{,}
\end{equation}
where~$\eps$ is needed to also consider points in the closure of~$B$.
We can decide the truth of this sentence by employing sophisticated algorithms from real algebraic geometry that can deal with two blocks of quantifiers~\cite[Chapter 14]{Basu2006_RealAlgebraicGeometry}.
These algorithms are so slow that they are impractical in practice, even for small instances like the above example.
Our main result roughly states that in general there is little hope for an improvement.
To state this formally, we continue by defining suitable complexity classes.

\subsection{Algorithmic Complexity}
Let~$\varphi$ be a \emph{quantifier-free formula} in the \emph{first-order theory of the reals} with free variables $X = (X_1, \ldots, X_n)$ and $Y = (Y_1, \ldots, Y_m)$.
See \cref{sec:fotr} for a formal definition of the syntax and semantics.
For now, think of a set of polynomial equations and inequalities, called \emph{atoms}, with integer coefficients in the variables~$X$ and~$Y$.
These atoms are combined into formula~$\varphi$ using the logical connectives $\land$, $\lor$ and~$\neg$ (also, parenthesis are allowed).
The \problemname{Universal Existential Theory of the Reals} (\UETR) asks to decide whether a sentence of the form
\[
    \forall X \in \R^n \sepdot
    \exists Y \in \R^m :
    \varphi(X,Y)
\]
is true.
We define the following restriction of \UETR:
If~$\varphi$ does not contain negations~(no $\neg$) and all atoms are strict inequalities (only $<$, $>$ or $\neq$), then we denote the corresponding decision problem by \StrictUETR.
Of course, \StrictUETR is a special case of \UETR, so it is at most as difficult.

We capture this by defining the complexity classes \UER and \UER[<] to contain all decision problems that polynomial-time many-one reduce to \UETR and \StrictUETR, respectively.
We propose to pronounce the complexity class \UER as \enquote{UER} or \enquote{forall exists R} and \UER[<] as \enquote{strict-UER}  or \enquote{strict forall exists R}.
Let us emphasize that we work in the bit-model of computation; all inputs have finite precision and their overall length determines the size of the problem instance.
To the best of our knowledge, \UER was first introduced by B\"{u}rgisser and Cucker~\cite[Section~$9$]{Burgisser2009_ExoticQuantifiers} under the name $\text{\textsf{BP}}^0(\forall\exists)$ (in the constant-free Boolean part of the Blum-Shub-Smale-model~\cite{Blum1989_ComputationOverTheReals}).
The notation \UER arised later in~\cite{Dobbins2018_AreaUniversality} extending the notation from Schaefer and \v{S}tevankovi\v{c}~\cite{Schaefer2017_FixedPointsNash}.
The class \co\UER[<] = \EUR[\leq] was first studied by D'Costa, Lefaucheux, Neumann, Ouaknine and Worrel~\cite{DCosta2021_EscapeProblem}.

Concerning the relation of these complexity classes,  \UER[<] is contained in \UER because \StrictUETR is a special case of \UETR.
It is an intriguing open problem if those two classes coincide or are different, see \cref{sec:open_problems} for a short discussion.

\subsection{Problem and Results}
We now have all ingredients to state our problem and main results.
Let~$\varphi_A(X)$ and~$\varphi_B(X)$ be quantifier-free formulas with free variables $X = (X_1, \ldots, X_n)$.
Further, let $A := \{x \in \R^n \mid \varphi_A(x)\}$, $B := \{x \in \R^n \mid \varphi_B(x)\}$ and let $t \in \N$ be a natural number.
The \Hausdorff problem asks whether~$\dH(A,B) \leq t$.
Here the dimension~$n$ of the ambient space of~$A$ and~$B$ is part of the input.
We note that there is a polynomial-time algorithm for every fixed~$n$, see the related work in \cref{sec:related_work}.
Our main result determines the algorithmic complexity.

\begin{theorem}
    \label{thm:hausdorff_strict_uetr_complete}
    The \Hausdorff problem is \UER[<]-complete.
\end{theorem}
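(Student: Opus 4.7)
\textbf{Plan for proving \cref{thm:hausdorff_strict_uetr_complete}.}
The proof has two directions: membership \Hausdorff $\in \UER[<]$ and \UER[<]-hardness.

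\emph{Membership.} The starting point is the equivalence
\[
    \dH(A,B) \leq t \iff \forall \eps > 0,\ a, b \in \R^n : \exists a', b' \in \R^n : \Psi(\eps, a, b, a', b'),
\]
where $\Psi$ says ``if $a \in A$, then $b' \in B$ and $\norm{a-b'}^2 < t^2 + \eps$'' together with the symmetric statement for $b$. The $\eps$-slack turns the non-strict $\le$ of \cref{eqn:directed_hausdorff} into the strict $<$ required by \StrictUETR. The remaining difficulty is to rewrite the implications, which expand to $(a \notin A) \lor \ldots$, using only positive strict atoms. I would handle this by first replacing $A$ and $B$ by their closures (which leaves $\dH$ unchanged) and then describing these closed sets by a positive Boolean combination of non-strict atoms $\le$ and $=$ only, as permitted by the structure theorem for closed semi-algebraic sets. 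The De~Morgan negation of such a formula uses exactly the strict atoms $>$ and $\neq$, both of which are allowed in \StrictUETR.

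\emph{Hardness.} I would reduce from \StrictUETR. Given an instance $\forall X \in \R^n : \exists Y \in \R^m : \varphi(X, Y)$ with $\varphi$ positive and strict, I construct semi-algebraic sets $A, B \subseteq \R^{n+m}$ and a threshold $t$ with $\dH(A, B) \leq t$ iff the sentence holds. A natural construction takes $A := K_X \times \{0\}^m$ and $B := \{(X, Y) \in K_X \times K_Y : \varphi(X, Y)\}$ for compact balls $K_X, K_Y$, with $t$ equal to the radius of $K_Y$; then $\dDH(B, A) \leq t$ is automatic via the vertical projection onto $A$, while $\dDH(A, B) \leq t$ should capture ``$\forall X \in K_X \exists Y \in K_Y : \varphi(X, Y)$''. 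To force the universal $X$ to match the first coordinates of the existential witness in $B$ (and not some nearby $X'$), I would scale the $X$-coordinates by a sufficiently large polynomial factor $M$, making a shift in $X'$ prohibitively expensive compared to the distance threshold.

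\emph{Main obstacle.} The construction above only works over bounded $K_X, K_Y$, whereas \StrictUETR instances are unrestricted. The main technical step is therefore a preparatory reduction from \StrictUETR to a bounded variant: one must produce, in polynomial time, explicit radii $R_X, R_Y$ inside which the two quantifiers can be restricted without changing the truth value. Bounding lemmas of this kind are standard in the $\ER$/$\UER$ literature, but additional care is required because $\varphi$ uses only strict atoms, so its solution set is open and witnesses may only exist arbitrarily close to the boundary of $K_Y$. I would handle this by matching strict atoms with the closure description used in the membership direction and by slightly inflating $R_Y$. Assembling both directions then yields \UER[<]-completeness.
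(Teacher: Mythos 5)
Your high-level plan matches the paper's (bounded product construction with the existential variables kept as extra coordinates for hardness; the $\eps$-slack formulation for membership), but both directions have genuine gaps at exactly the places where the paper has to work hardest.

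In the membership direction, the real obstacle is not the negated occurrence of $a \in A$ but the \emph{positive} occurrence of $b' \in B$. A positive Boolean combination of strict atoms defines an open set, so a closed set such as $\overline{B}$ can never be written positively with $<,>,\neq$ only; after your closure step the witness condition $b' \in \overline{B}$ still contributes non-strict atoms ($\leq$, $=$) on the existential side, so the resulting sentence is not a \StrictUETR instance. Moreover, producing a description of $\overline{A},\overline{B}$ by non-strict atoms is only an existence statement (the finiteness theorem); no polynomial-time procedure for it is known, so this step alone already breaks the polynomial-time many-one reduction. The paper instead keeps the original formulas, applies \cref{lem:tseitin} to turn membership into single equations $F_A = 0$, $F_B = 0$, and then spends \cref{lem:universal_epsiolon_to_special_uetr,lem:special_uetr_to_strict_uetr} on converting the remaining equation into a strict atom: after bounding all ranges (possible because the sentence is \UStrict) one replaces $H = 0$ by $H^2 < \cramped{2^{-2^N}}$ for an explicitly computed $N$, justified by compactness, quantifier elimination (\cref{cor:single_exponential_quantifier_elimination}) and the Ball Theorem, and finally invokes the D'Costa et al.\ completeness result for bounded strict $\forall\exists$ sentences. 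Without some substitute for this machinery your membership argument does not go through.

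In the hardness direction, your construction has the right shape, but the correctness argument is missing its quantitative core. If the \StrictUETR instance is false, its counterexample set can have measure zero (e.g.\ $\forall X\,\exists Y: XY > 1$ fails only at $X=0$), and then $\inf_{b \in B}\norm{(x_0,0)-b}$ can still equal the threshold, so false instances may map to yes-instances. Restricting $Y$ to a compact box does expand the counterexample set, but to make the reduction work you need a concrete lower bound on the radius of an open ball of counterexamples, and then a scaling of the $X$-coordinates that beats the threshold by more than that radius allows. The paper obtains this via \cref{thm:open_ball_counterexamples} (replacing each atom $P<0$ by $\exists Z: Z^2P+1<0$ so that, after \cref{lem:bounding_existential_range}, counterexamples form an open ball, using a continuity-of-minima argument) together with \cref{lem:lower_bound_leading_eps}, which shows the radius is at least doubly exponentially small; consequently the scaling factor must be doubly exponential, $C = \cramped{2^{2^N}}$, encoded by the repeated-squaring gadget, not the ``sufficiently large polynomial factor $M$'' you propose. ``Slightly inflating $R_Y$'' does not address this: what is needed is an explicit, computable gap between the true and false cases, and that is precisely the content of the paper's counterexample-expansion theorem.
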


While we discuss the input encoding of~$\varphi_A$ and~$\varphi_B$ in \cref{sec:fotr}, let us briefly note here that~$t$ is encoded in binary.
A generalization to $t \in \Q$ is straight forward, by representing~$t$ as a reduced fraction and encoding the numerator and denominator in binary.
How to generalize further to $t \in \R$ is not clear, as the input must be a bitstring of finite length.

Prior to our result, it was not even known whether computing the Hausdorff distance is \NP-hard.
As \UER[<] contains the complexity classes \NP, \co\NP, \ER and \UR, our result implies hardness for these classes.
\Cref{thm:hausdorff_strict_uetr_complete} answers an open question posed by Dobbins, Kleist, Miltzow and Rz{\k{a}}{\.{z}}ewski~\cite{Dobbins2018_AreaUniversality}.

One may wonder whether it is crucial for our results that the  \Hausdorff problem asks if the distance $\leq t$ rather than $<t$. We remark that all our proofs work with tiny modifications also for the case of a strict inequality.
Furthermore, our results also hold for the directed Hausdorff distance.
Note that one can compute the undirected Hausdorff distance trivially, by computing twice the directed Hausdorff distance.
Thus intuitively, the directed Hausdorff distance is computationally at least as hard.
Yet, this is not a many-one reduction, as we need to compute the directed Hausdorff distance twice.

In the proof of \UER[<]-hardness for \cref{thm:hausdorff_strict_uetr_complete}, we create instances with some additional properties.
First, our reduction is a gap reduction and the Hausdorff distance of the obtained instance is either below the threshold~$t$ or at least $t \cdot \cramped{2^{2^{\Omega(n)}}}$, where~$n$ denotes the number of variables of~$\varphi_A$ and~$\varphi_B$.
Thus, our result also yields the following inapproximability result:

\begin{restatable}{corollary}{approx}
    \label{cor:inapproximability}
    Let~$A$ and~$B$ be two semi-algebraic sets in~$\R^n$ and~$f(n) = \cramped{2^{2^{o(n)}}}$.
    There is no polynomial-time $f(n)$-approximation algorithm to compute~$\dH(A,B)$, unless $\P = \UER[<]$.
\end{restatable}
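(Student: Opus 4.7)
The plan is to prove the contrapositive. Assuming a polynomial-time $f(n)$-approximation algorithm $\mathcal{A}$ exists with $f(n) = \cramped{2^{2^{o(n)}}}$, I would construct from it a polynomial-time decider for the \Hausdorff problem; since \Hausdorff is \UER[<]-complete by \cref{thm:hausdorff_strict_uetr_complete}, this forces $\P = \UER[<]$.

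The engine is the gap built into the hardness direction of \cref{thm:hausdorff_strict_uetr_complete}. As noted in the paragraph preceding the corollary, the reduction from \StrictUETR produces an instance $(A, B, t)$ with $A, B \subseteq \R^n$ such that $\dH(A,B) \leq t$ in the YES case and $\dH(A,B) \geq t \cdot \cramped{2^{2^{\Omega(n)}}}$ in the NO case, with no value in between. Given an instance of \Hausdorff I would therefore run the reduction to obtain $(A, B, t)$, invoke $\mathcal{A}$ to compute a value $\til{d}$ with $\dH(A,B) \leq \til{d} \leq f(n) \cdot \dH(A,B)$, and accept if and only if $\til{d} \leq f(n) \cdot t$. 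In the YES case $\til{d} \leq f(n) \cdot t$ holds by definition. In the NO case $\til{d} \geq \dH(A,B) \geq t \cdot \cramped{2^{2^{\Omega(n)}}}$, which exceeds $f(n) \cdot t$ as soon as $n$ is large enough, since $\cramped{2^{2^{o(n)}}}$ is eventually strictly smaller than $\cramped{2^{2^{\Omega(n)}}}$. The finitely many small $n$ for which the asymptotic comparison may fail can be resolved by a hard-coded lookup table of constant size.

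The proof is essentially an asymptotic comparison, so the main obstacle is bookkeeping rather than mathematics. The one conceptual point that needs verification is that the parameter $n$ governing the gap (the number of variables produced by the hardness reduction, which coincides with the ambient dimension of $A$ and $B$) is exactly the $n$ appearing as the argument of $f$; this is immediate from how the reduction is stated. A minor technicality is that $f$ itself need not be explicitly computable. However, the algorithm does not actually need to evaluate $f$: for the comparison any fixed computable function strictly between $\cramped{2^{2^{o(n)}}}$ and $\cramped{2^{2^{\Omega(n)}}}$—for instance $\cramped{2^{2^{n/c}}}$ for a sufficiently large constant $c$ depending on the hidden constant in the $\Omega$—can play the role of the threshold in place of $f(n) \cdot t$, and the same case analysis goes through.
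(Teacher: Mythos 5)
Your proposal is correct and follows essentially the same route as the paper, which obtains the corollary directly from the observation that the hardness reduction is a gap reduction: yes-instances yield $\dH(A,B) \leq t$ while no-instances yield $\dH(A,B) \geq t \cdot 2^{2^{\Omega(n)}}$ in ambient dimension $n$, so a polynomial-time $2^{2^{o(n)}}$-approximation would let one decide \StrictUETR in polynomial time by exactly the thresholding you describe. The only cosmetic point is that, since the gap is guaranteed only on the image of the reduction, it is cleanest to phrase the algorithm as deciding \StrictUETR (which already gives $\P = \UER[<]$) rather than arbitrary \Hausdorff instances; your handling of a possibly non-computable $f$ via a fixed doubly exponential threshold, and of the finitely many small dimensions, matches the intended argument.
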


Second, our reduction can be modified slightly to obtain a \Hausdorff instance in which~$A$ and~$B$ are described by relatively simple formulas.

\begin{restatable}{corollary}{simplicity}
    \label{cor:simple_formulas}
    The \Hausdorff problem remains \UER[<]-complete, even if the two sets~$A$ and~$B$ are both described either by
    \begin{enumerate}[(i)]
        \item\label{itm:simplicity_quadratic_equations} a conjunction of quadratic polynomial equations, or
        \item\label{itm:simplicity_single_equation} a single polynomial equation of degree at most four.
    \end{enumerate}
\end{restatable}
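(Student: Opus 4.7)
The plan is to take the \Hausdorff instance produced by the \UER[<]-hardness reduction underlying \cref{thm:hausdorff_strict_uetr_complete} and post-process it in two standard ways. A careful inspection of that reduction shows (or can be arranged) that $\varphi_A$ and $\varphi_B$ describe real algebraic varieties, i.e., are conjunctions of polynomial equations; this is natural because varieties are automatically closed, and only the closure matters for~$\dH$. The equations will, however, have large degree in general, so the first job is to bring the degree down.

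For part~\ref{itm:simplicity_quadratic_equations}, I would apply the textbook linearization trick. For every product $X_iX_j$ occurring in the polynomials defining~$A$, introduce a fresh variable $Y_{ij}$ and the quadratic constraint $Y_{ij}-X_iX_j=0$. Iterating this, every polynomial appearing in $\varphi_A$ can be rewritten as a quadratic polynomial in the enlarged variable set, and analogously for $\varphi_B$. Writing~$\phi$ for the polynomial map sending $X$ to the tuple of auxiliary monomials, the new set $A'\subseteq\R^{n+N}$ is precisely the graph $\{(x,\phi(x)):x\in A\}$, and similarly $B'$ is the graph of $\phi$ over~$B$. Since projection onto the first~$n$ coordinates is $1$-Lipschitz and the inverse map $x\mapsto(x,\phi(x))$ is $L$-Lipschitz on the bounded region used by the reduction, one obtains
\[
    \dH(A,B)\ \leq\ \dH(A',B')\ \leq\ \sqrt{1+L^2}\cdot \dH(A,B)\text{.}
\]
Because the reduction in \cref{thm:hausdorff_strict_uetr_complete} produces a doubly exponential gap while $L$ grows only singly exponentially in the instance size, the gap survives and an appropriately rescaled threshold $t'$ makes $(A',B',t')$ an equivalent \Hausdorff instance now described by conjunctions of quadratic equations.

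For part~\ref{itm:simplicity_single_equation}, I would then fold the quadratic conjunction into a single equation using the sum-of-squares identity over~$\R$: if the quadratic system from part~\ref{itm:simplicity_quadratic_equations} is $\bigwedge_{i=1}^{k} p_i(x)=0$, replace it by
\[
    \sum_{i=1}^{k} p_i(x)^2 \;=\; 0\text{.}
\]
Over the reals this describes exactly the same set, each squared summand has degree at most four, and so does the total polynomial; doing the same for $B$ yields a \Hausdorff instance of the stated form.

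The main obstacle is the bookkeeping in part~\ref{itm:simplicity_quadratic_equations}: one must verify that the reduction of \cref{thm:hausdorff_strict_uetr_complete} can be taken to produce equation-defined sets within a box of controllable size, and that lifting to the Veronese-type graph does not close the $2^{2^{\Omega(n)}}$ gap. Both are routine once the Lipschitz constant of the monomial lifting on the bounded domain is estimated; granted this, parts~\ref{itm:simplicity_quadratic_equations} and~\ref{itm:simplicity_single_equation} follow, and \UER[<]-membership is inherited from \cref{thm:hausdorff_strict_uetr_complete} since the restricted problem is a special case.
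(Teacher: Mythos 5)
Your starting premise is where the argument breaks. The sets produced by the reduction behind \cref{thm:hausdorff_strict_uetr_hard} are not real algebraic varieties and cannot simply ``be arranged'' to be: there, $A$ is defined by $\chi(U) \land \psi(U_N X, Y)$ (plus range constraints), where $\psi$ is inherited from a \StrictUETR instance and hence contains strict inequalities and disjunctions. Neither such a set nor its closure is in general a variety (the closure of $\{x_1<0\}$ is a half-space), and passing to the closure is not a syntactic operation in any case. Your monomial-linearization step ($Y_{ij}-X_iX_j=0$) only lowers the degree of polynomial \emph{equations}; it has no mechanism for absorbing $<$-atoms or the Boolean structure, which is exactly the conversion that has to be supplied. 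The paper supplies it via \cref{lem:tseitin}: applied to the quantifier-free formula $\psi$ from the hardness proof, it yields an equivalent description by a conjunction of quadratic equations (part (i)) or a single equation of degree at most four (part (ii)), at the price of new existentially quantified variables, which are then absorbed as additional coordinates of $A$ exactly like the $Y$-block in the main reduction, while $B$ is trivially of the required shape. Note also that the witnesses introduced for strict atoms (e.g.\ $Z$ with $Z^2P+1=0$, i.e.\ $Z=\pm 1/\sqrt{-P}$) are \emph{not} a Lipschitz function of the base point --- they blow up as $P\to 0^-$ --- so your graph/Lipschitz distortion estimate would not cover the auxiliary variables one actually needs; and even for pure equations, the box has side $C=2^{2^N}$, so the Lipschitz constant of the monomial lifting is doubly, not singly, exponential, which undercuts your gap bookkeeping as stated.

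Your step for part (ii) (summing squares of the quadratics to get a single quartic) is sound and is essentially what part (ii) of \cref{lem:tseitin} provides, but it only becomes available after the conversion above; it does not repair the gap in part (i).
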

\medskip

Our last result concerns the complexity class \UER[<] itself.
As shown in \cref{thm:hausdorff_strict_uetr_complete}, the complexity class \UER[<] exactly captures the complexity of the \Hausdorff problem.
It is defined via the decision problem \StrictUETR which adds the syntactical restriction to \UETR of only allowing strict inequalities as atoms.
There are other complexity classes between \ER/\UR and \UER, one of them is described by Bürgisser and Cucker~\cite{Burgisser2009_ExoticQuantifiers}.
They define new quantifiers that make topological restrictions to \UETR.
Among others they define the \emph{exotic} quantifier~$\forall^*$ as
\[
    \forall^* X \in \R^n :
    \varphi(X)
    \wdequiv
    \forall X \in \R^n, \eps > 0 \sepdot
    \exists \til{X} \in \R^n :
    \norm{X - \til{X}}^2 < \eps^2 \land \varphi(\til{X})
    \text{.}
\]
Intuitively, $\forall^* X \in R^n : \varphi(X)$ does not require that~$\varphi(x)$ holds for all~$x\in \R^n$ but only for all $x$ in a dense subset of $\R^n$.
For more details, we refer to  \cref{sec:exotic_quantifiers}.
Let \ExoticUETR denote the decision problem whether a sentence of the form
\[
    \forall^* X \in \R^n \sepdot
    \exists Y \in \R^m :
    \varphi(X,Y)
\]
is true, where~$\varphi \in \QFF$.
Then the complexity class $\forall^*\exists\R$ contains all problems that many-one reduce to \ExoticUETR. 
Based on our techniques and developed tools, we can show that the two complexity classes $\forall^*\exists\R$ and $\UER[<]$ are the same.

\begin{restatable}{theorem}{exoticuetr}
    \label{thm:exotic_uetr}
    \ExoticUETR is \UER[<]-complete.
    Thus $\forall^*\exists\R = \UER[<]$.
\end{restatable}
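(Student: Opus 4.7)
The plan is to show both $\ExoticUETR \in \UER[<]$ (containment) and $\UER[<]$-hardness of $\ExoticUETR$.

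For the containment, I would unfold the exotic quantifier by its definition, rewriting the $\ExoticUETR$-instance $\forall^* X \in \R^n : \exists Y \in \R^m : \varphi(X, Y)$ as
\[
    \forall X \in \R^n, \eps \in \R : \exists \til X \in \R^n, Y \in \R^m : (\eps \le 0) \lor \bigl( \norm{X - \til X}^2 < \eps^2 \land \varphi(\til X, Y) \bigr) \text{.}
\]
This is already a $\forall \exists$ sentence, and the strict atom $\norm{X - \til X}^2 < \eps^2$ is built in. It only remains to strictify the atoms inherited from $\varphi$. Since the existential variable $\til X$ may move freely in the $\eps$-ball around~$X$, every non-strict atom of $\varphi$ can be replaced by a strict surrogate without changing the truth value, via the same non-strict-to-strict conversion developed for \cref{thm:hausdorff_strict_uetr_complete}. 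The resulting sentence lies in $\StrictUETR$, proving $\ExoticUETR \in \UER[<]$.

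For the hardness, the naive reduction treating a $\StrictUETR$-instance $\forall X : \exists Y : \varphi$ as the $\ExoticUETR$-instance $\forall^* X : \exists Y : \varphi$ fails, since a proper open subset of $\R^n$ may still be dense (consider $\varphi \equiv X \ne 0$). I would therefore reduce from \Hausdorff, which by \cref{thm:hausdorff_strict_uetr_complete} is $\UER[<]$-complete. The key is continuity: the map $a \mapsto \inf_{b \in B} \norm{a-b}$ is continuous, so the supremum over $A$ equals the supremum over any dense subset of $A$, and hence $\dH(A, B) \le t$ can be tested by a formula of $\forall^* \exists$ shape. With a suitable polynomial-sized encoding that parameterizes the $\forall$-variable so that densely quantifying over the parameter domain recovers the supremum over $A$, the Hausdorff condition becomes an $\ExoticUETR$-sentence equivalent to the original $\Hausdorff$-instance.

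The main obstacle is this last encoding step: we must faithfully represent possibly lower-dimensional semi-algebraic sets $A$ and $B$ so that a $\forall^*$ quantifier over the ambient space captures the behavior on $A$ and $B$, and package the two directed Hausdorff checks into a single exotic sentence. With the gadgets from \cref{thm:hausdorff_strict_uetr_complete} and the strictification tools used in the containment direction, the two constructions should fit together to complete the reduction.
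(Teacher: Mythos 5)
There are genuine gaps in both directions. For membership, after unfolding $\forall^*$ your claim that \enquote{every non-strict atom of $\varphi$ can be replaced by a strict surrogate without changing the truth value} because $\til{X}$ moves freely in the $\eps$-ball is not justified and is false as stated: the density is only in the $\til{X}$-coordinates, so equality atoms involving the existential block $Y$ (or equalities in $\til{X}$ defining lower-dimensional sets) cannot be strictified by this freedom, and there is no ready-made \enquote{non-strict-to-strict conversion} in the hardness proof of \cref{thm:hausdorff_strict_uetr_complete} to invoke. The paper's membership argument is exactly the machinery you are skipping: convert $\varphi$ to a single equation $G=0$ via \cref{lem:tseitin}, then apply \cref{lem:universal_epsiolon_to_special_uetr} and \cref{lem:special_uetr_to_strict_uetr}, which need $\forall$-strictness to bound all variable ranges, a compactness step to trade $<\eps$ for $\leq\eps$, and a Ball-Theorem/quantifier-elimination gap bound of the form $2^{-2^N}$ to replace $H=0$ by $H^2<2^{-2^N}$ soundly, before finally invoking the D'Costa et al.\ result. (Also, your unfolded sentence contains the non-strict atom $\eps\leq 0$, so it is not itself a \StrictUETR instance.)

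For hardness, you correctly reject the naive reduction, but your replacement via \Hausdorff leaves the crux unsolved: $\forall^*$ quantifies densely over the \emph{ambient} space $\R^n$, not over a dense subset of $A$. If $A$ has empty interior (e.g.\ a curve in $\R^2$), the natural encoding $\forall^* X:\ \neg\varphi_A(X)\lor\psi(X)$ is vacuously true because $\R^n\setminus A$ is already dense, so continuity of $a\mapsto\inf_{b\in B}\norm{a-b}$ does not help; the \enquote{encoding step} you defer is precisely the difficulty, and it is unclear it can be carried out along these lines. The paper instead rescues the direct reduction from \StrictUETR that you dismissed: \cref{thm:open_ball_counterexamples} preprocesses $\Phi$ into an equivalent $\Psi$ whose counterexample set $\counterexample{\Psi}$ is either empty or contains an open ball, so truth on a dense set coincides with truth everywhere and the leading $\forall$ may simply be replaced by $\forall^*$. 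That counterexample-expansion theorem is the missing idea in your hardness direction.
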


\subsection{Related Work}
\label{sec:related_work}

This subsection reviews previous work concerning two directions. First, we discuss the complexity of computing the Hausdorff distance for special sets. Afterwards, we investigate previous work on the complexity class $\forall\exists\R$.

\subsubsection{Computing the Hausdorff Distance}
The notion of the Hausdorff distance was introduced  by Felix Hausdorff in 1914~\cite{Hausdorff1914_MengenLehre}.
Many early works focused on the Hausdorff distance for finite point sets.
For a set of~$a$ points and a set of~$b$ points 
in any fixed dimension, the Hausdorff distance can be computed by checking all pairs, i.e., in time~$O(ab)$.
In the plane, the runtime can be improved to $O((a+b)\log (a+b))$ by using Voronoi diagrams~\cite{Alt1995_PolygonalShapes}.
In fact, this method can be extended to sets consisting of pairwise non-crossing line segments in the plane, e.g., simple polygons and polygonal chains fulfill this property.
If the polygons are additionally convex, their Hausdorff distance can even be computed in linear time~\cite{Atallah1983_HausdorffConvexPolygons}.

More generally, the Hausdorff distance can be computed in polynomial time whenever the two sets can be described by a simplicial complex of fixed dimension.
Based on the PhD thesis of Godau~\cite{Godau1999_MeasuringSimilarity}, Alt et al.~\cite[Theorem 3.3]{Alt2003_Hausdorff} show how to compute the directed Hausdorff distance between two sets in~$\R^n$ consisting of~$a$ and~$b$ $k$-dimensional simplices in time $O(ab^{k+2})$ (assuming~$n$ is constant).
Using a Las Vegas algorithm for computing the vertices of the lower envelope, similar ideas yield an approach with randomized expected time in $O(ab^{k+\varepsilon})$ for $k > 1$ and every $\varepsilon > 0$~\cite[Theorem 3.4]{Alt2003_Hausdorff}.
They additionally present algorithms with better randomized expected running times for sets of triangles in~$\R^3$ and point sets in~$\R^n$.

Given two semi-algebraic sets $A, B \subseteq \R^n$ and a threshold value $t \in \N$, the \Hausdorff decision problem can be encoded as a \UETR sentence~$\Phi$ as already done for the directed Hausdorff distance in sentence \eqref{eqn:directed_hausdorff} above.
Such a sentence can be decided in time~$\cramped{(sd)^{O(n^2)}}$, where~$d$ denotes the maximum degree of any polynomial of~$\Phi$ and~$s$ denotes the number of atoms~\cite[Theorem 14.14]{Basu2006_RealAlgebraicGeometry}.

In other contexts the two sets are allowed to undergo certain transformations (e.g. translations) such that
the Hausdorff distance is minimized~\cite{Bringmann2021_TranslatingHausdorff}.
See Alt~\cite{Alt2000_DiscreteGeometricShapes} for a survey.

\subsubsection{The (Universal) Existential Theory of the Reals}

As mentioned above, the complexity class \UER was first studied by B\"{u}rgisser and Cucker who prove complexity results for many decision problems involving circuits~\cite{Burgisser2009_ExoticQuantifiers}.
Dobbins, Kleist, Miltzow, and Rz{\k{a}}{\.{z}}ewski~\cite{Dobbins2022_AreaUniversality,Dobbins2018_AreaUniversality} consider \UER in the context of area-universality of graphs.
A plane graph is \emph{area-universal} if for every assignment of non-negative reals to the inner faces of a plane graph, there exists a straight-line drawing such that the area of each inner face equals the assigned number.
Dobbins et al.\ conjecture that the decision problem whether a given plane graph is area-universal is complete for \UER.
They support this conjecture by proving hardness for several related notions~\cite{Dobbins2018_AreaUniversality}.
Additionally, for future research directions, they present a number of candidates for potentially \UER-hard problems.
Among them, they asked whether the \Hausdorff problem is \UER-complete.
The other candidates exhibit intrinsic connections to the notions of imprecision, robustness and extendability.

We point out that the computational complexity may also become easier when asking universal-type questions.
For example, it is \ER-complete to decide whether a graph is a unit distance graph, i.e., whether it has a straight-line drawing in the plane in which all edges have the same length~\cite{Schaefer2013_Realizability}.
On the other hand, the decision problem whether for all reasonable assignments of weights to the edges, a graph has straight-line drawing in which the edge lengths correspond to the assigned weight lies in \P~\cite{Belk2007_Realizability}.
Similarly, it is \ER-complete to decide for a given planar graph for which some vertices are fixed to the boundary of a polygon (with holes) whether there exists a planar straight-line drawing inside the polygon~\cite{Lubiw2022_DrawingInPolygonialRegion}.
The case of simple polygons is open.
In contrast, there is a polynomial time algorithm to test if a given graph~$G$ and a contained cycle~$C$ admit for \textit{every} simple polygon~$P$, representing~$C$, a straight-line drawing of~$G$ inside~$P$~\cite{Ophelders2021_PolygonUniversal}.

The complement class \EUR was recently investigated by D'Costa et al.~\cite{DCosta2021_EscapeProblem}.
They show that it is $\EUR[\leq]$-complete (where $\EUR[\leq] = \co\UER[<]$) to decide for a given rational matrix $A$ and a compact semi-algebraic set $K\subseteq \R^n$, whether there exists a starting point $x\in K$ such that $x_n:=A^nx$ is contained in $K$ for all $n\in N$.
This and similar problems are generally referred to as \emph{escape problems}.
Another subclass of \EUR, called $\exists^D\!\cdot\!\forall\R$ (here the~$\exists^D$ restricts the existentially quantified variables to Boolean instead of real values), was introduced by Blanc and Hansen~\cite{Blanc2021_ESS}.
They show that computing \emph{evolutionary stable strategies} in certain multi-player games is $\exists^D\!\cdot\!\forall\R$-complete.

We understand the complexity class \UER as a natural extension of
the complexity class~\ER (pronounced as
\enquote{exists R} or \enquote{ER}), which is defined similarly to \UER, but without universally quantified variables.
The complexity class \ER has gained a lot of interest in recent years, specifically in the computational geometry community.
It gains its significance because numerous well-studied problems from diverse areas of theoretical computer science and mathematics have been shown to be complete for this class.
Famous examples from discrete geometry are the recognition of geometric structures, such as unit disk graphs~\cite{McDiarmid2013_DiskSegmentGraphs}, segment intersection graphs~\cite{Matousek2014_IntersectionGraphsER}, visibility graphs~\cite{Cardinal2017_PointVisibilityGraphs}, stretchability of pseudoline arrangements~\cite{Mnev1988_UniversalityTheorem,Shor1991_Stretchability} and order type realizability~\cite{Matousek2014_IntersectionGraphsER}.
Other \ER-complete problems are related to graph drawing~\cite{Lubiw2022_DrawingInPolygonialRegion}, Nash-Equilibria~\cite{Bilo2016_Nash,Garg2018_MultiPlayer}, geometric packing~\cite{Abrahamsen2020_Framework}, the art gallery problem~\cite{Abrahamsen2022_ArtGallery}, convex covers~\cite{Abrahamsen2022_Covering}, non-negative matrix factorization~\cite{Shitov2016_MatrixFactorizations}, polytopes~\cite{Dobbins2019_NestedPolytopes,Richter1995_Polytopes}, geometric embeddings of simplicial complexes~\cite{Abrahamsem2021_GeometricEmbeddings}, geometric linkage constructions~\cite{Abel2016_PlaneRigidity}, training neural networks~\cite{Abrahamsen2021_NeuralNetworks,Bertschinger2022_NeuralNetworks}, and continuous constraint satisfaction problems~\cite{Miltzow2022_CCSP}.
We refer the reader to the lecture notes by Matou\v{s}ek~\cite{Matousek2014_IntersectionGraphsER} and surveys by Schaefer~\cite{Schaefer2010_GeometricTopoligical} and Cardinal~\cite{Cardinal2015_Survey} for more
information on the complexity class~\ER.

\subsection{Techniques and Proof Overview}
\label{sec:proof_overview}

In this subsection, we present the general idea behind the hardness reduction for the \Hausdorff problem.
The goal is to convey the intuition and to motivate the technical intermediate steps needed.
The sketched reduction  is oversimplified and thus neither in polynomial time nor fully correct. We point out both of these issues and give first ideas on how to solve them.

Let $\Phi := \forall X \in \R^n \sepdot \exists Y \in \R^m : \varphi(X,Y)$ be a \StrictUETR instance.
We define two sets
\begin{align*}
  A &:= \{x \in \R^n \mid \exists Y \in \R^m : \varphi(x,Y)\} \quad \text{and} \\
  B &:= \R^n
\end{align*}
and ask whether~$\dH(A,B) = 0$.
If~$\Phi$ is true, then~$A = \R^n$ and we have $\dH(A,B) = 0$ because both sets are equal.
Otherwise, if~$\Phi$ is false, then there exists some~$x \in \R^n$ for which there is no~$y \in \R^m$ satisfying~$\varphi(x,y)$ and we conclude that~$A \subsetneq \R^n$.
In general, we call the set of all~$x \in \R^n$ for which there is no~$y \in \R^m$ satisfying~$\varphi(x,y)$ the \emph{counterexamples}~$\counterexample{\Phi}$ of~$\Phi$.
One might hope that~$\counterexample{\Phi} \neq \emptyset$ is enough to obtain~$\dH(A,B) > 0$.
However, this is not the case.
To this end, consider the formula~$\Psi := \forall X \in \R \sepdot \exists Y \in \R : XY > 1$, which is false.
The set~$\counterexample{\Psi} = \{0\}$ contains only a single element, so we have~$A = \R \setminus \{0\}$ and $B = \R$.
However, their Hausdorff distance also evaluates to $\dH(A,B)=0$.
We conclude that above reduction does not (yet) work, because it
may also map no-instances of \StrictUETR to yes-instances of \Hausdorff.

We solve this issue by a preprocessing step that expands the set of counter\-examples. Specifically, \cref{thm:open_ball_counterexamples} establishes a polynomial-time algorithm to transform a \problemname{Strict-UETR} instance~$\Phi$ into an equivalent formula~$\Phi'$ such that the set of counterexamples is either empty (if $\Phi'$ is true) or contains an open ball of positive radius (if $\Phi'$ is false).
The radius of the ball serves as a lower bound on the Hausdorff distance~$\dH(A,B)$.
Thus a reduction starting with~$\Phi'$ is correct.
A key tool for this step is that we can restrict the variable ranges from $\R^n$ and $\R^m$ to small and compact intervals.
\cref{fig:counterexamples_motivation} presents an example on how such a range restriction may enlarge the set of counterexamples from a single point to an interval.

\begin{figure}[htb]
    \centering
    \begin{subfigure}[t]{0.47\textwidth}
        \centering
        \includegraphics[page=1]{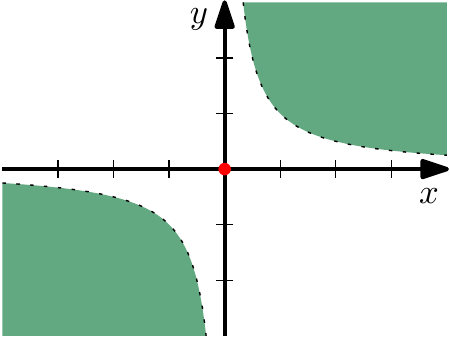}
        \caption{
            Each point~$(x,y) \in \R^2$ in the green open region satisfies~$xy > 1$.
            Only for $x = 0$ (in red) no suitable~$y \in \R$ exists.
        }
        \label{fig:counterexamples_motivation_point}
    \end{subfigure}
    \hfill
    \begin{subfigure}[t]{0.47\textwidth}
        \centering
        \includegraphics[page=2]{figures/counterexamples-motivation.pdf}
        \caption{If we restrict the range of~$Y$ to $[-1,1]$, then for no $x \in [-1,1]$ (in red) a suitable~$y$ with~$xy > 1$ exists.}
        \label{fig:counterexamples_motivation_interval}
    \end{subfigure}
    \caption{Expanding the set of counterexamples of~$\forall X \in \R \sepdot \exists Y \in \R : XY > 1$.}
    \label{fig:counterexamples_motivation}
\end{figure}

We emphasize that it is not known whether such a restriction of the variable ranges is possible for general \UETR formulas.
However, we exploit the fact that \StrictUETR formulas have a special property that we call \emph{\UStrict};
a negation- and implication-free formula is \UStrict if each atom involving universally quantified variables is a strict inequality.
Being \UStrict is a key property of many of the formulas considered throughout the paper and our proofs crucially rely on it.

A further challenge is given by the definition of the sets~$A$ and~$B$.
While the description complexity of~$B$ depends only on~$n$,
the definition of~$A$ contains an existential quantifier.
This is troublesome because our definition of the \Hausdorff problem requires quantifier-free formulas as its input, and in general there is no equivalent quantifier-free formula of polynomial length which describes the set~$A$~\cite{Davenport1988_QuantifierElimination}.
We overcome this issue by taking the existentially quantified variables as additional dimensions into account.
We scale them to a range much smaller than the range of the universally quantified variables, so that their influence on the Hausdorff distance becomes negligible.
Therefore instead of the above,  we work (in \cref{sec:hausdorff_hardness}) with sets similar to
\begin{align*}
    A &:= \{(x,y) \mid
        x \in [-C,C]^n, y \in [-1,1]^m, \varphi(x,y)
    \} \quad \text{and} \\
    B &:= [-C,C]^n \times \{0\}^m
\end{align*}
for some value~$C$ that is doubly exponentially large in~$\abs{\Phi}$.
This definition of~$A$ and~$B$ introduces the new issue that even if~$\Phi$ is true, the Hausdorff distance~$\dH(A,B)$ might be strictly positive.
However, we manage to identify a threshold~$t \in \N$, such that~$\dH(A,B) \leq t$ if and only if~$\Phi$ is true.
This completes the proof of \UER[<]-hardness.
In \cref{sec:erd_hausdorff} we also establish \UER[<]-hardness for~$t = 0$.

\subsection{Organization}
The remainder of the paper is organized as follows.
We introduce preliminaries concerning the first-order theory of the reals in \cref{sec:prelim} and essential tools from real algebraic geometry in \cref{sec:toolbox}.
\Cref{sec:counterexamples} presents the result for expanding the set of counterexamples for \UStrict formulas.
Finally, \cref{sec:hausdorff_hardness} contains the \UER[<]-hardness proof, followed by the \UER[<]-membership in \cref{sec:hausdorff_membership}.
In \cref{sec:exotic_quantifiers}, we apply our findings to so-called exotic quantifiers and relate them to \UER[<].
We conclude with a list of interesting open problems in \cref{sec:open_problems}.

\section{Preliminaries}
\label{sec:prelim}

In this section, we first introduce the necessary notation and definitions.
Afterwards, we consider the relation between all complexity classes that are relevant for the paper.

\subsection{First-Order Theory of the Reals}
\label{sec:fotr}

We mostly introduce standard terminology following the book by Cox, Little and O'Shea~\cite{Cox2006_AlgebraicGeometry}.
An \emph{atom} is an expression of the form~$P\,\circ\,0$ for some  polynomial $P \in \Z[X_1, \ldots, X_n]$ and~$\circ \in \{<, \leq, =, \neq, \geq, >\}$.
We always assume that a polynomial is written as a sum of monomials.
Its \emph{total degree} is the maximum number of occurrences of variables involved in any monomial.
For example $P(X,Y,Z) = X^2Y^2 + XYZ$ has total degree four.
A variable is called \emph{free} if it is not bound by a quantifier.
A \emph{formula} is either
\begin{itemize}
    \item an atom, or
    \item if $\varphi_1,\varphi_2$ are formulas, then their conjunction $\varphi_1 \land \varphi_2$, their disjunction $\varphi_1 \lor \varphi_2$, and the negation $\lnot \varphi_1$ are formulas, or
    \item  if~$\varphi(X)$ is a formula with a free variable~$X$, then $\exists X \in \R : \varphi(X)$ and $\forall X \in \R : \varphi(X)$ are formulas in which~$X$ is bound.
\end{itemize}
Note that the implication $\varphi_1 \implies \varphi_2$ can be formulated as $\lnot \varphi_1 \lor \varphi_2$.

In order to determine the \emph{length} $\abs{\varphi}$ of a formula $\varphi$, we count~$1$ for each fixed symbol, we encode integer coefficients in binary, exponents in unary, and we count $\log k$ (to encode the variable name) for each occurrence of a variable, where~$k$ denotes the number of variables.
We denote by \QFF the family of quantifier free formulas.
Furthermore, \QFF[<] and \QFF[\leq] are the families in \QFF that do not contain any negations and have only atoms involving~$<$ and~$\leq$, respectively.

A \emph{sentence} is a formula without free variables and thus either equivalent to true or to false.
As a convention, we use upper case Greek letters for sentences and use lower case Greek letter for formulas.
We write $\Psi \equiv \Psi'$ if the two sentences have the same truth value.
The \emph{first-order theory of the reals} is the family of all true sentences.
If all quantifiers of a formula appear at its beginning, we say it is in \emph{prenex normal form}.
We usually write \emph{blocks of variables}, i.e., $\forall X \in \R^n :  \varphi(X)$.
Here~$X$ is a shorthand notation for $X = (X_1, \ldots, X_n)$.
We say~$n$ is the length of~$X$ in this case.

We use upper case Latin letters for variables in formulas and lower case Latin letters for specific values, i.e., symbol~$X$ denotes a vector of variables, while~$x \in \R^n$ is a point.
We sometimes write~$\varphi(X,Y)$ to emphasize that~$X$ and~$Y$ are (blocks of) free variables in~$\varphi$.
Often we omit the free variables in~$\varphi$ though.

Consider a formula $\Phi \dequiv \forall X \in \R^n \sepdot \exists Y \in \R^m : \varphi(X,Y)$, where~$\varphi \in \QFF$.
Each atom of~$\varphi$ is of the form $P \circ 0$, where $\circ \in \{<, \leq, =, \neq, \geq, >\}$ and $P \in \Z[X,Y]$ is a polynomial in the variables~$X$ and~$Y$.
Without loss of generality we can restrict our attention to the case of $\circ \in \{<, \leq\}$, because the following transformations show that the other relations can be expressed by~$<$ or~$\leq$ such that the length of the formula is at most doubled:
\begin{align*}
    P > 0 & \wequiv -P < 0 &
    P = 0 & \wequiv (P \leq 0) \land (-P \leq 0) \\
    P \geq 0 & \wequiv -P \leq 0 &
    P \neq 0 & \wequiv  (P < 0) \lor (-P < 0)
\end{align*}
Furthermore, we can assume that~$\varphi$ contains only the logical connectives~$\land$ and~$\lor$, because De~Morgan's law allows to push all negations (and therefore also implications) down to the atoms transforming~$\varphi$ into \emph{negation normal form}.
This justifies, why \QFF[\circ] for $\circ \in \{<, \leq\}$ only contains negation-free formulas by our definition.
With the following equivalences we obtain a formula without negations:
\begin{align*}
    \neg(P < 0) & \wequiv -P \leq 0 &
    \neg(P \leq 0) & \wequiv -P < 0
\end{align*}

Given a formula~$\varphi(X)$ with~$n$ free variables~$X$, the set $S(\varphi) = \{x \in \R^n \mid \varphi(x)\}$ is semi-algebraic.
The \emph{complexity} of a semi-algebraic set~$S$ is the length of a shortest quantifier-free formula~$\varphi$, such that $S = S(\varphi)$ (recall that integer coefficients are encoded in binary).
We write $\varphi \equiv \varphi'$ if $S(\varphi) = S(\varphi')$.

\subsection{Complexity Classes}

Several complexity classes appear in this paper.
Here we discuss their relations among one another.
We make use of a helpful \lcnamecref{lem:tseitin} from the literature.
It allows us to replace a quantifier-free formula~$\varphi$ by a simpler one (at the cost of adding additional variables).
Here and throughout the rest of the paper the notation $x \leq \poly(y_1, \ldots, y_k)$ means that there is a polynomial $p \in \Z[Y_1, \ldots, Y_k]$ such that $x \leq p(y_1, \ldots, y_k)$.

\begin{lemma}[{\cite[Lemma~$3.2$]{Schaefer2017_FixedPointsNash}}]
    \label{lem:tseitin}
    Let~$\varphi(X) \in \QFF$ be a formula with~$n$ free variables~$X$.
    Then we can construct either of the following in polynomial time:
    \begin{enumerate}[(i)]
        \item\label{itm:tseitin_quadratic_equations}
        Integers $\ell, m \leq \poly(\abs{\varphi})$ and for $i \in \{1, \ldots, m\}$ a polynomial $F_i : \R^{n + \ell} \to \R$ of degree at most~$2$ such that
        \[
            \{x \in \R^n \mid \varphi(x)\} =
            \{x \in \R^n \mid
                \exists Y \in \R^\ell :
                \bigwedge_{i = 1}^m F_i(x,Y) = 0
            \}
            \text{.}
        \]
        \item\label{itm:tseitin_single_equation}
        An integer~$k \leq poly(\abs{\varphi})$ and a polynomial $F : \R^{n + k} \to \R$ of degree at most~$4$ such that
        \[
            \{x \in \R^n \mid \varphi(x)\} =
            \{x \in \R^n \mid \exists Y \in \R^k : F(x,Y) = 0\}
            \text{.}
        \]
    \end{enumerate}
\end{lemma}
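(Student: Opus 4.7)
My plan is to build the desired systems by induction on the parse tree of $\varphi$, using Tseitin-style auxiliary variables together with slack and selector gadgets so that every step produces only degree-$2$ equations. First I would put $\varphi$ into the negation-free form described in \cref{sec:fotr}, so every atom has the form $P \circ 0$ with $\circ \in \{\leq, <\}$ and the only connectives are $\land$ and $\lor$.

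The linearization step deals with the polynomial $P$ inside each atom. Writing $P$ as a sum of monomials, I turn each monomial $X_{i_1} \cdots X_{i_d}$ into a chain of fresh variables via the degree-$2$ equations $M_1 - X_{i_1} X_{i_2} = 0,\; M_2 - M_1 X_{i_3} = 0,\; \ldots,\; M_{d-1} - M_{d-2} X_{i_d} = 0$. After replacing each monomial by its representative $M_{d-1}$, the polynomial $P$ becomes a linear expression $L$ in the original and auxiliary variables. An atom $P \leq 0$ is then encoded as $L + S^2 = 0$ with a fresh slack $S$, and an atom $P < 0$ as the conjunction $L + S^2 = 0 \land S \cdot T - 1 = 0$ with fresh $S, T$: the second equation forces $S \neq 0$, and the first then yields $L = -S^2 < 0$. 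Every such equation has degree at most $2$.

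The Boolean structure is handled inductively. Conjunctions are for free: I simply take the disjoint union of the two equation systems. For a disjunction $\psi_1 \lor \psi_2$, I introduce a selector variable $W$ constrained by the degree-$2$ equation $W^2 - W = 0$, so that $W \in \{0,1\}$. For every equation $F = 0$ previously produced for $\psi_1$ I replace it by $F - W \cdot C_F = 0$ using a fresh $C_F$, and for every equation $G = 0$ previously produced for $\psi_2$ I replace it by $G - (1-W) \cdot D_G = 0$ using a fresh $D_G$. When $W = 0$ the first group forces $\psi_1$ to hold while the second group is trivially satisfiable by choosing $D_G := G$, and symmetrically when $W = 1$; hence the combined system encodes $\psi_1 \lor \psi_2$, with every equation still of degree at most~$2$. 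Induction on the parse tree of $\varphi$ delivers~\ref{itm:tseitin_quadratic_equations}, and since every node contributes only a bounded number of fresh variables and a bounded increment to existing equations, both the number of variables and the total size stay polynomial in $\abs{\varphi}$.

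Part~\ref{itm:tseitin_single_equation} follows by collapsing the system $F_1 = 0, \ldots, F_m = 0$ from part~\ref{itm:tseitin_quadratic_equations} into the single equation $F := \sum_{i=1}^{m} F_i^2 = 0$, which over $\R$ is equivalent to $F_i = 0$ for all $i$ and has degree at most $4$. The main obstacle is the disjunction gadget: the naive encoding of $F = 0 \lor G = 0$ as $F \cdot G = 0$ would double the degree at every nested $\lor$-gate and therefore blow past the degree-$2$ budget. The key idea is to \emph{deactivate} equations by \emph{subtracting} a bilinear term $W \cdot C_F$ rather than multiplying them together, and to prevent spurious solutions by pinning $W$ to $\{0,1\}$ through the single extra equation $W^2 - W = 0$.
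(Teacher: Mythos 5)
Your proposal is correct, but note that the paper itself contains no proof of this statement: it is imported verbatim from Schaefer and \v{S}tefankovi\v{c}~\cite{Schaefer2017_FixedPointsNash}, so what you wrote is a self-contained reproof rather than a reconstruction of an argument in the paper. Your route also differs from the usual proof of the cited lemma, which arithmetizes the Boolean structure via the identities $F=0\land G=0 \iff F^2+G^2=0$ and $F=0\lor G=0 \iff F\cdot G=0$ (after turning atoms into equations with slack/inverse variables, much as you do), and only afterwards restores degree at most $2$ by introducing a fresh variable for every product in the resulting expression --- exactly the monomial-chain trick you already use for the atoms. So the product encoding of $\lor$ does not in fact ``blow past the degree-$2$ budget'': the degree growth is repaired by that second re-linearization pass while the size stays polynomial. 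What your selector gadget ($W^2-W=0$, $F-WC_F=0$, $G-(1-W)D_G=0$) buys is that every equation stays quadratic throughout, avoiding the second pass, at the cost of one fresh deactivation variable per equation per enclosing disjunction node; the gadget is sound precisely because each $C_F$, $D_G$ occurs in exactly one equation, so a deactivated equation is satisfiable for every $x$, while $W\in\{0,1\}$ forces one branch to be genuinely enforced. Two minor points: you should state explicitly that the auxiliary variables of the two subsystems in a conjunction (and the $C_F$, $D_G$) are pairwise distinct, since the argument breaks if they are shared; and the bookkeeping claim should read ``polynomially many'' rather than ``a bounded number'' of fresh variables per node, because a disjunction node adds one variable per equation in its scope --- the totals are still polynomial in $\abs{\varphi}$, so part~(i) goes through, and part~(ii) via $F:=\sum_i F_i^2$ of degree at most $4$ is exactly the standard step.
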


For any fixed $\circ \in \{<, \leq\}$, we denote by~$\UER[\circ]$ the subset of~$\UER$ containing all decision problems that polynomial-time many-one reduce to a \UETR-instance whose quantifier-free parts are contained in \QFF[\circ].
Similarly, for $\circ \in \{<, \leq\}$, we denote the corresponding subsets of \ER and \UR  by~$\exists_{\circ}\R$ and $\forall_{\circ}\R$, respectively.
The following \lcnamecref{lem:UER_inclusions} summarizes what we know about the relation between the complexity classes \UER[<], \UER[\leq] and \UER as well as their relation to the well-studied classes \NP, \co\NP, \ER, \UR, and \PSPACE.

\begin{lemma}
    \label{lem:UER_inclusions}
    The following inclusions hold:
    \medskip
    \begin{center}
        \includegraphics{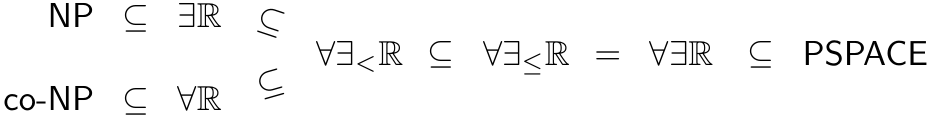}
    \end{center}
\end{lemma}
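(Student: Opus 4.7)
The plan is to verify each inclusion in the Hasse-like diagram via a polynomial-time many-one reduction between the defining decision problems. The inclusions split into three conceptually different groups: Boolean-to-real embeddings at the bottom, syntactic or atom-rewriting containments within the $\forall\exists\R$ family, and the upper bound against $\PSPACE$ at the top.

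For the Boolean embeddings $\NP \subseteq \ER$ and $\co\NP \subseteq \UR$, I would use the textbook reduction from $3$-SAT to \ETR: replace each Boolean $x_i$ by a real $X_i$ with the constraint $X_i(X_i-1) = 0$ and rewrite each clause arithmetically. Prepending or appending a trivial dummy quantifier block then yields $\ER \subseteq \UER$ and $\UR \subseteq \UER$. For the upper bound $\UER \subseteq \PSPACE$, I would invoke the classical decision procedures for the first-order theory of the reals, due to Canny, Renegar, and Basu--Pollack--Roy (see for instance \cite[Chapter~14]{Basu2006_RealAlgebraicGeometry}), which decide sentences with $O(1)$ quantifier alternations in polynomial space.

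For the containments within the $\forall\exists\R$ family I would use atom-rewriting. The passage $\UER[\leq] \subseteq \UER$ is syntactic, and the passage $\UER[<] \subseteq \UER[\leq]$ uses the identity $P < 0 \equiv \exists Z : P Z^2 + 1 \leq 0 \land -P Z^2 - 1 \leq 0$ (the pair forces $P Z^2 = -1$, hence $P < 0$), with the fresh $Z$ absorbed into the existential block. Placing $\ER$ and $\UR$ into $\UER[<]$ requires more care: after adding a dummy quantifier block of the opposite type, equations or non-strict atoms inside the quantifier-free part must be converted to strict inequalities, exploiting the new block to provide the needed slack.

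The main obstacle I foresee is precisely $\ER \subseteq \UER[<]$ (and its dual). The difficulty is that an existential formula with an equation $F(Y)=0$ cuts out a measure-zero closed set, which cannot be described by the open set carved out by strict inequalities alone from a pure $\exists$-block. Overcoming this seems to genuinely require the added $\forall$-block. My attempt would be to first apply \cref{lem:tseitin}(ii) to collapse the instance to a single equation $F(Y) = 0$, then restrict $Y$ to a compact ball of singly-exponentially bounded radius $R$ using separation bounds for polynomial zeros, and finally encode the equation as $\forall \eps > 0 \sepdot \exists Y : \norm{Y}^2 < R^2 + 1 \land F(Y)^2 < \eps^2$. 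Compactness ensures that $\inf F^2 = 0$ on the ball is actually attained, so the encoded sentence is true iff $F$ has a real zero. Making this polynomial-time and syntactically \UStrict is the technical crux of the lemma.
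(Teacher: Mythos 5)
Most of your diagram is handled exactly as in the paper: $\NP \subseteq \ER$ (Shor's reduction), $\co\NP \subseteq \UR$ by complementation, the definitional inclusions into \UER, and Canny's algorithm for $\UER \subseteq \PSPACE$. Two remarks on the remaining arrows. First, the paper additionally proves $\UER \subseteq \UER[\leq]$ (so the figure records $\UER = \UER[\leq]$), by applying \cref{lem:tseitin} to the quantifier-free part and writing $F = 0$ as $F \leq 0 \land -F \leq 0$; your proposal only yields $\UER[<] \subseteq \UER[\leq] \subseteq \UER$, so this direction is missing (your own use of \cref{lem:tseitin} would give it immediately). Second, and more importantly, for the crucial inclusions $\ER \subseteq \UER[<]$ and $\UR \subseteq \UER[<]$ the paper proves nothing from scratch: it cites $\exists_<\R = \exists\R$ and $\ER = \exists_\leq\R$ from Schaefer and \v{S}tefankovi\v{c}, and gets $\UR = \forall_<\R \subseteq \UER[<]$ by complementation. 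Your plan to reprove $\ER \subseteq \UER[<]$ via the Ball Theorem plus compactness (trading the quantitative separation bound for a universal $\eps$-block) is a genuinely different and in principle viable route, but as sketched it has concrete gaps.

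The sentence $\forall \eps > 0 \sepdot \exists Y : \norm{Y}^2 < R^2 + 1 \land F(Y)^2 < \eps^2$ is not a \StrictUETR instance: there the universal quantifier ranges over all of $\R$, the guard \enquote{$\eps \leq 0 \lor \cdots$} is a non-strict atom, and no negation-free combination of strict atoms in $\eps$ can define the closed set $\{\eps \leq 0\}$; without the guard your sentence is false at $\eps = 0$ (the body is then unsatisfiable) even when $F$ has a zero. This is fixable, e.g.\ by substituting $\eps^2 = 1/(1+T^2)$ for a universally quantified $T \in \R$ and clearing denominators, i.e.\ using the atom $F(Y)^2(1+T^2) < 1$, but it has to be addressed. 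Similarly, the radius guaranteed by \cref{thm:ball_theorem} is $2^{L^{8n}}$, so $R^2+1$ has exponentially many bits and cannot appear as a coefficient; it must be generated by the squaring gadget of \cref{lem:double_exponential_small_value}, whose atoms are equations and therefore have to be placed negated (as $\neq$, rewritten with $<$) under further universally quantified variables, much as in the proof of \cref{lem:special_uetr_to_strict_uetr}. Note also that your stated target, \UStrict, is weaker than what \UER[<]-membership requires, namely a \StrictUETR instance in which every atom is strict. Finally, $\UR \subseteq \UER[<]$ is not literally the \enquote{dual} of your construction (complementing a \StrictUETR sentence leaves the class); either cite $\ER = \exists_\leq\R$ and complement as the paper does, or argue directly: apply \cref{lem:tseitin} to $\neg\varphi$ and observe that $\forall X : \varphi(X)$ is equivalent to the purely universal strict sentence $\forall X, U : F(X,U)^2 > 0$.
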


\begin{proof}
    The inclusion $\NP \subseteq \ER$ was first presented by Shor~\cite{Shor1991_Stretchability}.
    This directly implies $\co\NP \subseteq \UR$ (because $\UR = \co\ER)$.
    For $\circ \in \{<, \leq\}$ the inclusion $\UER[\circ] \subseteq \UER$ follows by definition because the left hand side is just a special case of the right hand side.
    Using that $\exists_<\R = \exists\R$~\cite[Theorem~$4.1$]{Schaefer2017_FixedPointsNash}, the same argument can be used for $\ER \subseteq \UER[<]$.
    Canny first established $\UER \subseteq \PSPACE$ in his seminal paper~\cite{Canny1988_PSPACE}.

    To show that $\UER \subseteq \UER[\leq]$, consider a \UETR instance
    \[
        \Phi \wdequiv
        \forall X \in \R^n \sepdot
        \exists Y \in \R^m :
        \varphi(X,Y)
        \text{.}
    \]
    We apply \cref{lem:tseitin} to~$\varphi$ and obtain in polynomial time an integer~$k \leq \poly(\abs{\varphi})$ and a polynomial $F : \R^{n+m+k} \to \R$, such that
    \[
        \Psi \wdequiv
        \forall X \in \R^n \sepdot
        \exists Y \in \R^{m+k} :
        F(X,Y) \leq 0 \land -F(X,Y) \leq 0
        \text{,}
    \]
    is equivalent to~$\Phi$.
    Note that both atoms use $\leq$.

    Lastly, let us consider the inclusion $\UR \subseteq \UER[<]$.
    Note that~$\UR = \forall_{<}\R$ (because two complexity classes are equal whenever their complement classes are equal and $\ER = \exists_{\leq}\R$ is known~\cite{Schaefer2017_FixedPointsNash}).
    Now~$\forall_{<}\R \subseteq \UER[<]$ again follows by definition.
\end{proof}

\section{Toolbox for Semi-Algebraic Sets}
\label{sec:toolbox}

In this section, we first introduce tools from real algebraic geometry, and then show how the ranges of quantifiers can be bounded.

\subsection{Tools Borrowed from Real Algebraic Geometry}

We review two sophisticated results from algebraic geometry, namely singly exponential quantifier elimination and the so called Ball Theorem.
While quantifier elimination provides equivalent quantifier-free formulas of bounded length, the Ball Theorem guarantees that every non-empty semi-algebraic set contains an element not too far from the origin.

We start with a result on quantifier-elimination which originates from a series of articles by Renegar~\cite{Renegar1992_QuantifierElimination1,Renegar1992_QuantifierElimination2,Renegar1992_QuantifierElimination3}.
Let us stress that the time complexity of this algorithm is singly exponential and not doubly exponential for every fixed number of quantifier alternations. 

\begin{theorem}[{\cite[Theorem~$14.16$]{Basu2006_RealAlgebraicGeometry}}]
    \label{thm:single_exponential_quantifier_elimination}
    Let~$X_1, \ldots, X_k, Y$ be blocks of real variables where~$X_i$ has length~$n_i$, $Y$ has length~$m$, formula $\varphi(X_1, \ldots, X_k, Y)\in \QFF$ has $s$ atoms and $Q_i\in\{\exists,\forall\}$ is a quantifier for all $i=1,\dots,k$.
    Further, let~$d$ be the maximum total degree of any polynomial of~$\varphi(X_1, \ldots, X_k, Y)$.
    Then for any formula
    \[
        \Phi(Y) \wdequiv
        Q_1 X_1 \in \R^{n_1} \ldots
        Q_k X_k \in \R^{n_k} :
        \varphi(X_1, \ldots, X_k, Y)
    \]
    there is an  equivalent quantifier-free formula of size at most
    \[
        s^{(n_1 + 1) \cdots (n_k + 1) (m + 1)}
        d^{O(n_1) \cdots  O(n_k) O(m)}
        \text{.}
    \]
\end{theorem}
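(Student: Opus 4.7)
The plan is to follow the critical point method of Renegar, which is the standard route to singly exponential quantifier elimination and which delivers precisely the degree/size bounds stated. At the highest level, the proof proceeds by induction on the number~$k$ of quantifier blocks, reducing the general case to the elimination of a single block (existential, say, with universal blocks handled by negation and De~Morgan). The invariant maintained through the induction is that after eliminating the inner~$j$ blocks, we still have a quantifier-free formula whose number of atoms and whose degree of polynomials obey the claimed recurrence, so that multiplying through the blocks produces the stated product formula.

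The heart of the argument is the single-block case: given $\varphi(X,Y) \in \QFF$ with $s$ atoms of degree at most~$d$, produce a quantifier-free $\psi(Y)$ equivalent to $\exists X \in \R^n : \varphi(X,Y)$ whose size is bounded by $s^{n+1} d^{O(n)}$. The first step is to parametrize the problem by \emph{sign conditions}: the existential formula is true at $y$ if and only if at least one realizable sign condition of the atom-polynomials at $y$ (viewed as polynomials in~$X$) satisfies the Boolean structure of~$\varphi$. By the Oleinik--Petrovskii--Milnor--Thom bound, the number of realizable sign conditions is at most $(sd)^{O(n)}$, which is where the dependence on the inner-block dimension enters. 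The second step is to exhibit, for each candidate sign condition, a \emph{witness} $x \in \R^n$ via the critical point method: perturb the polynomial system and take critical points of a suitable distance function on each real variety, so that every non-empty semi-algebraic set meets the witness set. The witnesses can be described by polynomial equations and inequalities in the parameters $Y$ using Thom encoding of real algebraic numbers, with degrees bounded by $d^{O(n)}$ and with coefficients polynomial in~$s$.

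Assembling these witnesses into a quantifier-free formula $\psi(Y)$ is the third step: for every subset of sign conditions consistent with $\varphi$, write a disjunct expressing the existence of the corresponding Thom-encoded witness; each disjunct has size $s^{O(1)} d^{O(n)}$, and summing over the $s^{n+1}$-many consistent subsets yields the bound $s^{n+1} d^{O(n)}$. Carrying this elimination through the $k$ quantifier blocks multiplies exponents in the arities, giving the compound bound $s^{(n_1+1)\cdots(n_k+1)(m+1)} d^{O(n_1)\cdots O(n_k) O(m)}$ on the final formula in the remaining free variables $Y$.

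The hard part, and the reason the original proof requires a monograph-length treatment, is the \emph{uniformity in the parameter block}~$Y$: the critical-point witnesses must be expressible as semi-algebraic functions of $Y$ whose description does not blow up when specialized to bad parameter values (where the variety degenerates). This is handled by infinitesimal deformations (replacing $0$ by a symbolic infinitesimal $\varepsilon$, then taking limits via Puiseux series) and by uniform Thom encodings parameterized by $Y$. Getting the accounting exactly right so that each block contributes a factor of $(n_i+1)$ in the exponent of $s$ and an $O(n_i)$ in the exponent of $d$, rather than something larger, is delicate and requires the refined complexity analysis of the critical point construction rather than a naive count; this is precisely where a direct proof would demand substantial technical machinery that we would import wholesale from~\cite{Basu2006_RealAlgebraicGeometry}.
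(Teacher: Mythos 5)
This statement is imported by the paper verbatim from the cited monograph (Theorem~14.16 of Basu--Pollack--Roy); the paper gives no proof of its own, and your outline is a fair summary of exactly that source's argument (block elimination via the critical point method, with the per-block $(n_i+1)$ and $O(n_i)$ accounting deferred to the book). Since you explicitly import the delicate uniform-in-$Y$ bookkeeping from the same reference the paper cites, your treatment is in essence the same as the paper's, and the minor imprecision in your assembly step (the ``$s^{n+1}$-many consistent subsets'' phrasing, where the bound really arises from the number of polynomials output by block elimination rather than from counting subsets of sign conditions) is absorbed by that reliance.
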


Throughout the paper we use the following \lcnamecref{cor:single_exponential_quantifier_elimination} of \cref{thm:single_exponential_quantifier_elimination}.
It is weaker but easier to work with.

\begin{corollary}
    \label{cor:single_exponential_quantifier_elimination}
    Let~$\Phi(Y)$ be as in \cref{thm:single_exponential_quantifier_elimination} of length $L = \abs{\varphi(X_1, \ldots, X_k, Y)}$.
    Then for some constant~$\alpha \in \R$ independent of $\Phi$, there exists an equivalent quantifier-free formula of size at most
    \[
        L^{\alpha^{k+1} \cdot n_1\cdot \ldots \cdot n_k\cdot  m}
        \text{.}
    \]
\end{corollary}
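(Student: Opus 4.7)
The plan is to derive this corollary as a straightforward bookkeeping exercise on top of \cref{thm:single_exponential_quantifier_elimination}. The strategy is to bound both $s$ (the number of atoms) and $d$ (the maximum total degree) by $L$, and then absorb all remaining factors in the exponent into a single constant $\alpha$ raised to the power $k+1$.

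First I would observe that since $\varphi(X_1,\ldots,X_k,Y)$ has length $L$, the number of atoms satisfies $s \leq L$ (each atom contributes at least one symbol). Similarly, because exponents are encoded in unary in our length measure, the total degree of every polynomial in $\varphi$ is bounded by $L$, giving $d \leq L$. Plugging into the bound of \cref{thm:single_exponential_quantifier_elimination}, the size of the quantifier-free equivalent is at most
\[
    L^{(n_1+1)\cdots(n_k+1)(m+1)} \cdot L^{O(n_1)\cdots O(n_k) O(m)}
    =
    L^{(n_1+1)\cdots(n_k+1)(m+1) + O(n_1)\cdots O(n_k) O(m)}.
\]

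Next I would clean up the exponent. Assuming $n_1,\ldots,n_k,m \geq 1$ (else the corresponding block may be dropped), we have $n_i+1 \leq 2n_i$ and $m+1 \leq 2m$, so the first summand in the exponent is at most $2^{k+1} \, n_1 \cdots n_k \, m$. The constants hidden in the $O(\cdot)$ notation of the second summand are absolute (independent of $\Phi$), so there is a fixed constant $c$ with $O(n_1)\cdots O(n_k) O(m) \leq c^{k+1} \, n_1 \cdots n_k \, m$. Summing the two exponents and choosing $\alpha$ large enough that $2^{k+1} + c^{k+1} \leq \alpha^{k+1}$ for all $k \geq 1$ (for instance $\alpha := 2c + 2$ works), we obtain the advertised exponent $\alpha^{k+1} \cdot n_1 \cdots n_k \cdot m$.

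There is no genuine obstacle here; the only thing to be slightly careful about is that the two exponents multiply the same base $L$ so they add (rather than multiply), and that the constant $\alpha$ must be chosen uniformly in $k$ — which is why we need a constant of the form $\alpha^{k+1}$ rather than just $\alpha$. The edge cases where some $n_i$ or $m$ equals zero are handled by simply removing the corresponding (vacuous) quantifier block from $\Phi$ before applying \cref{thm:single_exponential_quantifier_elimination}, which does not change the equivalence class of $\Phi$.
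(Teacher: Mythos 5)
Your proposal is correct and follows essentially the same route as the paper: bound $s, d \leq L$, plug into \cref{thm:single_exponential_quantifier_elimination}, replace $n_i + 1$ by $2n_i$ and $m+1$ by $2m$, and absorb the sum of the two exponents into a single constant of the form $\alpha^{k+1}$ (your explicit choice $\alpha := 2c+2$ and the remark about vacuous blocks just make the constant-chasing slightly more careful than the paper's version). No gaps.
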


\begin{proof}
  Let~$\Psi$ be the quantifier-free formula equivalent to~$\Phi$ obtained by \cref{thm:single_exponential_quantifier_elimination}.
  By observing that~$d,s \leq L$, we get
  \begin{align*}
    \abs{\Psi(Y)}
    &\leq
    L^{(n_1 + 1) \cdots (n_k + 1) (m + 1)} \cdot
    L^{O(n_1) \cdots O(n_k)  O(m)} \\
    &\leq
    L^{2n_1 \cdots 2n_k \cdot 2m} \cdot
    L^{O(n_1) \cdots O(n_k)  O(m)} \\
    &\leq
    L^{2n_1 \cdots 2n_k \cdot 2m + O(n_1) \cdots O(n_k) O(m)} \\
    &\leq
    L^{\alpha_1 n_1 \cdots \alpha_k n_k \cdot \alpha_m m}
    & & (\text{for $\alpha_1, \ldots, \alpha_k, \alpha_m \in \R$}) \\
    &\leq
    L^{\alpha^{k+1} n_1 \cdots n_k m}
    \text{,}
  \end{align*}
  where~$\alpha := \max\{\alpha_1, \ldots, \alpha_k, \alpha_m\}$.
\end{proof}

The Ball Theorem was first discovered by Vorob’ev~\cite{Vorobev1986_BallTheorem} and Grigor'ev and Vorobjov~\cite{Grigorev1988_BallTheorem}.
(Vorob’ev and Vorobjov are two different transcriptions of the same name from the Cyrillic to the Latin alphabet.)
Explicit bounds on the distance are given by Basu and Roy~\cite{Basu2010_BallTheorem}.
We use a formulation from Schaefer and \v{S}tefankovi\v{c}~\cite{Schaefer2017_FixedPointsNash}.

\begin{theorem}[Ball Theorem {\cite[Corollary~$3.1$]{Schaefer2017_FixedPointsNash}}]
  \label{thm:ball_theorem}
  Every non-empty semi-algebraic set in~$\R^n$ of complexity at most $L \geq 4$ contains a point of distance at most~$\cramped{2^{L^{8n}}}$ from the origin.
\end{theorem}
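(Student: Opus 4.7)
The plan is to reduce the Ball Theorem to a root bound on a univariate polynomial obtained by quantifier elimination applied to a formula expressing \enquote{there is a point of $S$ whose squared distance from the origin is at most $R$}. Let $\varphi$ be a quantifier-free formula of size at most $L$ defining the non-empty semi-algebraic set $S \subseteq \R^n$, and consider
\[
    \psi(R) \dequiv \exists X \in \R^n : \varphi(X) \land X_1^2 + \cdots + X_n^2 \leq R,
\]
which has a single existential block of length $n$ and one free variable $R$; its total length is $L + O(n)$.

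By \cref{cor:single_exponential_quantifier_elimination} applied with $k = 1$, $n_1 = n$, $m = 1$, there is an equivalent quantifier-free formula $\psi'(R)$ in one free variable whose size is at most $L^{\alpha^2 n}$ for the absolute constant $\alpha$ from that corollary. The truth set $T \dequiv \{r \in \R : \psi'(r)\}$ equals $\{r \in \R : \exists x \in S, \norm{x}^2 \leq r\}$; it is non-empty and upward-closed beyond $\norm{x_0}^2$ for any fixed $x_0 \in S$, and its infimum $r^* \geq 0$ equals $\inf_{x \in S} \norm{x}^2$.

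The key step is to bound $r^*$. Since $T \subseteq \R$ is semi-algebraic, it is a finite union of intervals whose endpoints lie among the real roots of the polynomials occurring in $\psi'$; hence either $r^* = 0$ or $r^*$ is one such root. Each of these polynomials has degree at most $\abs{\psi'} \leq L^{\alpha^2 n}$ and integer coefficients of bit-length at most $\abs{\psi'}$, so Cauchy's classical root bound yields $r^* \leq 2^{L^{O(n)}}$. For every $\eps > 0$ there is $x \in S$ with $\norm{x}^2 \leq r^* + \eps$, and therefore $\norm{x} \leq 2^{L^{O(n)}}$, which has the shape of the claimed bound.

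The main obstacle will be the careful constant-tracking needed to upgrade the exponent from an unspecified $O(n)$ to the stated $8n$. This requires invoking \cref{thm:single_exponential_quantifier_elimination} directly, rather than the coarser \cref{cor:single_exponential_quantifier_elimination}, and simultaneously recording the degrees and the coefficient bit-lengths of every polynomial produced by Renegar's elimination procedure, then feeding both into Cauchy's bound; the assumption $L \geq 4$ is used to absorb additive and sub-polynomial terms in the exponent. A secondary subtlety is that $r^*$ need not be attained in $S$ itself, but since the statement only demands a point of $S$ within the stated distance, the $\eps$-slack above is harmless and can be absorbed into the constants.
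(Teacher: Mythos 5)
The paper does not prove \cref{thm:ball_theorem} at all: it imports the statement verbatim from Schaefer and \v{S}tefankovi\v{c} (Corollary~3.1), whose explicit constants ultimately come from the Basu--Roy bounds on the radius of a ball meeting every connected component of a semi-algebraic set (via critical-point methods), building on Vorob'ev and Grigor'ev--Vorobjov. So there is no in-paper proof to compare against, and your proposal has to stand on its own merits.

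As a derivation of a bound of the shape $2^{L^{O(n)}}$, your outline is sound and is essentially the folklore route: eliminate the quantifier block in $\exists X \in \R^n : \varphi(X) \land \norm{X}^2 \leq R$, observe that the resulting truth set in $R$ is non-empty, upward-closed and contained in $[0,\infty)$, so its infimum $r^*$ is either $0$ or a root of one of the produced univariate polynomials, and then apply Cauchy's root bound using the fact that the paper's notion of formula size encodes integer coefficients in binary (hence bounds both degree and coefficient bit-length); the final $\eps$-slack because $r^*$ need not be attained is indeed harmless. The genuine gap is precisely the step you defer as \enquote{constant-tracking}: the theorem asserts the explicit exponent $8n$, and nothing in your argument can produce it. \Cref{thm:single_exponential_quantifier_elimination} and \cref{cor:single_exponential_quantifier_elimination} control the output only up to unspecified $O(\cdot)$ and $\alpha$ constants, so feeding them into Cauchy's bound yields $2^{L^{cn}}$ for some unknown constant $c$, and no amount of bookkeeping inside the paper's toolbox recovers $c = 8$; you would have to redo the argument with the fully explicit elimination bounds on degrees and bitsizes from Basu--Pollack--Roy, or follow the Basu--Roy/Schaefer--\v{S}tefankovi\v{c} derivation directly, and even then it is not clear your route gives that particular constant. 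For the paper's downstream uses (e.g.\ \cref{lem:lower_bound_leading_eps,lem:bounding_universal_range}) a qualitative $2^{L^{O(n)}}$ would suffice, but as a proof of the statement as written, your proposal is incomplete.
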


Recall that for any quantifier-free formula~$\varphi(X)$ with free variables~$X \in \R^n$, the set $S := \{x \in \R^n \mid \varphi(X)\}$ is semi-algebraic.
Thus, a direct conclusion of \cref{thm:ball_theorem} is that $\exists X \in \R^n : \varphi(X)$ is equivalent to $\exists X \in[\cramped{-2^{L^{8n}}}, \cramped{2^{L^{8n}}}]^n : \varphi(X)$. This is how we are going to use \cref{thm:ball_theorem} throughout this paper.

Below we use \cref{cor:single_exponential_quantifier_elimination} and \cref{thm:ball_theorem}, proving a \lcnamecref{lem:lower_bound_leading_eps} that was stated already by D'Costa, Lefaucheux, Neumann, Ouaknine and Worrel~\cite[Lemma~$14$]{DCosta2021_EscapeProblem} for two quantifiers.
We are interested in a generalization to more quantifiers.
The proof for the~$k$ quantifiers goes along the same lines as the proof for two quantifiers.

\begin{lemma}
    \label{lem:lower_bound_leading_eps}
    Let~$X_1, \ldots, X_k$ be blocks of variables where~$X_i$ has length~$n_i \geq 1$ and let $\varphi(\eps, X_1, \ldots, X_k) \in \QFF$ with~$L := \abs{\varphi}$.
    For~$Q_i \in \{\exists, \forall\}$ consider the set
    \[
        S := \{
            \eps > 0 \mid
            Q_1 X_1 \in \R^{n_1} \ldots Q_k X_k \in \R^{n_k} :
            \varphi(\eps, X_1, \ldots, X_k)
        \}
        \text{.}
    \]
    If~$S$ is non-empty, then there is an~$\eps^* \in S$ such that for some constant~$\beta \in \R$ we have
    \[
        \eps^* \geq 2^{-L^{\beta^{k+2} n_1 \cdots n_k}}
        \text{.}
    \]
\end{lemma}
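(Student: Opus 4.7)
The strategy is to collapse the $k$-quantifier expression defining $S$ into a single-variable quantifier-free formula in $\eps$, invert to study $\delta := 1/\eps$, and then invoke the Ball Theorem to find a witness $\delta^*$ of small absolute value, which corresponds to a large $\eps^*$.

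\textbf{Step 1: Quantifier elimination in the variables $X_1,\dots,X_k$.} Applying \cref{cor:single_exponential_quantifier_elimination} to the formula
\[
  Q_1 X_1 \in \R^{n_1} \ldots Q_k X_k \in \R^{n_k} : \varphi(\eps, X_1,\ldots,X_k),
\]
with the remaining free block being the single variable $\eps$ (so $m=1$), I obtain a quantifier-free $\psi(\eps)$ of size at most $L^{\alpha^{k+1} n_1 \cdots n_k}$, equivalent on $\R$. Thus $S = \{\eps > 0 \mid \psi(\eps)\}$.

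\textbf{Step 2: Substitute $\eps \mapsto 1/\delta$ and clear denominators.} For each atom $P(\eps) \circ 0$ of $\psi$ (with $\circ \in \{<,\leq,=,\geq,>,\neq\}$ and $d := \deg P$), replace it by $\delta^d P(1/\delta) \circ 0$, which is a polynomial inequality in $\delta$ whose truth value on $\delta > 0$ matches that of the original atom at $\eps = 1/\delta$ (since $\delta^d > 0$). Call the resulting formula $\til\psi(\delta)$; by construction $|\til\psi| = O(|\psi|)$, so
\[
  |\til\psi| \;\leq\; c_1 \cdot L^{\alpha^{k+1} n_1 \cdots n_k}
\]
for some absolute constant $c_1$. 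Define the semi-algebraic set
\[
  T := \{\delta \in \R \mid \delta > 0 \,\land\, \til\psi(\delta)\}.
\]
The assignment $\eps \mapsto 1/\eps$ is a bijection $S \to T$, so $T$ is non-empty.

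\textbf{Step 3: Apply the Ball Theorem to $T \subseteq \R$.} The complexity of $T$ is at most $L' := c_2 \cdot L^{\alpha^{k+1} n_1 \cdots n_k}$ for some constant $c_2$ (adding the atom $\delta > 0$ costs $O(1)$). By \cref{thm:ball_theorem} (with $n=1$), $T$ contains a point $\delta^*$ with $|\delta^*| \leq 2^{L'^{8}}$. Setting $\eps^* := 1/\delta^* > 0$ yields $\eps^* \in S$ and
\[
  \eps^* \;\geq\; 2^{-L'^{8}} \;\geq\; 2^{-\,c_3 \cdot L^{8 \alpha^{k+1} n_1 \cdots n_k}}
\]
for a constant $c_3$.

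\textbf{Step 4: Absorb constants into a single exponent.} Because $n_1 \cdots n_k \geq 1$, the multiplicative constant $c_3$ in the exponent can be absorbed by slightly increasing the base: pick $\beta \geq \max\{\alpha, 8, c_3\}$ large enough that $\beta^{k+2} \geq 8 \alpha^{k+1} + \log_L c_3$ for all $L \geq 2$ and all $k \geq 1$ (e.g.\ $\beta = 8\alpha$ suffices: then $\beta^{k+2}/\alpha^{k+1} = 8\alpha \cdot 8^{k+1} \geq 8$, which dominates the extra constant for $L \geq 2$). This gives $\eps^* \geq 2^{-L^{\beta^{k+2} n_1 \cdots n_k}}$, as required.

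\textbf{Main obstacle.} The only subtle point is the bookkeeping: one must verify that the substitution $\eps \mapsto 1/\delta$ followed by denominator-clearing inflates the formula only by a polynomial (in fact linear) factor, and that the sign considerations when multiplying atoms by $\delta^d$ work out because $\delta$ is restricted to be positive. Once that is in place, the constant-chasing at the end is routine.
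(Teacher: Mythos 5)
Your proposal follows essentially the same route as the paper's proof: eliminate the quantified blocks via \cref{cor:single_exponential_quantifier_elimination} (with $m=1$), substitute $\eps = 1/\delta$ and clear denominators, apply \cref{thm:ball_theorem} in one variable, and invert the witness, absorbing constants into $\beta$. The only slip is the claim $\abs{\til\psi} = O(\abs{\psi})$: since exponents are encoded in unary, clearing denominators can inflate an atom by a factor up to the degree $d \leq \abs{\psi}$ (the paper bounds the new length by $\abs{\psi}^2$), but this only changes the constant in the doubly-indexed exponent and is absorbed by your choice of $\beta$ in Step 4, exactly as in the paper.
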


\begin{proof}
    Let $\Phi(\eps)$ be the subformula $Q_1 X_1 \in \R^{n_1} \ldots Q_k X_k \in \R^{n_k} : \varphi(\eps, X_1, \ldots, X_k)$.
    By \cref{cor:single_exponential_quantifier_elimination}, there is a constant~$\alpha \in \R$ and a quantifier-free formula~$\phi(\eps)$ of length
    \[
        \abs{\phi(\eps)} \leq L^{2\alpha^{k+1} n_1 \cdots n_k}
    \]
    such that~$S = \{\eps > 0 \mid \phi(\eps)\}$.
    Let~$d$ be the maximum degree of any polynomial in~$\phi$ and~$\delta$ be a new variable.
    We replace each atom $P(\eps) \circ 0$ (where~$\circ \in \{<, \leq\}$) of~$\phi$ by $\delta^d P(1/\delta) \circ 0$ and denote the new formula by~$\psi(\delta)$.
    Then for~$\eps > 0$ it follows that~$\phi(\eps)$ is true if and only if for~$\delta = 1/\eps$ the sentence~$\psi(\delta)$ is true.
    It follows that
    \[
        \exists \eps > 0 : \phi(\eps)
        \wequiv
        \exists \delta > 0 : \psi(\delta)
        \text{.}
    \]
    To obtain a an upper bound on~$\abs{\psi(\delta)}$, note that the length of each atom increases by a factor of at most~$d$, which is obviously at most~$\abs{\phi(\eps)}$.
    We conclude that
    \[
        \abs{\psi(\delta)}
        \leq \abs{\phi(\eps)} \cdot d
        \leq \abs{\phi(\eps)}^2
        \text{.}
    \]
    If~$S$ is non-empty, then~$\exists \delta > 0 : \psi(\delta)$ is true.
    By \cref{thm:ball_theorem}, there is some~$\delta^*$ such that $\psi(\delta^*)$ is true and~$\delta^* \leq \cramped{2^{\abs{\psi(\delta)}^8}}$.
    We get that
    \[
        \delta^*
        \leq 2^{\abs{\psi(\delta)}^8}
        \leq 2^{\abs{\phi(\eps)}^{16}}
        \leq 2^{L^{32\alpha^{k+1} n_1 \cdots n_k}}
        \leq 2^{L^{\beta^{k+2} n_1 \cdots n_k}}
        \text{,}
    \]
    where~$\beta := \max\{32, \alpha\}$ is a real constant independent of the input.
    The result follows for~$\eps^* := 1/\delta^*$.
\end{proof}

The next \lcnamecref{lem:double_exponential_small_value} will be used frequently to scale (some dimensions of) semi-algebraic sets.
For some~$N \in \N$ and $N + 1$ variables $U = (U_0, \ldots, U_N)$ it considers the following formula:
\begin{equation}
    \label{eqn:double_exponetial_small_value}
    \chi(U) \wdequiv
    (2 \cdot U_0 = 1) \land
    \bigwedge_{i = 1}^N (U_i = U_{i-1}^2)
\end{equation}

\begin{lemma}
    \label{lem:double_exponential_small_value}
    For~$u \in [-1,1]^{N+1}$ formula~$\chi(u)$ is true if and only if~$u_i = \cramped{2^{-2^i}}$.
\end{lemma}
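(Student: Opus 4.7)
The plan is to prove both implications by a direct induction on $i$, with essentially no obstacles. Indeed, the restriction $u \in [-1,1]^{N+1}$ plays no active role: every value $2^{-2^i}$ lies in $(0, 1/2]$ already, so the stated hypothesis is automatically consistent.

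For the \emph{if} direction, I would substitute $u_i := 2^{-2^i}$ into every atom of $\chi$ and verify satisfaction. The first atom gives $2 \cdot 2^{-1} = 1$, which holds. For $i \geq 1$, the atom $U_i = U_{i-1}^2$ becomes $2^{-2^i} = \bigl(2^{-2^{i-1}}\bigr)^2 = 2^{-2 \cdot 2^{i-1}} = 2^{-2^i}$, which also holds.

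For the \emph{only if} direction, I would induct on $i$. The atom $2 U_0 = 1$ forces $u_0 = 1/2 = 2^{-2^0}$, giving the base case. Assuming $u_{i-1} = 2^{-2^{i-1}}$, the atom $U_i = U_{i-1}^2$ gives $u_i = \bigl(2^{-2^{i-1}}\bigr)^2 = 2^{-2^i}$, completing the step.

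Since the argument is a short induction, there is no real obstacle; the only remark worth making in the paper is that the atomic structure of $\chi$ was designed precisely so that $u_0$ is pinned down by a single linear equation and each successive $u_i$ is then pinned down by one squaring. This is what produces the doubly-exponentially small value $2^{-2^N}$ from a formula of only linear length in $N$, which is the point of the lemma for later scaling arguments.
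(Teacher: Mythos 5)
Your proof is correct and matches the paper's argument, which likewise disposes of the if-direction by direct substitution and the only-if-direction by a simple induction on $i$ (the paper merely states this without writing out the details). Your observation that the restriction $u \in [-1,1]^{N+1}$ is not actually needed for the equivalence is also accurate; it is present only because the surrounding applications quantify the $U_i$ over $[-1,1]$.
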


\begin{proof}
  The if-part is trivial.
  The only-if-part follows from a simple induction.
\end{proof}

\subsection{Bounding the Ranges of the Quantifiers}
\label{sec:range_restrictions}

In the following, we show how to restrict the ranges of the variables.
This was first done by D'Costa et al.~\cite{DCosta2021_EscapeProblem} in the context of their~$\EUR[\leq]$-complete escape problem.
\Cref{lem:bounding_universal_range,lem:bounding_existential_range} below are stated in our setting, but their proofs directly follow the ideas in~\cite{DCosta2021_EscapeProblem}.

As a first step, we restrict the universally quantified variables.
This works for general \UETR instances without any further requirements on the formula.

\begin{lemma}
    \label{lem:bounding_universal_range}
    Let~$X$ and~$Y$ be blocks of variables with $n := \abs{X}$ and~$m := \abs{Y}$, let $\varphi(X,Y) \in \QFF$ and let
    \[
        \Phi \wdequiv
        \forall X \in \R^n \sepdot
        \exists Y \in \R^m :
        \varphi(X,Y)
        \text{.}
    \]
    Then there exists an integer~$N \leq \poly(n, m, \abs{\varphi})$, such that for~$\cramped{C := 2^{2^N}}$ the sentence
    \[
        \Psi \wdequiv
        \forall X \in [-C,C]^n \sepdot
        \exists Y \in \R^m :
        \varphi(X,Y)
    \]
    is equivalent to~$\Phi$.
\end{lemma}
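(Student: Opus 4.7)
The implication $\Phi \Rightarrow \Psi$ is immediate, since shrinking the domain of a universally quantified variable cannot turn a true statement into a false one. For the converse I would argue by contrapositive: assuming $\Phi$ is false, I would produce a counterexample $x^* \in \R^n$ with $\norm{x^*}_\infty \leq C$ and $\forall y \in \R^m : \neg \varphi(x^*, y)$. Any such $x^*$ then also falsifies $\Psi$.

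To locate $x^*$, consider the semi-algebraic set of counterexamples
\[
    \counterexample{\Phi} = \{x \in \R^n \mid \forall Y \in \R^m : \neg \varphi(x, Y)\},
\]
which is non-empty because $\Phi$ is false. The defining formula has a single universal quantifier block of length $m$ with free variables $X$ of length $n$, so \cref{cor:single_exponential_quantifier_elimination} (with $k = 1$, $n_1 = m$, and free-block length $n$) yields an equivalent quantifier-free formula of size at most $L^{\alpha^2 m n}$, where $L := \abs{\neg \varphi} \leq 2\abs{\varphi}$ and $\alpha$ is the absolute constant from that corollary. Consequently, $\counterexample{\Phi}$ is a non-empty semi-algebraic set of complexity at most $L^{\alpha^2 m n}$.

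Now I would invoke the Ball Theorem (\cref{thm:ball_theorem}) on $\counterexample{\Phi}$ to obtain a point $x^* \in \counterexample{\Phi}$ of Euclidean distance at most $2^{(L^{\alpha^2 m n})^{8n}} = 2^{L^{8\alpha^2 m n^2}}$ from the origin. Rewriting $L^{8\alpha^2 m n^2} = 2^{(8 \alpha^2 m n^2) \log_2 L}$, I would set
\[
    N := \lceil 8 \alpha^2 m n^2 \log_2 L \rceil,
\]
which is polynomial in $n$, $m$, and $\abs{\varphi}$. Then $C = 2^{2^N} \geq 2^{L^{8\alpha^2 m n^2}} \geq \norm{x^*}_2 \geq \norm{x^*}_\infty$, so $x^*$ lies in $[-C,C]^n$ as required.

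The delicate point is the clean composition of two single-exponential estimates into a doubly exponential $C = 2^{2^N}$ with polynomial $N$: \cref{cor:single_exponential_quantifier_elimination} contributes one exponential in going from the quantified description to a quantifier-free one, and the Ball Theorem contributes a second exponential in bounding the smallest witness. The key reason $N$ stays polynomial is precisely that \cref{cor:single_exponential_quantifier_elimination} is singly (rather than doubly) exponential for a single alternation; a naive quantifier-elimination bound would blow up the estimate by an extra exponential and destroy the argument.
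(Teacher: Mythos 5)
Your proposal is correct and follows essentially the same route as the paper's proof: the paper also double-negates to view the counterexample set $\counterexample{\Phi}$, applies the singly exponential quantifier elimination of \cref{cor:single_exponential_quantifier_elimination} to obtain a quantifier-free description of size $L^{\alpha^2 n m}$, and then uses the Ball Theorem (\cref{thm:ball_theorem}) to find a counterexample within distance $2^{2^N}$ of the origin for a polynomially bounded $N$. The only difference is presentational (you phrase it as exhibiting a small counterexample by contrapositive, the paper as a chain of equivalences), and your emphasis on why single-exponential quantifier elimination is the crucial ingredient matches the paper's own remarks.
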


\begin{proof}
    We rewrite~$\Phi$ via a double negation to get
    \[
        \Phi \wequiv
        \neg\bigl(
            \exists X \in \R^n \sepdot
            \forall Y \in \R^m :
        \neg\varphi(X,Y)
        \bigr)
    \]
    and let~$L := \abs{\neg\varphi}$ denote the length of the quantifier-free part.
    By \cref{cor:single_exponential_quantifier_elimination} there is a constant $\alpha \in \R$ and a quantfier-free formula~$\psi(X)$ such that~$\Phi$ is equivalent to $\neg\bigl(\exists X \in \R^n : \psi(X)\bigr)$, where
    \[
        \abs{\psi}
        \leq L^{\alpha^2 n m}
        = 2^{\alpha^2 \log(L) n m}
        \text{.}
    \]
    Assuming that $\{x \in \R^n : \psi(x)\}$ is non-empty, \cref{thm:ball_theorem} yields that it contains a point of distance at most
    \[
        D := 
        2^{\abs{\psi}^{8n}}
        \leq 2^{(2^{\alpha^2 \log(L) n m})^{8n}}
        = 2^{2^{8 \alpha^2 \log(L) n^2 m}}
    \]
    from the origin.
    Let~$N = \bigl\lceil 8 \alpha^2 \log(L) n^2 m \bigr\rceil \leq \poly(n, m, \log(L)) \leq \poly(n, m, \abs{\varphi})$.
    Then it holds that~$C := \cramped{2^{2^N}} \geq D$.
    It follows, that
    \begin{align*}
        \neg\Phi
        &\wequiv \neg(
            \exists X \in \R^n :
            \psi(X)
        ) \\
        &\wequiv \neg(
            \exists X \in [-C,C]^n :
            \psi(X)
        ) \\
        &\wequiv \neg(
            \exists X \in [-C,C]^n \sepdot
            \forall Y \in \R^m :
            \varphi(X,Y)
        ) \\
        &\wequiv \neg\Psi
    \end{align*}
    and therefore~$\Phi \equiv \Psi$.
\end{proof}

In a second step, we additionally restrict the existentially quantified variables.
Before we do so, we show that this may be impossible in general (without changing its true/false value).
To this end, consider the following example: 
\[
    \forall X \in \R \sepdot
    \exists Y \in \R :
    X = 0 \lor XY = 1
\]
This sentence is clearly true as either $X = 0$ or if $X \neq 0$ we may define $Y := 1 / X$.
This remains true if we restrict the range of $X$, e.g., to $[-10,10]$.  However, note that $1/X$ with $X \in [-10,10]$ may be arbitrarily large (or small).
Consequently, we cannot restrict the range of~$Y$ to any interval.
In the following, we show how the ranges can be restricted in case of \UStrict formulas.
Requiring the formula to be \UStrict is a slight generalization of the corresponding statement shown in~\cite{DCosta2021_EscapeProblem} (where the formula is required to be strict).
This more general case is crucial for our proofs in \cref{sec:hausdorff_membership,sec:exotic_quantifiers}.

\begin{lemma}
    \label{lem:bounding_existential_range}
    Let~$X$ and~$Y$ be blocks of variables with~$n := \abs{X}$ and $m := \abs{Y}$ and $\varphi(X,Y) \in \QFF$.
    Further, let~$N$ be an integer and~$C := \cramped{2^{2^N}}$.
    Then for a \UStrict sentence
    \[
        \Phi \wdequiv
        \forall X \in [-C,C]^n \sepdot
        \exists Y \in \R^m :
        \varphi(X,Y)
    \]
    there is an integer~$M \leq \poly(n, m, N, \abs{\varphi})$ such that for~$D := \cramped{2^{2^{M}}}$ the sentence
    \[
        \Psi \wdequiv
        \forall X \in [-C,C]^n \sepdot
        \exists Y \in [-D,D]^m :
        \varphi(X,Y)
    \]
  is equivalent to~$\Phi$.
\end{lemma}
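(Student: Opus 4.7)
The implication $\Psi \Rightarrow \Phi$ is immediate. For the converse, assume $\Phi$ holds and introduce a fresh free variable $\delta > 0$ to consider the parameterised formula
\[
    \chi(\delta) \dequiv \forall X \in [-C,C]^n \sepdot \exists Y \in \R^m : \delta^2 \sum_{i=1}^m Y_i^2 \leq 1 \land \varphi(X,Y).
\]
After unfolding the range restriction on $X$ as $\bigvee_{j=1}^n X_j^2 > C^2 \lor (\ldots)$ added to the body, the quantifier-free part remains \UStrict: the new atom $\delta^2 \sum Y_i^2 \leq 1$ involves only $\delta$ and $Y$, and the bounding atoms $X_j^2 > C^2$ are strict and involve only $X$. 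Its bit length is $L^* = O(\abs{\varphi} + n \cdot 2^N + m)$, the $n \cdot 2^N$ term accounting for encoding the coefficient $C^2 = 2^{2^{N+1}}$ in the $n$ bounding atoms. Note also that $\chi$ is monotone in $\delta$: if $\chi(\delta_0)$ holds and $0 < \delta \leq \delta_0$, the same witnesses work, so the set $S := \{\delta > 0 : \chi(\delta) \text{ is true}\}$ is downward closed.

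The crux is to show $S$ is non-empty. For each fixed $y \in \R^m$ the slice $W_y := \{x \in [-C,C]^n : \varphi(x,y)\}$ is \emph{open} in $[-C,C]^n$: since $\varphi$ is in negation normal form with every $X$-atom strict, substituting $y$ reduces $\varphi(\cdot,y)$ to a finite $\land,\lor$-combination of open strict $X$-inequalities and Boolean constants. Because $\Phi$ holds, $\{W_y\}_{y \in \R^m}$ covers the compact cube $[-C,C]^n$, and a finite subcover $W_{y_1}, \dots, W_{y_k}$ exists; with $D_0 := 1 + \max_i \|y_i\|_\infty$, the value $1/D_0$ lies in $S$. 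With $S \neq \emptyset$, I apply Lemma~\ref{lem:lower_bound_leading_eps} to $\chi(\delta)$ (two blocks of lengths $n$ and $m$, quantifier-free length $L^*$) to obtain $\delta^* \in S$ with $\delta^* \geq 2^{-(L^*)^{\beta^{4} n m}}$. Since $\delta^{*2} \sum Y_i^2 \leq 1$ forces $|Y_i| \leq 1/\delta^*$ for each $i$, setting $D := 1/\delta^*$ turns $\chi(\delta^*)$ into the desired $\Psi$. Writing $D = 2^{2^M}$ with $M := \lceil \beta^4 \, n m \, \log_2 L^* \rceil$ and using $\log_2 L^* = O(\log \abs{\varphi} + N + \log n + \log m)$ yields $M \leq \poly(n,m,N,\abs{\varphi})$, as required.

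The main obstacle is the openness-plus-compactness step establishing $S \neq \emptyset$. This is exactly where the \UStrict hypothesis is essential: without strict $X$-atoms the slices $W_y$ need not be open (cf.\ the $XY = 1$ example preceding the statement), the finite subcover argument breaks, and one can have $\Phi$ true while no bounded $D$ works. Everything else is a mechanical invocation of Lemma~\ref{lem:lower_bound_leading_eps} together with routine bit-length bookkeeping.
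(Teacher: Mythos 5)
Your proof is correct and follows essentially the same two-step strategy as the paper's: first a compactness argument on the cube $[-C,C]^n$, using the \UStrict property to make the relevant sets of $X$-values open (the paper phrases this with $\eps(x)$-balls around each $x$, you with the open slices $W_y$ --- the same argument), and then an effective doubly exponential bound via the quantifier-elimination/Ball-theorem machinery. The only packaging difference is in the second step: the paper bounds $D$ directly by encoding $C$ succinctly with the squaring gadget $\chi(U)$ and applying \cref{cor:single_exponential_quantifier_elimination} and \cref{thm:ball_theorem} to the sentence $\exists D>0\ldots$, whereas you reparametrise via $\delta\approx 1/D$, write $C^2$ out in binary (exponentially long, but harmless since only $\log L^*$ enters the final bound) and invoke \cref{lem:lower_bound_leading_eps}, which wraps the same tools. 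One small slip worth fixing: with $D_0 := 1+\max_i\norm{y_i}_\infty$ the value $1/D_0$ need not lie in $S$, because the atom $\delta^2\sum_{i=1}^m Y_i^2\le 1$ bounds the Euclidean norm while $D_0$ only bounds the $\infty$-norm (already for $m\ge 2$ and witnesses of $\infty$-norm about $1$ this fails); take $\delta := 1/(\sqrt{m}\,D_0)$, or impose the constraint coordinatewise, and the rest of your argument goes through unchanged.
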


\begin{proof}
    If~$\Phi$ is false, then there exists an $x \in[-C,C]^n$ such that no $y \in \R^m$ satisfies $\varphi(x,y)$.
    In particular, no $y \in [-D,D]^m \subseteq \R^m$ satisfies~$\varphi(x,y)$.
    Thus, $\Psi$ is also false.

    In the remainder of the proof we assume that~$\Phi$ is true.
    The proof consists of two steps.
    First we show that an upper bound~$D$ for the existentially quantified variables indeed exists.
    In a second step, we use the Ball Theorem to compute an upper bound for~$D$.
    
    For the first step, let~$S = [-C,C]^n$.
    Sentence~$\Phi$ being true implies that for each~$x \in S$ there is a~$y(x) \in \R^m$ such that~$\varphi(x,y(x))$ is true.
    Even stronger, as~$\varphi$ is \UStrict, we even find an $\eps(x) > 0$, such that for all~$\til{x} \in S$ with $\norm{x - \til{x}} < \eps(x)$ we get that~$\varphi(\til{x},y(x))$ is true.
    Recall that we denote by $B_n(x,r) = \{\til{x} \in \R^n \mid \norm{\til{x} - x} < r\}$ the open ball with center~$x$ and radius~$r$ in~$\R^n$.
    Then $\{B_n(x,\eps(x)) \mid x \in S\}$ is an open cover of~$S$.
    As~$S$ is compact, it has a finite subcover $B_n(x_1,\eps(x_1)), \ldots, B_n(x_s,\eps(x_s))$.
    Now, given some $x \in S$, there is an $i \in \{1, \ldots, s\}$, such that $\varphi(x, y(x_i))$ is true.
    We define $y_{\max} := \max\{\norm{y(x_1)}_\infty, \ldots, \norm{y(x_s)}_\infty\}$.
    Then, for all~$D \geq y_{\max}$ formula~$\Phi$ implies
    \[
        \exists D > 0 \sepdot
        \forall X \in [-C,C]^n \sepdot
        \exists Y \in \R^m :
        \bigwedge_{i=1}^m \abs{Y_i} \leq D \land
        \varphi(X,Y)
        \text{,}
    \]
    proving the existence of an upper bound~$D$ for the existentially quantified variables.
    
    The second step is to obtain a bound on~$D$.
    We first need to construct~$C = \cramped{2^{2^N}}$ inside the formula.
    For this, let~$U = (U_0, \ldots, U_N)$ be~$N + 1$ new variables and~$\chi(U)$ be the formula \eqref{eqn:double_exponetial_small_value}. Recall that by \cref{lem:double_exponential_small_value}, $\chi(u)$ is true if and only if~$u_i = \cramped{2^{-2^i}}$.
    Further, each~$U_i$ can be trivially restricted to be in~$[-1,1]$.
    Using~$\chi(U)$, we can rewrite above sentence as
    \begin{align*}
        \exists D > 0 \sepdot
        &\forall X \in \R^n, U \in [-1,1]^{N+1} \sepdot
        \exists Y \in \R^m : \\
        &\bigl(
            \chi(U) \land
            \bigwedge_{i=1}^n \abs{X_i} U_N \leq 1
        \bigr) \implies
        \bigwedge_{i=1}^m \abs{Y_i} \leq D \land
        \varphi(X,Y)
        \text{.}
    \end{align*}
    
    From here on, bounding~$D$ is a straightforward application of the Ball Theorem:
    Let~$L$ be the length of the subformula behind the existential quantification of~$D$.
    By \cref{cor:single_exponential_quantifier_elimination} there is a constant~$\alpha \in \R$ and a quantifier-free formula~$\psi(D)$, such that above sentence is equivalent to $\exists D > 0 : \psi(D)$ where $\abs{\psi} \leq \cramped{L^{\alpha^3 (n + N + 1) m}}$.
    Then \cref{thm:ball_theorem} yields the following upper bound for~$D$:
    \[
        D
        \leq 2^{\abs{\psi(D)}}
        \leq 2^{(L^{\alpha^3 (n+N+1) m})^{8n}}
        = 2^{2^{8 \alpha^3 \log(L) n (n+N+1) m}}
        \text{.}
    \]
    Lastly, we choose~$M = \bigl\lceil 8 \alpha^3 \log(L) n (n+N+1) m \bigr\rceil$ to be the smallest integer such that~$D \leq \cramped{2^{2^M}}$.
    Note that $M \leq \poly(n, m, N, \log L) \leq \poly(n, m, N, \abs{\varphi})$ as required.
\end{proof}

\section{Counterexamples of \problemname{Strict-UETR}}
\label{sec:counterexamples}

Let us recall the definition of \emph{counterexamples} here that was already motivated in \cref{sec:proof_overview}.
Given a sentence~$\Phi \dequiv \forall X \in \R^n \sepdot \exists Y \in \R^m : \varphi(X,Y)$ we call
\[
    \counterexample{\Phi} := \{
        x \in \R^n \mid
        \forall Y \in \R^m :
        \neg\varphi(x,Y)
    \}
\]
its \emph{counterexamples}.
The counterexamples of~$\Phi$ are exactly the values~$x \in \R^n$ for which there is no~$y \in \R^m$ such that~$\varphi(x,y)$ is true.
The main result of this section, \cref{thm:open_ball_counterexamples}, is that we can transform a \problemname{Strict-UETR} instance~$\Phi$ into an equivalent formula~$\Psi$ for which~$\counterexample{\Psi}$ is either empty or contains an open ball.
The main tools for this are the range restrictions from \cref{sec:range_restrictions} and the following \lcnamecref{lem:continuity_of_compact_minimum} from calculus.

\begin{lemma}
    \label{lem:continuity_of_compact_minimum}
    Let~$S$ and~$T$ be compact sets and $f : S \times T \to \R$ be a continuous function.
    Then $g : S \to \R$, $x \mapsto \min\limits_{y \in T} \{f(x,y)\}$ is continuous over~$S$.
\end{lemma}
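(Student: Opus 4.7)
The plan is to prove this via a standard uniform-continuity argument, exploiting that $S \times T$ is compact and that for each fixed $x \in S$ the minimum in the definition of $g(x)$ is actually attained (not just an infimum).

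First I would verify well-definedness: for each $x \in S$, the slice $y \mapsto f(x,y)$ is continuous on the compact set $T$, so by the extreme value theorem it attains its minimum. This justifies writing $\min$ rather than $\inf$ and lets me pick, for every $x \in S$, a witness $y^*(x) \in T$ with $g(x) = f(x, y^*(x))$. The axiom of choice suffices here; no measurable selection is needed.

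Next I would invoke uniform continuity. Since $S$ and $T$ are compact, so is their product $S \times T$, and any continuous function on a compact metric space is uniformly continuous. Fix $\eps > 0$; uniform continuity yields a $\delta > 0$ such that $|f(x_1, y_1) - f(x_2, y_2)| < \eps$ whenever the pair $(x_1,y_1), (x_2,y_2) \in S \times T$ satisfies $\norm{x_1 - x_2} + \norm{y_1 - y_2} < \delta$. Specializing to $y_1 = y_2$, I obtain the key estimate
\[
    \norm{x_1 - x_2} < \delta \;\Longrightarrow\; \abs{f(x_1, y) - f(x_2, y)} < \eps \ \text{ for every } y \in T.
\]

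The final step is a symmetric two-sided comparison at the two argmins. Given $x_1, x_2 \in S$ with $\norm{x_1 - x_2} < \delta$, let $y_1^* = y^*(x_1)$ and $y_2^* = y^*(x_2)$. Then
\[
    g(x_2) \leq f(x_2, y_1^*) \leq f(x_1, y_1^*) + \eps = g(x_1) + \eps,
\]
and swapping the roles of $x_1$ and $x_2$ gives $g(x_1) \leq g(x_2) + \eps$. Hence $\abs{g(x_1) - g(x_2)} \leq \eps$, which in fact shows that $g$ is uniformly continuous on $S$, a strictly stronger conclusion than the one stated.

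I do not anticipate a real obstacle here; the only subtle point worth stating explicitly is that the compactness of $T$ (not merely of $S$) is what allows us to replace the $\inf$ in the definition of $g$ by a $\min$ and thereby pick the witnesses $y_1^*, y_2^*$ used in the comparison. Without that, one would have to work with $\eps$-near-minimizers and carry an extra $\eps$ in the bookkeeping, but the structure of the argument would be identical.
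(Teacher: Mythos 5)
Your proof is correct and follows essentially the same route as the paper: uniform continuity of $f$ on the compact product $S \times T$, followed by a symmetric two-sided comparison using the attained minimizers, yielding $\abs{g(x_1) - g(x_2)} \leq \eps$ whenever $\norm{x_1 - x_2} < \delta$. The only cosmetic differences are that you make the selection of minimizers and the restriction to equal $y$-coordinates explicit, and you note the (stronger but unneeded) conclusion of uniform continuity of $g$.
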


\begin{proof}
    We first observe that by compactness of~$S$ and~$T$, their Cartesian product~$S \times T$ is compact as well.
    Thus, because~$f$ is continuous on~$S \times T$, it is even uniformly continuous, i.e., for every~$\eps > 0$ there is a~$\delta > 0$, such that $\lvert f(x,y) - f(\til{x},\til{y}) \rvert < \eps$ whenever $\norm{(x,y) - (\til{x},\til{y})} < \delta$ for every two points~$(x,y),(\til{x},\til{y}) \in S \times T$.

    Now consider $x, \til{x} \in S$ with $\norm{x - \til{x}} < \delta$.
    We have
    \begin{align*}
        g(\til{x}) - g(x)
        &= g(\til{x}) - f(x, y)  & & \text{(for some $y \in T$)} \\
        &< g(\til{x}) - (f(\til{x}, y) - \eps) & & \text{(by uniform continuity)} \\
        &\leq g(\til{x}) - (g(\til{x}) - \eps) & & \text{(by definition of $g$)} \\
        &= \eps
        \text{.}
    \end{align*}
    By exchanging the role of~$x$ and~$\til{x}$, we get $g(x) - g(\til{x})< \eps$.
    Combined, we obtain that~$\lvert g(x) - g(\til{x}) \rvert < \eps$ for all~$x,\til{x} \in S$ with $\norm{x - \til{x}} < \delta$.
    It follows that~$g$ is continuous on~$S$.
\end{proof}

With these tools at hand, we are able to tackle the main result of this section.

\begin{theorem}
    \label{thm:open_ball_counterexamples}
    Given a \StrictUETR instance~$\Phi$, we can construct in polynomial time an equivalent \UETR instance~$\Psi$ of the form
    \[
        \forall X \in [-1,1]^n \sepdot
        \exists Y \in [-1,1]^\ell :
        \psi(X,Y)
        \text{,}
    \]
    such that~$\counterexample{\Psi}$ is either empty or contains an $n$-dimensional open ball.
\end{theorem}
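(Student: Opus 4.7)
My plan is to produce $\Psi$ in three stages: a witness transformation that gives every strict atom of $\varphi$ a positive slack, the two range restrictions of \cref{sec:range_restrictions}, and a final rescaling to $[-1,1]$ via the $\chi$-trick from \eqref{eqn:double_exponetial_small_value}. The open ball will come from continuity applied to the injected slack.

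Because $\Phi$ is \StrictUETR, every atom of $\varphi$ is strict. I first rewrite each such atom using a fresh existential witness $W_i$ via the equivalence $P > 0 \Leftrightarrow \exists W > 0 : PW > 1$ (and the analogous equivalences for $P < 0$ and $P \neq 0$), obtaining a formula $\varphi'(X, Y, W) \in \QFF[<]$ with $\exists Y : \varphi(X,Y) \equiv \exists Y, W : \varphi'(X,Y,W)$ in which every atom is still a strict inequality. Hence $\forall X \sepdot \exists Y, W : \varphi'$ is \UStrict and equivalent to $\Phi$. Applying \cref{lem:bounding_universal_range} restricts $X$ to $[-C,C]^n$ with $C = \cramped{2^{2^N}}$, and invoking \UStrictness, \cref{lem:bounding_existential_range} restricts $(Y,W)$ to $[-D,D]^{m+s}$ with $D = \cramped{2^{2^M}}$. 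The crucial effect is that $P_i W_i > 1$ with $|W_i| \leq D$ is equivalent to $P_i > 1/D$, so the bounded existence is equivalent to $\exists Y \in [-D,D]^m : \varphi^{(D)}(X, Y)$, where $\varphi^{(D)}$ sharpens every strict atom of $\varphi$ by a uniform slack $1/D$.

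To reach the shape $\forall X \in [-1,1]^n \sepdot \exists Y \in [-1,1]^\ell : \psi(X,Y)$ without inflating coefficients to doubly exponential size, I adjoin existentially quantified vectors $U \in [-1,1]^{N+1}$ and $V \in [-1,1]^{M+1}$ with the conjuncts $\chi(U) \land \chi(V)$; by \cref{lem:double_exponential_small_value} these force $U_N = 1/C$ and $V_M = 1/D$, both positive. Substituting $X \mapsto X / U_N$ and $(Y, W) \mapsto (Y, W) / V_M$ in every atom and clearing denominators by multiplying through by appropriate powers of $U_N$ and $V_M$ produces a polynomial-size quantifier-free formula $\psi$; the resulting $\Psi$ has the required shape, is equivalent to $\Phi$, and its counterexamples in $[-1,1]^n$ are the counterexamples of the bounded sentence from the previous paragraph, rescaled by $1/C$.

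For the open-ball claim, encode the Boolean structure of $\varphi$ by $\min$ and $\max$ of its atom polynomials to obtain a continuous semi-algebraic function $g$ with $\varphi(x,y) \Leftrightarrow g(x,y) < 0$ and $\varphi^{(D)}(x,y) \Leftrightarrow g(x,y) < -1/D$. Then $h(x) := \min_{y \in [-D,D]^m} g(x,y)$ is continuous by \cref{lem:continuity_of_compact_minimum}, the counterexamples of the bounded sentence form $\{h \geq -1/D\}$, and the counterexamples of the original $\Phi$ form $\{h \geq 0\}$. If $\Psi$ is false then $\{h \geq 0\}$ is non-empty; every $x_0$ in it satisfies $h(x_0) \geq 0 > -1/D$, so $x_0$ lies in the open set $\{x : h(x) > -1/D\}$, which is contained in the counterexample set and hence contains an $n$-dimensional open ball around $x_0$. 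Rescaling by $1/C$ carries this ball into $\counterexample{\Psi}$. The main technical hurdle will be carefully checking that the witness transformation preserves \UStrictness so that \cref{lem:bounding_existential_range} applies, and verifying that denominator clearing keeps $\psi$ of polynomial size.
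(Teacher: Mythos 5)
Your proposal is correct and follows essentially the same route as the paper: a per-atom witness trick that turns bounded existence of the witnesses into a uniform slack on every strict atom (the paper uses $P<0 \mapsto \exists Z : Z^2P+1<0$, you use $P>0 \mapsto \exists W>0 : PW>1$; both yield a slack of order $1/D$ once the witnesses are confined to $[-D,D]$ via \cref{lem:bounding_existential_range}), preceded by \cref{lem:bounding_universal_range}, followed by the $\chi$-rescaling, and finished with continuity of a minimum over a compact domain (\cref{lem:continuity_of_compact_minimum}). Your only real deviation is cosmetic but pleasant: you encode the whole negation-free Boolean structure in one $\min/\max$ function $g$, whereas the paper argues (for the analysis only) on a DNF, clause by clause, via $P_{\max}$ and $P^*$.

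Two small points to patch. First, the claim that the counterexamples of the original $\Phi$ \emph{equal} $\{h \ge 0\}$ is an overstatement, since $h$ only minimizes over $y \in [-D,D]^m$; what you actually have, and all you need, is that a counterexample of the universally-bounded sentence (which exists in $[-C,C]^n$ by \cref{lem:bounding_universal_range} when $\Psi$ is false) satisfies $h(x_0)\ge 0$. Second, $\{x : h(x) > -1/D\}$ is open in $\R^n$, but counterexamples only live inside the box $[-C,C]^n$ (after scaling, $[-1,1]^n$); if $x_0$ lies on the boundary of the box, the ball around $x_0$ sticks out of it and is therefore not contained in $\counterexample{\Psi}$. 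You need the paper's one-line fix: recenter at a nearby point of the open box, which still lies in $\{h > -1/D\}$, to obtain a genuine $n$-dimensional ball of counterexamples.
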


\begin{proof}
    The proof is split into two parts.
    First, we construct~$\Psi$ from~$\Phi$.
    Afterwards, we show that~$\counterexample{\Psi}$ has the desired properties.

    \subparagraph*{Construction of~$\Psi$.}
    Without loss of generality, each atom of the sentence
    \[
        \Phi \wdequiv
        \forall X \in \R^n \sepdot
        \exists Y \in \R^m :
        \varphi_<(X,Y)
    \]
    with~$\varphi_< \in \QFF[<]$ is of the form~$P < 0$, where $P \in \Z[X,Y]$ is a polynomial.
    We replace~$P < 0$ by the equivalent formula $\exists Z \in \R : Z^2P + 1 < 0$.
    Here,~$Z$ is a new variable that is exclusive to this atom.
    While this replacement may look innocent, it is very powerful.
    The key insight is as follows:
    Once we bound the range of~$Z$ to some compact interval~$[-D,D]$, this requires~$P < -1/D^2$ in order to satisfy the atom.
    This is stronger than just requiring~$P < 0$ and expands the set of counterexamples as we show below.
    Moving the new existential quantifiers to the front yields a sentence
    \[
        \Phi_1 \wdequiv
        \forall X \in \R^n \sepdot
        \exists Y \in \R^m, Z \in \R^k :
        \varphi_<'(X,Y,Z)
    \]
    in prenex normal form.
    Here, $k$ is the number of atoms in~$\Phi$ and~$\varphi_<'$ is obtained from~$\varphi_<$ by above transformation.
    Note that in particular, $\varphi_<$ and~$\varphi_<'$ have exactly the same logical structure (their only difference lies in the transformed atoms).
    The length increases only by a constant amount per atom, so~$\abs{\Phi_1}$ is linear in~$\abs{\Phi}$.
    Further, we have~$\counterexample{\Phi_1} = \counterexample{\Phi}$ by construction.
    
    We can apply \cref{lem:bounding_universal_range} to restrict the ranges of the universally quantified variables and obtain an integer~$N \leq \poly(\abs{\Phi_1})$ such that for~$C := \cramped{2^{2^N}}$ the sentence~$\Phi_1$ is equivalent to
    \[
        \Phi_2 \wdequiv
        \forall X \in [-C,C]^n \sepdot
        \exists Y \in \R^m, Z \in \R^k :
        \varphi_<'(X,Y,Z)
        \text{.}
    \]
    It holds that $\counterexample{\Phi_2} \subseteq \counterexample{\Phi_1}$ and further that $\counterexample{\Phi_2} = \counterexample{\Phi_1} \cap [-C,C]^n$.
    
    Each atom in~$\Phi_2$ is a strict inequality.
    Thus, we can use \cref{lem:bounding_existential_range} to also restrict the ranges of the existentially quantified variables.
    We obtain another integer $M \leq \poly(N, \abs{\Phi_2})$ such that for $D := \cramped{2^{2^M}}$ above sentence~$\Phi_2$ is equivalent to
    \[
        \Phi_3 \wdequiv
        \forall X \in [-C,C]^n \sepdot
        \exists Y \in [-D,D]^m, Z \in [-D,D]^k :
        \varphi_<'(X,Y,Z)
        \text{.}
    \]
    Regarding the counterexamples, we have~$\counterexample{\Phi_3} \supseteq \counterexample{\Phi_2}$.
    
    The last step is to scale the ranges over which the variables are quantified to the interval~$[-1,1]$.
    To this end, define $K := \max\{N,M\}$, let $U := (U_0, \ldots, U_K)$ be~$K+1$ new variables and let~$\chi(U)$ be formula~\eqref{eqn:double_exponetial_small_value}.
    Recall that by \cref{lem:double_exponential_small_value}, for~$u \in [-1,1]^{K+1}$ we have~$\chi(u)$ if and only if~$u_i = \cramped{2^{-2^i}}$.
    Let~$d$ be the maximum degree of any polynomial in~$\varphi_<'$.
    We define
    \begin{align*}
        \Psi \wdequiv~
        &\forall X \in [-1,1]^n \sepdot
        \exists Y \in [-1,1]^m, Z \in [-1,1]^k, U \in [-1,1]^K : \\
        &\chi(U)~\land~U_K^d \cdot \varphi_<'\Bigl(
            \frac{X}{U_N}, \frac{Y}{U_M}, \frac{Z}{U_M}
        \Bigr)
        \text{,}
    \end{align*}
    where~$X/U_N$ means that every~$X_i$ is replaced by $X_i/U_N$ (likewise for~$Y/U_M$ and~$Z/U_M$).
    The multiplication of~$\varphi_<'$ with~$U_K^d$ denotes that both sides of each atom are multiplied by~$U_K^d$.
    This restores the requirement that each atom is a polynomial inequality.
    (Strictly speaking, the obtained formula contains the divisions by~$U_N$ and~$U_M$.
    However, because~$K \geq N$, we can replace any $U_K \cdot (X_i/U_N)$ by $U_{K-N}X_i$, which does not contain divisions.
    Likewise, we handle $Y/U_M$ and $Z/U_M$.)
    As this last step just scaled variables, we conclude that~$\Psi$ is equivalent to~$\Phi_3$ and therefore also to~$\Phi$.
    Also, sentence~$\Psi$ has the form required by the statement of the \lcnamecref{thm:open_ball_counterexamples}.
    
    \subparagraph*{Properties of~$\counterexample{\Psi}$.}
    It remains to show that~$\counterexample{\Psi}$ is either empty (if~$\Psi$ is true) or contains an $n$-dimensional open ball (if~$\Psi$ is false).
    Note that scaling variables (as done to get from~$\Phi_3$ to~$\Psi$) also scales the counterexamples;
    thus, an open ball in~$\counterexample{\Phi_3}$ maps to an open ball in~$\counterexample{\Psi}$.
    It therefore suffices to prove that~$\counterexample{\Phi_3}$ contains an open ball.
    As~$\Phi_3$ is the simpler formula, we analyze~$\counterexample{\Phi_3}$ below.
    
    By construction, $\Phi$ and~$\Phi_3$ are equivalent.
    Thus,~$\Phi$ is true if and only~$\Phi_3$ is true.
    In particular, $\counterexample{\Phi} = \emptyset$ implies that $\counterexample{\Phi_3} = \emptyset$.
    From now on, we assume that~$\Phi_3$ is false.
    Let~$x^* \in \counterexample{\Phi_2}$ be a counterexample of~$\Phi_2$, fixed until the end of the proof.
    We know that~$x^* \in [-C,C]^n$ (by \cref{lem:bounding_universal_range}) and also that~$x^* \in \counterexample{\Phi_3}$ (by construction of~$\Phi_3$).
    
    We proof below that for some~$r > 0$, all $x \in [-C,C]^n$ with $\norm{x^* - x} < r$ are counterexamples of~$\Phi_3$ as well.
    If $B_n(x^*, r) \subseteq [-C,C]^n$, then~$x^*$ is the center of our desired open ball of counterexamples.
    If~$B_n(x^*, r)$ is not completely contained in~$[-C,C]^n$, then any~$x' \in B_n(x^*, r) \cap (-C,C)^n$ can be used instead as the center of a smaller (but still open) ball of counterexamples.
    
    To simplify the following argument, we further assume that~$\varphi_<$ (in~$\Phi$) is in disjunctive normal form (DNF), i.e., a disjunction of conjunctions of atoms.
    By construction, $\varphi_<'$ is then also in DNF and has exactly the same logical structure.
    This is justified as the set of counterexamples is invariant under applications of the distributive law on the quantifier-free part.
    Thus, $\varphi_<$ and~$\varphi_<'$ have exactly the same counterexamples as their DNFs.
    
    Let $\mathcal{C}(X,Y) := \bigl(\bigwedge_{i=1}^s P_i(X,Y) < 0\bigr)$ be one of the conjunctive clauses of (the DNF of)~$\varphi_<(X,Y)$.
    For our fixed counterexample~$x^* \in \counterexample{\Phi_2}$, every conjunctive clause of $\varphi_<(x^*, Y)$ evaluates to false independently of~$Y$.
    We get that for all~$y \in \R^m$ and thus in particular for all~$y \in [-D,D]^m$ that~$\mathcal{C}(x^*,y)$ is false and that
    \begin{equation}
        \label{eqn:disjunction_non_negative}
        \bigvee_{i=1}^s \bigl(P_i(x^*,y) \geq 0\bigr)
    \end{equation}
    is true.
    Let us point out that for different choices of~$y \in [-D,D]^m$, different subsets of the polynomials~$P_i(x^*,y)$ may evaluate to non-negative values.
    We only know that for every~$y$ at least one of the polynomials is non-negative (here it is important that~$\varphi_<(X,Y)$ is in DNF).
    To overcome this we combine the polynomials into a single function.
    
    Each of the~$P_i \in \Z[X,Y]$, $i \in \{1, \ldots, s\}$, is a polynomial and thus continuous.
    The maximum over a finite number of continuous functions is again continuous, so
    \begin{align*}
        P_{\max} : [-C,C]^n \times [-D,D]^m &\to \R \\
        (x,y) &\mapsto \max\limits_{i=1,\ldots,s} \{P_i(x,y)\}
    \end{align*}
    is continuous.
    It follows from~\eqref{eqn:disjunction_non_negative} that for our fixed counterexample~$x^*$ and all $y \in [-D,D]^m$ it holds that
    \begin{equation}
        \label{eqn:maximum_non_negative}
        P_{\max}(x^*,y) \geq 0
        \text{.}
    \end{equation}
    We want to argue about the value of~$P_{\max}$ at points~$x$ in a small neighborhood around~$x^*$.
    To this end, we consider the function
    \begin{align*}
        P^* : [-C,C]^n &\to \R \\
        x &\mapsto \min\limits_{y \in [-D,D]^m} \  P_{\max}(x,y)
        \text{,}
    \end{align*}
    which eliminates the dependency on~$y$.
    The sets~$[-C,C]^n$ and~$[-D,D]^m$ are compact, so by \cref{lem:continuity_of_compact_minimum} function~$P^*$ is again continuous.
    From~\eqref{eqn:maximum_non_negative}, we get for our fixed counterexample~$x^*$ that
    \[
        P^*(x^*) \geq 0
        \text{.}
    \]
    By the continuity of~$P^*$, for every~$\eps > 0$ there exists a~$\delta > 0$ such that for all $x \in [-C,C]^n$ with~$\norm{x^* - x} < \delta$ we have $\abs{P^*(x) - P^*(x^*)} < \eps$.
    We choose~$\eps < 1/D^2$ and conclude that for a sufficiently small~$\delta > 0$ and all~$x \in [-C,C]^n$ with~$\norm{x^* - x} < \delta$ it holds that
    \[
        P^*(x) > -\frac{1}{D^2}
        \text{.}
    \]
    Fix one such~$x$.
    Going backwards through our chain of defined functions, it follows for all~$y \in [-D,D]^m$ that $P_{\max}(x,y) > -1/D^2$ and moreover that
    \begin{equation}
        \label{eqn:lower_bound_on_polynomials}
        \bigvee_{i = 1}^s P_i(x, y) > -\frac{1}{D^2}
        \text{.}
    \end{equation}
    Now also fix an arbitrary~$y \in [-D,D]^m$ and choose~$j \in \{1, \ldots, s\}$ such that $P_j(x,y) > -1/D^2$.
    Because $A := (P_j(X,Y) < 0)$ is an atom in the DNF of~$\varphi_<(X,Y)$, there is a corresponding atom $A' := (Z_j^2P_j(X,Y) + 1 < 0)$ in the DNF of~$\varphi_<'(X,Y,Z)$.
    Recall that~$Z_j$ is an existentially quantified variable that only appears in~$A'$.
    Note that~$A'$ can never be true for~$Z_j = 0$.
    For~$Z_j \neq 0$, the atom~$A'$ can be rewritten as $P_j(X,Y) < -1/Z_j^2$.
    From~$Z_j \in [-D,D]$, we get that~$Z_j^2 \leq D^2$ and therefore our considered atom~$A'$ can only ever be satisfied, if~$P_j(X,Y) < -1/Z_j^2 \leq -1/D^2$.
    However, by the choice of~$j$ and \eqref{eqn:lower_bound_on_polynomials}, we know that $P_j(x, y) > -1/D^2$.
    Thus, because~$y$ was fixed arbitrarily,~$x$ must be a counterexample of~$\Phi_3$.
    Additionally, because~$x \in \R^n$ with~$\norm{x^* - x} < \delta$ was arbitrary, we conclude that all such~$x$ are counterexamples of~$\Phi_3$ forming an~$n$-dimensional open ball.
\end{proof}

\section{\texorpdfstring{\UER[<]}{UE<R}-Hardness of \Hausdorff}
\label{sec:hausdorff_hardness}

We are now able to show \UER[<]-hardness.

\begin{theorem}
    \label{thm:hausdorff_strict_uetr_hard}
    \Hausdorff and \DirectedHausdorff are \UER[<]-hard.
\end{theorem}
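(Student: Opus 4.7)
My plan is a polynomial-time many-one reduction from \StrictUETR. Given an instance $\Phi$ of \StrictUETR, I first apply \cref{thm:open_ball_counterexamples} to obtain in polynomial time an equivalent sentence
\[
    \Psi \dequiv
    \forall X \in [-1,1]^n \sepdot
    \exists Y \in [-1,1]^\ell :
    \psi(X,Y),
\]
with the property that $\counterexample{\Psi}$ is empty when $\Psi$ is true and contains an $n$-dimensional open ball when $\Psi$ is false. From $\Psi$ I build a \Hausdorff instance $(A, B, t)$ by scaling the universal block by a large integer $C = \cramped{2^{2^K}}$, where $K$ is polynomial in $\abs{\Psi}$ and is fixed below:
\[
    A := \{(Cx, y) \in \R^{n+\ell} \mid x \in [-1,1]^n,\ y \in [-1,1]^\ell,\ \psi(x,y)\},
    \qquad
    B := [-C,C]^n \times \{0\}^\ell.
\]
Both sets are semi-algebraic with description of polynomial length: $A$ is obtained from $\psi$ by substituting $X_i \mapsto X_i/C$ and clearing denominators by multiplying every atom with a sufficiently high power of $C$, while $B$ is a product of intervals and singleton zero sets. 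I set $t := \lceil\sqrt{\ell}\,\rceil$.

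For the yes case, if $\Psi$ is true then every $x \in [-1,1]^n$ admits some $y(x) \in [-1,1]^\ell$ with $\psi(x,y(x))$. The projection of $A$ onto its first $n$ coordinates therefore equals $[-C,C]^n$, and the last $\ell$ coordinates of any point in $A\cup B$ lie in $[-1,1]^\ell$. Pairing $(Cx, 0) \in B$ with $(Cx, y(x)) \in A$ gives $\dDH(B, A) \leq \sqrt{\ell}$, and pairing $(Cx, y) \in A$ with $(Cx, 0) \in B$ gives $\dDH(A, B) \leq \sqrt{\ell}$; hence $\dH(A,B) \leq \sqrt{\ell} \leq t$.

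For the no case, $\counterexample{\Psi}$ contains an open ball of some radius $r > 0$ centered at a point $x^* \in [-1,1]^n$. Then $b^* := (Cx^*, 0) \in B$, and every $(Cx, y) \in A$ has $x \notin \counterexample{\Psi}$, so $\norm{x - x^*} \geq r$ and hence $\norm{b^* - (Cx, y)}^2 \geq C^2 r^2$. Thus $\dH(A,B) \geq Cr$. To arrange $Cr > t$, I apply \cref{lem:lower_bound_leading_eps} to the sentence asserting that $\counterexample{\Psi}$ contains an open ball of radius $\eps$, which is an $\exists\forall$-sentence over the reals whose quantifier-free part is essentially a polynomial-size rewrite of $\neg\psi$. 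This yields a doubly exponential lower bound on the largest admissible $\eps$, and choosing $K$ (still polynomial in $\abs{\Psi}$) so that $C = \cramped{2^{2^K}}$ exceeds $(t+1)/r$ makes $Cr > t$.

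The main obstacle in this plan is keeping $K$ polynomial in $\abs{\Psi}$ while forcing $C$ to overwhelm the potentially doubly-exponentially small radius $r$; this requires the quantitative bound of \cref{lem:lower_bound_leading_eps} rather than a naive appeal to the Ball Theorem, and is exactly the place where the preparation step \cref{thm:open_ball_counterexamples} pays off, since the bound applies to the $\neg\psi$ sentence whose size is polynomially controlled. The resulting gap $Cr/t = \cramped{2^{2^{\Omega(n)}}}$ is also the source of the inapproximability claimed in \cref{cor:inapproximability}. Finally, the witness in the no case lies in $B$ and is far from every point of $A$, so in fact $\dDH(B,A) \geq Cr > t$; combined with $\dDH(A,B), \dDH(B,A) \leq \sqrt{\ell} \leq t$ in the yes case, the very same reduction establishes \UER[<]-hardness of \DirectedHausdorff.
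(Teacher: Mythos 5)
Your overall strategy matches the paper's: preprocess via \cref{thm:open_ball_counterexamples}, blow up the universal block by a doubly exponential factor $C=\cramped{2^{2^K}}$ chosen through \cref{lem:lower_bound_leading_eps}, and compare against a threshold of order $\sqrt{\ell}$. However, there is a genuine gap in the claim that $A$ and $B$ have descriptions of polynomial length. The constant $C=\cramped{2^{2^K}}$ with $K=\poly(\abs{\Psi})$ needs $2^K$ bits when written in binary, so neither \enquote{substituting $X_i \mapsto X_i/C$ and clearing denominators by multiplying every atom with a sufficiently high power of $C$} nor even writing the box $[-C,C]^n$ in the definition of $B$ can be done explicitly in polynomial time. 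This is precisely why the paper introduces the repeated-squaring formula $\chi(U)$ of \eqref{eqn:double_exponetial_small_value} (cf.\ \cref{lem:double_exponential_small_value}): it adds $N+1$ auxiliary coordinates, pinned to the values $\cramped{2^{-2^i}}$, to \emph{both} sets, and replaces division by $C$ with multiplication by the variable $U_N$, so that the doubly exponential scale is encoded implicitly by a polynomial-size formula. Without this (or an equivalent) device your reduction is not polynomial-time, and there is no easy workaround such as shrinking $t$ instead, since a threshold like $\cramped{2^{-2^K}}$ would again require exponentially many bits.

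A second, smaller gap: if $\Psi$ is false it can happen that $\psi(x,y)$ holds for no $(x,y)$ at all, in which case your $A$ is empty, while the \Hausdorff (and \DirectedHausdorff) problem is defined only for non-empty semi-algebraic sets, so the map would not be a valid many-one reduction. The paper guards against this by replacing $\psi'$ with $\psi'(X,Y) \lor \bigwedge_{i=1}^n X_i = 0$, which forces $\{0\}^n \times [-1,1]^\ell \subseteq A$ while shrinking $\counterexample{\Psi}$ only by the single point $\vec{0}$, so an open ball of counterexamples survives. The remainder of your argument (the yes-case distance bound $\sqrt{\ell}\leq t$, the no-case lower bound $Cr>t$ via \cref{lem:lower_bound_leading_eps} applied to the sentence asserting a ball of counterexamples, and the observation that the same witness gives hardness for \DirectedHausdorff) is sound and coincides with the paper's reasoning once these two repairs are made.
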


\begin{proof}
    Let~$\Phi$ be an instance of \StrictUETR.
    We give a polynomial-time many-one reduction to an equivalent \Hausdorff instance.
    The proof is split into two parts:
    In the first part, we transform~$\Phi$ into an equivalent \UETR instance~$\Psi$ whose counterexamples~$\counterexample{\Psi}$ contain an open ball (if there are any).
    Sentence~$\Psi$ is then used to construct a \Hausdorff instance~$(A,B,t)$.
    The second part then proves that~$\Phi$ and~$(A,B,t)$ are indeed equivalent.
    
    \subparagraph*{Constructing \Hausdorff instance~$(A,B,t).$}
    The first step is to apply \cref{thm:open_ball_counterexamples} to~$\Phi$ and to obtain in polynomial time an equivalent \UETR instance
    \[
        \Psi' \wdequiv
        \forall X \in [-1,1]^n \sepdot
        \exists Y \in [-1,1]^m :
        \psi'(X,Y)
        \text{,}
    \]
    where~$\psi' \in \QFF$.
    We know that either~$\counterexample{\Psi'} = \emptyset$ (if~$\Psi'$ is true) or that $\counterexample{\Psi'}$ contains an $n$-dimensional open ball (if~$\Psi'$ is false).
    Based on~$\Psi'$ we define
    \begin{align*}
        \psi(X,Y) &\wdequiv \psi'(X,Y) \lor \bigwedge_{i=1}^n X_i = 0
        \quad \text{and} \\
        \Psi &\wdequiv
        \forall X \in [-1,1]^n \sepdot
        \exists Y \in [-1,1]^m :
        \psi(X,Y)
        \text{.}
    \end{align*}
    Note that~$\Psi'$ and~$\Psi$ are equivalent:
    If~$\Psi'$ is true, then obviously~$\Psi$ is also true because the new condition is added using a logical \enquote{or}.
    If~$\Psi'$ is false, then $\counterexample{\Psi} = \counterexample{\Psi'} \setminus \{\vec{0}\}$.
    Since~$\counterexample{\Psi'}$ contained an open ball, it follows that $\counterexample{\Psi}$ also contains an open ball.
    The key idea behind the definition of~$\Psi$ is that~$\counterexample{\Psi}$ is guaranteed to be a strict subset of~$\R^n$.
    This will be important below to make sure that set~$A$ (of the \Hausdorff instance we define below) is non-empty.
    
    If~$\Psi$ is false, then there is an~$x \in \counterexample{\Psi} \subseteq [-1,1]^n$, such that $B_n(x, r) \subseteq \counterexample{\Psi}$ for some~$r > 0$.
    Expressed as a sentence in the first-order theory of the reals we get
    \[
        \exists r > 0, X \in [-1,1]^n \sepdot
        \forall \til{X} \in [-1,1]^n, Y \in [-1,1]^m :
        \norm{X - \til{X}}^2 < r^2 \implies \neg\psi(\til{X}, Y)
        \text{.}
    \]
    Let us denote by~$L$ the length of the quantifier-free part of this sentence.
    We see that~$L$ is clearly polynomial in~$\abs{\Psi}$ which by construction is polynomial in~$\abs{\Phi}$.
    Above sentence has the form required by \cref{lem:lower_bound_leading_eps}, and we get a constant~$\beta \in \R$ such that the following lower bound for~$r$ can be assumed:
    \begin{equation}
        \label{eqn:lower_bound_radius}
        r \geq 2^{-L^{\beta^4 n (n+m)}}
    \end{equation}
    Let~$N = \bigl\lceil \beta^4 n (n+m) \bigr\rceil$ be the smallest integer, such that
    \begin{align}
        \label{eqn:definition_of_N}
        r \cdot 2^{2^N} > m
        \text{.}
    \end{align}
    By~\cref{eqn:lower_bound_radius}, it holds that $N \leq \poly(n, m, \log(L)) \leq \poly(\abs{\Phi})$.
    Define~$C := \cramped{2^{2^N}}$.
    
    The idea now is to scale the universally quantified variables by a factor of~$C$ (so that they are from the interval~$[-C,C]$).
    This then also scales the set of counterexamples~$\counterexample{\Psi}$ by~$C$ and in particular the radius of the open ball in~$\counterexample{\Psi}$.
    Let $U = (U_0, \ldots, U_N) \in [-1,1]$ be $N+1$ new variables and~$\chi(U)$ be formula~\eqref{eqn:double_exponetial_small_value}.
    Recall that by \cref{lem:double_exponential_small_value}, for~$u \in [-1,1]^{N+1}$ we have~$\chi(u)$ if and only if $u_i = \cramped{2^{-2^i}}$.
    With this, we define
    \[
        \phi(X,Y,U) \wdequiv \chi(U) \land \psi(U_N X, Y)
        \text{,}
    \]
    where $U_N X$ means that every occurrence of~$X_i$ in~$\psi$ is replaced by~$U_N X_i$.
    Finally we are ready to define our desired \Hausdorff instance:
    \begin{align*}
        A\ &:=\ \{
            (x,y,u) \in [-C,C]^n \times
            [-1,1]^{m} \times
            \{2^{-2^0}\} \times \ldots \times \{2^{-2^N}\} \mid
            \phi(x,y,u)
        \} \\
        B\ &:=\
            [-C,C]^n \times
            \{0\}^m \times
            \{2^{-2^0}\} \times \ldots \times \{2^{-2^N}\} \\
        t\ &:=\ m
    \end{align*}
    Note that this is well-defined, because both sets~$A$ and~$B$ are non-empty.
    While this is trivial for~$B$, it holds for~$A$ by our construction of~$\phi$ from~$\Phi$:
    It always holds that
    \[
        \emptyset
        \ \neq\
        \{0\}^{n} \times [-1,1]^m \times \{2^{-2^0}\} \times \ldots \times \{2^{-2^N}\}
        \ \subseteq\ 
        A
        \text{.}
    \]
    
    \subparagraph*{Equivalence of~$\Phi$ and~$(A,B,t).$}
    To see that~$\Phi$ and~$(A,B,t)$ are equivalent, assume first that~$\Phi$ is true.
    For every point~$a := (x, y, u) \in A$ it must hold that~$u_i = \cramped{2^{-2^i}}$ as this is necessary to satisfy~$\chi(u)$.
    Consider the point $b := (x, \{0\}^n, u) \in B$.
    We get
    \[
        \norm{a - b}
        = \norm{(x,y,u) - (x,\{0\}^m,u)}
        = \norm{y - \vec{0}}
        \leq \textstyle{\sqrt{\sum_{i=1}^m 1}}
        = \sqrt{m}
        \leq m
        = t
        \text{.}
    \]
    As~$a$ was chosen arbitrarily, we get an upper bound for the directed Hausdorff distance $\dDH(A,B) \leq t$.
    On the other hand, consider an arbitrary point~$b := (x, \{0\}^m, u) \in B$.
    Because~$\Phi$ (and therefore~$\Psi$) is true, there is some $y \in [-1,1]^m$ such that there is a point $a := (x, y, u) \in A$.
    By the same calculation as above, we get $\dDH(B,A) \leq t$ and thus
    \begin{equation}
        \label{eqn:hausdorff_upper_bound}
        \dH(A,B) \leq t
        \text{.}
    \end{equation}
    
    Now assume that~$\Phi$ and~$\Psi$ are false.
    Then there is some~$x \in [-1,1]^n$ such that there is an $n$-dimensional open ball~$B_n(x, r) \subseteq \counterexample{\Psi}$ (the~$r$ here is the one from \eqref{eqn:lower_bound_radius}).
    By the construction of~$A$, this corresponds to an open ball of radius~$C \cdot r$ in~$\R^n \setminus A$.
    Let~$x^*$ be the center of this open ball in~$\R^n \setminus A$.
    Then for~$b := (x^*, \{0\}^m, u) \in B$ all points~$a \in A$ have
    \[
        \norm{a - b}
        \geq C \cdot r
        > m
        = t
        \text{.}
    \]
    It follows that
    \begin{equation}
        \label{eqn:hausdorff_lower_bound}
        \dH(A,B)
        \geq \dDH(B,A)
        \geq \norm{a - b}
        > t.
    \end{equation}
    Equations~\eqref{eqn:hausdorff_upper_bound} and~\eqref{eqn:hausdorff_lower_bound} prove that $\dH(A,B) \leq t$ (and also $\dDH(B,A) \leq t$) if and only if~$\Phi$ is true.
\end{proof}

In the proof of \cref{thm:hausdorff_strict_uetr_hard}, we could choose~$N':=N+1$ instead of~$N$ in \cref{eqn:definition_of_N}.
Then in the case that~$\Phi$ is false, the Hausdorff distance~$\dH(A,B)$ is at least
\[
    2^{2^{N+1}} r
    > 2^{2^{N+1}-2^N} m
    = 2^{2^N} m
    = 2^{2^N} t
    \text{.}
\]
Note that the number of free variables in the formulas describing the resulting sets~$A$ and~$B$ equals $n + m + N' +1 = \Theta(N)$.
We created a gap of size~$\cramped{2^{2^{\Theta(N)}}}$.
This implies the following inapproximability result.

\approx*

Another interesting observation is that we can restrict the sets~$A$ and~$B$ to be described by (relatively) simple formulas:

\simplicity*

\begin{proof}
    Taking the formula $\psi$ in the proof of \cref{thm:hausdorff_strict_uetr_hard}, we apply \mbox{\cref{lem:tseitin}\ref{itm:tseitin_quadratic_equations}} to obtain an equivalent new formula~$\psi'$ (with additional existentially quantified variables) which is a conjunction of quadratic polynomial equations.
    Then set~$A$ can be defined using~$\psi'$ instead of~$\psi$.
    Set~$B$ can be trivially described in the desired form.
    This shows statement \ref{itm:simplicity_quadratic_equations}.

    For \ref{itm:simplicity_single_equation}, we modify the above procedure by applying \mbox{\cref{lem:tseitin}\ref{itm:tseitin_single_equation}} to~$\psi$ to obtain an equivalent formula which is a single polynomial of degree at most four.
\end{proof}

\section{\texorpdfstring{\UER[<]}{UE<R}-Membership of \Hausdorff}
\label{sec:hausdorff_membership}

This section is devoted to show the following \lcnamecref{thm:hausdorff_membership}.

\begin{theorem}
    \label{thm:hausdorff_membership}
    \Hausdorff and \DirectedHausdorff are contained in \UER[<].
\end{theorem}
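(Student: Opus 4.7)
The plan is to construct a polynomial-time many-one reduction from \Hausdorff (and \DirectedHausdorff) to \StrictUETR. Since $\dH(A,B) \leq t$ is the conjunction of the two directed inequalities $\dDH(A,B) \leq t$ and $\dDH(B,A) \leq t$, and the conjunction of two \StrictUETR sentences is again \StrictUETR (after merging quantifier blocks on disjoint variables), it suffices to handle a single directed inequality. I first apply \cref{lem:tseitin}\ref{itm:tseitin_single_equation} to $\varphi_A$ and $\varphi_B$, obtaining polynomials $F_A, F_B$ such that $A = \{a \mid \exists z_A : F_A(a,z_A) = 0\}$ and $B = \{b \mid \exists z_B : F_B(b,z_B) = 0\}$. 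The standard $\varepsilon$-relaxation together with the substitution $\varepsilon := 1/(1+\gamma^2)$, which ranges over $(0,1]$ and is sufficient by monotonicity of the constraint in $\varepsilon$, turns $\dDH(A,B) \leq t$ into the equivalent sentence
\[
    \Phi_1 \dequiv \forall a, z_A, \gamma \sepdot \exists b, z_B : F_A(a,z_A)^2 > 0 \lor \bigl( F_B(b,z_B)^2 = 0 \land (1+\gamma^2)\norm{a-b}^2 < (1+\gamma^2)t^2 + 1 \bigr),
\]
where the hypothesis ``$a \in A$'' has been converted into the strict atom $F_A^2 > 0$ by the implication-to-disjunction transformation. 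The formula $\Phi_1$ is negation-free, and every atom involving a universally quantified variable is a strict inequality (namely $F_A^2 > 0$ and the distance atom); the only non-strict atom $F_B^2 = 0$ depends solely on existential variables. Hence $\Phi_1$ is \UStrict.

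Next I apply \cref{lem:bounding_universal_range} to bound the universal block $(a, z_A, \gamma)$ to $[-C,C]^{\ast}$ and then \cref{lem:bounding_existential_range}, applicable precisely because $\Phi_1$ is \UStrict, to bound the existential block $(b, z_B)$ to $[-D,D]^{\ast}$. This yields an equivalent bounded-range formula $\Phi_3$ whose sole non-strict atom is still $F_B^2 = 0$. To remove it I introduce a fresh universal variable $\mu$, slightly enlarge the universal box to $[-C',C']$ with $C' := C+1$, and replace $F_B^2 = 0$ by the strict atom $(1+\mu^2) F_B^2 < 1$, producing a candidate \StrictUETR formula $\Phi_4$. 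The bounded ranges of $\Phi_4$ are then encoded into a genuine \StrictUETR sentence using the $\chi(U)$ gadget from \eqref{eqn:double_exponetial_small_value}: universal range constraints are placed in the hypothesis of an implication, so the defining equalities become strict $\neq$ atoms under negation, and existential range constraints are encoded as strict open-interval constraints after a harmless enlargement of $D$ by a constant factor.

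The main obstacle is to prove the equivalence $\Phi_3 \equiv \Phi_4$. The forward direction is immediate since $F_B = 0$ gives $(1+\mu^2) F_B^2 = 0 < 1$ for every $\mu$. For the converse I fix $(a, z_A, \gamma) \in [-C,C]^{\ast}$ with $F_A(a, z_A) = 0$ (otherwise $\Phi_3$ is witnessed trivially) and apply $\Phi_4$ at the shifted coordinate $\gamma' := \gamma + 1 \in [-C', C']$ for $\mu = 1, 2, 3, \ldots$. This produces a sequence $(b_\mu, z_{B,\mu}) \in [-D, D]^{\ast}$ with $F_B(b_\mu, z_{B, \mu})^2 < 1/(1+\mu^2) \to 0$ and $\norm{a - b_\mu}^2 < t^2 + 1/(1+(\gamma')^2)$. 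Compactness of $[-D, D]^{\ast}$ yields a convergent subsequence with limit $(b^{\ast}, z_B^{\ast})$ satisfying $F_B(b^{\ast}, z_B^{\ast})^2 = 0$ and $\norm{a - b^{\ast}}^2 \leq t^2 + 1/(1+(\gamma')^2) < t^2 + 1/(1+\gamma^2)$, where the final strict inequality is preserved because the shift from $\gamma$ to $\gamma'$ buys a strict margin that survives taking the closure limit. Hence $(b^{\ast}, z_B^{\ast})$ witnesses $\Phi_3$ at $(a, z_A, \gamma)$. This $\gamma$-shift is the essential trick that converts the existential equation $F_B^2 = 0$ into a purely strict universal/existential statement.
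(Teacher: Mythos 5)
Your proposal is correct in outline but takes a genuinely different route from the paper at the crucial step. Both proofs start the same way: unfold $a \in A$ and $b \in B$ via \cref{lem:tseitin}, turn the membership hypothesis into the strict disjunct $F_A^2>0$, and observe that the resulting sentence is \UStrict so that \cref{lem:bounding_universal_range,lem:bounding_existential_range} bound all variable ranges (your substitution $\eps = 1/(1+\gamma^2)$ plays the role of the paper's $\delta^2\eps$ trick for enforcing $\eps>0$). The divergence is in how the remaining equation is eliminated. The paper converts the strict distance atom into an equation via a slack variable, folds everything into a single equation $H=0$, and replaces $H=0$ by $H^2 < 2^{-2^N}$, where the doubly exponentially small threshold is certified by quantifier elimination plus the Ball Theorem (\cref{lem:lower_bound_leading_eps} inside \cref{lem:special_uetr_to_strict_uetr}); it then invokes the D'Costa et al.\ completeness result to pass from the bounded boxes to a genuine \StrictUETR sentence. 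You instead replace $F_B^2=0$ by the universally parameterised strict family $(1+\mu^2)F_B^2<1$ and prove equivalence by a compactness/limit argument, using a shift of $\gamma$ to buy a strict margin that survives the limit. This is more elementary and self-contained (no explicit threshold, no external black box), at the price of redoing the bounded-box bookkeeping by hand in the final encoding.

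Two steps need repair, though both are fixable. First, the shift $\gamma' := \gamma+1$ does not give $1/(1+(\gamma')^2) < 1/(1+\gamma^2)$ when $\gamma \le -1/2$ (for $\gamma=-1$ one gets $\gamma'=0$ and the margin goes the wrong way); you need $(\gamma')^2 > \gamma^2$, e.g.\ $\gamma' := \abs{\gamma}+1$. Second, you derive the existential bound $D$ from \cref{lem:bounding_existential_range} for the universal box $[-C,C]$ and only afterwards enlarge the $\gamma$-range to $[-C',C']$ in $\Phi_4$; but for $\gamma$ in the margin the distance constraint is strictly tighter than anywhere in $[-C,C]$, so the guaranteed witnesses in $[-D,D]$ need not exist there, and the forward direction $\dDH(A,B)\le t \Rightarrow \Phi_4$ is not justified as written. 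The fix is to reorder: enlarge the universal box first (say to $2^{2^{N+1}}$) and then apply \cref{lem:bounding_existential_range}, so that $D$ is valid on the whole enlarged box. With these repairs (and noting that in the final encoding the compactness argument runs over the closed enlarged existential box, so the limit point witnesses the unbounded sentence rather than $\Phi_3$ literally, which suffices for the equivalence chain), your argument goes through.
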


Note that \UER-membership is already shown by Dobbins et al.~\cite{Dobbins2018_AreaUniversality}.
The remainder of this section deals with reformulating a given \Hausdorff instance into a \StrictUETR instance, thereby proving \UER[<]-membership.

Let~$(A,B,t)$ be a \Hausdorff instance, where $A = \{x \in \R^n \mid \varphi_A(x)\}$ and $B = \{x \in \R^n \mid \varphi_B(x)\}$ are described by quantifier-free formulas~$\varphi_A$ and~$\varphi_B$ with~$n$ free variables each.
For simplicity, we only consider the directed Hausdorff distance here, namely the question whether
\[
    \dDH(A,B) :=
    \adjustlimits \sup_{a \in A} \inf_{b \in B} \, \norm{a - b} \stackrel{?}{\leq} t
    \text{.}
\]
It is obvious, that $\dH(A,B) \leq t$ if and only if $\dDH(A,B) \leq t$ and $\dDH(B,A) \leq t$.
So if we can formulate the decision problem for the directed Hausdorff distance as a \StrictUETR instance, their conjunction is a formula for the general \Hausdorff problem.
Assuming that no variable name appears in both operands of this conjunction, this formula can be converted into prenex normal form by just moving the quantifiers to the front.

From the definition we get that $\dDH(A,B) \leq t$ is equivalent to
\begin{equation}
    \label{eqn:hausdorff_as_uetr}
    \forall \eps > 0, a \in A \sepdot
    \exists b \in B :
    \norm{a - b}^2 < t^2 + \eps
    \text{.}
\end{equation}
Let us remark that introducing the real variable~$\eps$ is necessary to also consider the points in the closure of~$B$.
Moreover, we work with the squared distance between~$a$ and~$b$, because~$\norm{a - b}$ is the square root of a polynomial.

Below we transform formula \eqref{eqn:hausdorff_as_uetr} in multiple technical steps into a form that allows us to apply a recent theorem by D'Costa et al.~\cite{DCosta2021_EscapeProblem} such that \UER[<]-membership follows.
Before we do so, we state a few helpful lemmas.
These allow us to consider some of the intermediate steps in isolation, thereby simplifying the needed notation.
Also, \cref{lem:universal_epsiolon_to_special_uetr} below is used again in \cref{sec:exotic_quantifiers}.

The first \lcnamecref{lem:special_uetr_to_strict_uetr} allows us to transform a \UETR instance of special structure into an equivalent \StrictUETR instance:

\begin{lemma}
    \label{lem:special_uetr_to_strict_uetr}
    Given a \UETR instance
    \[
        \Phi \wdequiv
        \forall X \in [-1,1]^n \sepdot
        \exists Y \in [-1,1]^m :
        \varphi_<(X,Y) \lor H(X,Y) = 0
        \text{,}
    \]
    where~$\varphi_<(X,Y) \in \QFF[<]$ and~$H : [-1,1]^{n+m} \to \R$ is a polynomial.
    Then we can compute in polynomial time an equivalent \StrictUETR instance.
\end{lemma}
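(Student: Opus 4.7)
The plan is to transform $\Phi$ into a \StrictUETR sentence in three equivalence-preserving steps: first relax the equation $H = 0$ to a strict inequality guarded by a fresh universal $\eps$, then eliminate the guard $\eps > 0$ by a rational reparametrization, and finally remove the bounded quantifier ranges by the rational parametrization $t \mapsto 2t/(1+t^2)$ of $[-1,1]$.

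For the first step I would show that $\Phi$ is equivalent to
\[
    \Phi_1 \wdequiv
    \forall X \in [-1,1]^n, \eps > 0 \sepdot
    \exists Y \in [-1,1]^m :
    \varphi_<(X,Y) \lor \bigl(-\eps < H(X,Y) \land H(X,Y) < \eps\bigr).
\]
The direction $\Phi \Rightarrow \Phi_1$ is immediate because $H(X,Y) = 0$ implies $\abs{H(X,Y)} < \eps$ for every $\eps > 0$. For the converse, fix $X \in [-1,1]^n$; if some $Y \in [-1,1]^m$ already satisfies $\varphi_<(X,Y)$ it witnesses $\Phi$, so assume not. Then the witness $Y_k$ provided by $\Phi_1$ at $\eps = 1/k$ must satisfy $\abs{H(X,Y_k)} < 1/k$. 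Compactness of $[-1,1]^m$ yields a convergent subsequence $Y_{k_j} \to Y^*$, and continuity of the polynomial $H$ forces $H(X,Y^*) = 0$, so $Y^*$ witnesses $\Phi$.

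Next, I would eliminate the guard $\eps > 0$ by substituting $\eps := 1/(1+\eps'^2)$ with a fresh unrestricted $\eps' \in \R$. This map surjects $\R$ onto $(0,1]$, and since a witness for the body of $\Phi_1$ at some $\eps$ remains a witness for every larger $\eps$, quantifying over $(0,1]$ is equivalent to quantifying over $(0,\infty)$. Multiplying the two relaxed atoms through by the strictly positive $1+\eps'^2$ converts them into the strict polynomial atoms $H(X,Y)(1+\eps'^2) > -1$ and $H(X,Y)(1+\eps'^2) < 1$. In the same spirit I would substitute $X_i := 2X_i'/(1+X_i'^2)$ and $Y_i := 2Y_i'/(1+Y_i'^2)$ with fresh $X' \in \R^n$ and $Y' \in \R^m$. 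Since $t \mapsto 2t/(1+t^2)$ is a surjection from $\R$ onto $[-1,1]$, the bounded quantifiers become unrestricted over $\R$. Clearing the resulting rational functions atom-by-atom by multiplying through with appropriate powers of the strictly positive polynomials $1+X_i'^2$ and $1+Y_i'^2$ preserves both the strictness and the direction of each atom. The outcome is a sentence of the form $\forall X' \in \R^n, \eps' \in \R \sepdot \exists Y' \in \R^m : \tilde{\varphi}(X',Y',\eps')$ with $\tilde{\varphi} \in \QFF[<]$, i.e., a \StrictUETR instance, and every substitution increases the formula length by only a polynomial factor.

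The main obstacle is justifying the first step: the backward implication $\Phi_1 \Rightarrow \Phi$ genuinely needs compactness of $[-1,1]^m$ and continuity of $H$, and one must carefully separate the case in which some $Y$ already satisfies $\varphi_<$ from the limiting argument that recovers the equation. Once the equation has been replaced by strict relaxations, the remaining transformations are purely syntactic: a uniform rational parametrization of the bounded ranges followed by clearing of strictly positive denominators.
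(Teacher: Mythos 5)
Your first two steps are sound, and the compactness/continuity argument for $\Phi_1 \Rightarrow \Phi$ is essentially the same mechanism the paper uses (there it appears in contrapositive form: if $\Phi$ fails, $H(x,\cdot)^2$ attains a positive minimum on the compact cube). The divergence, and the gap, is in the final step. The paper's conventions require every atom to be a polynomial written as a sum of monomials, and the formula length is measured in that representation. Substituting $X_i := 2X_i'/(1+X_i'^2)$, $Y_i := 2Y_i'/(1+Y_i'^2)$ and clearing the denominators can blow up the number of monomials exponentially: already for the atom $X_1 + \cdots + X_n < 0$ you obtain
\[
    \sum_{i=1}^n 2X_i' \prod_{j \neq i} \bigl(1 + X_j'^2\bigr) < 0
    \text{,}
\]
which has $n \cdot 2^{n-1}$ monomials after expansion. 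So the claim that \enquote{every substitution increases the formula length by only a polynomial factor} is false under the paper's encoding, and the reduction as described is not polynomial time. The obvious repair --- introducing auxiliary variables $V_i$ with $V_i(1+X_i'^2) = 2X_i'$ --- is blocked because \StrictUETR forbids equations, so unbounding the quantifier ranges by a rational surjection is not a purely syntactic matter here.

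This is exactly the point where the paper takes a different route: it never re-parametrizes the domain. Instead it first replaces $H(X,Y)=0$ by $H(X,Y)^2 < 2^{-2^N}$ for an explicitly computable $N \leq \poly(\abs{\Phi})$ (obtained from quantifier elimination plus the Ball Theorem, \cref{lem:lower_bound_leading_eps}, rather than a universally quantified $\eps$), encodes the constant with the squaring gadget $\chi(U)$ using additional universally quantified variables guarded by $\neg\chi(U) \lor \ldots$, and then, having reached a sentence with all variables in $[-1,1]$ and only strict atoms, invokes the result of D'Costa et al.\ that deciding such bounded-range strict sentences is \UER[<]-complete, which supplies the polynomial-time reduction to a genuine \StrictUETR instance. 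To salvage your proof you would either need a polynomial-size, strict-inequality-only way to express the range restriction over $\R$, or you would have to fall back on that completeness result for the bounded case --- at which point your argument converges with the paper's.
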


\begin{proof}
    We first prove that there exists an integer~$N \leq \poly(\abs{\Phi})$, such that the \StrictUETR instance
    \[
        \Psi \wdequiv
        \forall X \in [-1,1]^n \sepdot
        \exists Y \in [-1,1]^m :
        \varphi_<(X,Y) \lor H(X,Y)^2 < 2^{-2^N}
    \]
    is equivalent to~$\Phi$.
    In a second step, we construct~$\cramped{2^{-2^N}}$ inside the formula.

    The direction~$\Phi \implies \Psi$ is trivially true for any~$N \in \N$.
    To prove the other direction, we show its contraposition $\neg\Phi \implies \neg\Psi$.
    Assume that
    \[
        \neg\Phi \wequiv
        \exists X \in [-1,1]^n \sepdot
        \forall Y \in [-1,1]^m :
        \neg\varphi_<(X,Y) \land H(X,Y)^2 > 0
    \]
    is true.
    Hence for at least one fixed~$x \in [-1,1]^n$ we obtain a polynomial~$H(x,Y)^2$ that is positive everywhere on~$[-1,1]^m$ (the fixed~$x$ values are real coefficients for the variables~$Y$).
    Because~$[-1,1]^m$ is compact and because polynomials are continuous, $H(x,Y)^2$ attains its minimum over~$[-1,1]^m$ and it follows that
    \begin{equation}
        \label{eqn:polynomial_at_least_eps}
        \exists \eps > 0 \sepdot
        \exists X \in [-1,1]^n \sepdot
        \forall Y \in [-1,1]^m :
        \neg\varphi_<(X,Y) \land H(X,Y)^2 \geq \eps
    \end{equation}
    is true.
    Let~$L$ be the length of the quantifier-free part in \eqref{eqn:polynomial_at_least_eps}.
    By \cref{lem:lower_bound_leading_eps} there is a constant~$\beta \in \R$ such that~$\exists \eps > 0$ in~\eqref{eqn:polynomial_at_least_eps} can be strengthened to $\exists\eps \geq \cramped{2^{-L^{\beta^{4}nm}}}$.
    Now choose~$N = \bigl\lceil \beta^{4}nm \bigr\rceil$ to be the smallest integer satisfying~$\cramped{2^{-2^N}} < \cramped{2^{-L^{\beta^{4}nm}}}$.
    Note that $N \leq \poly(n,m, \log L)$, so it is polynomial in the input size.
    Plugging in the lower bound on~$\eps$, we get that~$\neg\Phi$ is equivalent to
    \[
        \exists X \in [-1,1]^n \sepdot
        \forall Y \in [-1,1]^m :
        \neg\varphi_<(X,Y) \land H(X,Y)^2 \geq 2^{-2^N}
        \text{,}
    \]
    which is exactly~$\neg\Psi$.
    We conclude that~$\Phi \equiv \Psi$ as claimed.
    
    To construct a \StrictUETR instance from~$\Psi$, we need to express~$\cramped{2^{-2^N}}$ inside the formula.
    To this end, introduce~$N + 1$ new variables $U = (U_0, \ldots, U_N) \in [-1,1]^{N+1}$ and let~$\chi(U)$ be formula~\eqref{eqn:double_exponetial_small_value}.
    Recall that by~\cref{lem:double_exponential_small_value},~$\chi(u)$ is true if and only if~$u_i = \cramped{2^{-2^i}}$.
    Including~$\chi(U)$ into our formula, we conclude that
    \begin{equation}
        \label{eqn:bounded_strict_uetr}
        \forall X, U \in [-1,1]^{n + N + 1} \sepdot
        \exists Y \in [-1,1]^m :
        \neg\chi(U) \lor \neg\varphi_<(X,Y) \lor H(X,Y)^2 < U_N
    \end{equation}
    is equivalent to~$\Phi$.
    
    We arrived at a sentence where all variables are restricted to~$[-1,1]$ and in which all atoms are strict.
    At this point we use a recent result by D'Costa et al.~\cite{DCosta2021_EscapeProblem}.
    They show that it is \EUR[\leq]-complete to decide a sentence of the form
    \[
        \exists X \in [-1,1]^n \sepdot
        \forall Y \in [-1,1]^m :
        \varphi_\leq(X,Y)
    \]
    with~$\varphi_\leq \in \QFF[\leq]$.
    Because~$\EUR[\leq] = \co\UER[<]$, deciding the complements of these sentences, i.e., sentences of the form
    \[
        \forall X \in [-1,1]^n \sepdot
        \exists Y \in [-1,1]^m :
        \varphi_<(X,Y)
    \]
    with~$\varphi_< \in \QFF[<]$ is \UER[<]-complete.
    Sentence~\eqref{eqn:bounded_strict_uetr} is of this form.
    Thus there is a polynomial-time reduction to an equivalent \StrictUETR instance.
\end{proof}

The next lemma establishes an upper bound on the value of a polynomial over a compact domain.

\begin{lemma}
    \label{lem:polynomial_over_compact_domain}
    Let~$P : \R^n \to \R$ be a polynomial, $N$ be an integer and~$C := \cramped{2^{2^N}}$.
    Then we can compute in polynomial time an integer~$K \leq \poly(\abs{P}, N, n)$ such that for~$E := \cramped{2^{2^K}}$ and all~$x \in [-C,C]^n$ it holds that $\abs{P(x)} \leq E$.
\end{lemma}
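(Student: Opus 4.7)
The plan is to establish a simple closed-form upper bound on $\abs{P(x)}$ for $x \in [-C,C]^n$ and then read $K$ off from the exponent. First I would expand $P$ as a sum of monomials $P(X) = \sum_\alpha c_\alpha X^\alpha$ and extract three parameters from the input encoding: the number of nonzero monomials $s$, the total degree $d$, and the maximum absolute value of any coefficient $\norm{P}_\infty := \max_\alpha \abs{c_\alpha}$. By the encoding conventions in \cref{sec:fotr} (exponents counted in unary, integer coefficients encoded in binary), all three quantities are bounded by the input length: $s \leq \abs{P}$, $d \leq \abs{P}$, and $\log_2 \norm{P}_\infty \leq \abs{P}$.

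Next I would apply the triangle inequality. For any $x \in [-C,C]^n$ and any monomial $X^\alpha$ with $\abs{\alpha} \leq d$ appearing in $P$, we have $\abs{x^\alpha} \leq C^d = 2^{d \cdot 2^N}$, so
\[
    \abs{P(x)} \leq s \cdot \norm{P}_\infty \cdot C^d \leq 2^{\abs{P}} \cdot 2^{\abs{P}} \cdot 2^{\abs{P} \cdot 2^N} \leq 2^{3 \abs{P} \cdot 2^N}.
\]
It then suffices to choose $K$ large enough that $2^K \geq 3\abs{P} \cdot 2^N$, for example $K := N + \lceil \log_2(3\abs{P}) \rceil$. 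This $K$ is manifestly bounded by $\poly(\abs{P}, N, n)$ and is computable in polynomial time simply by reading off the degree, the number of monomials, and the length of the coefficients from the input.

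The statement is essentially a routine calculation and I do not anticipate any real obstacle. The only point requiring mild care is to respect the asymmetric encoding conventions — exponents in unary versus coefficients in binary — so that the bound on $K$ comes out polynomial in $\abs{P}$ rather than exponential in the degree. In particular, if exponents were encoded in binary, a single monomial of the form $X_1^{2^{\abs{P}}}$ would already force $d$ to be exponential in $\abs{P}$ and the above argument would give only a triply exponential bound on $\abs{P(x)}$, making this lemma fail; the unary-exponent convention is exactly what rescues the polynomial bound on $K$.
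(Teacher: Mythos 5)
Your proof is correct, but it takes a genuinely different and much more elementary route than the paper. The paper's proof reuses its standard toolbox: it writes the existence of the bound as a true sentence $\exists E \in \R \sepdot \forall X \in [-C,C]^n : \abs{P(X)} \leq E$ (justified by continuity and compactness), encodes $C = \cramped{2^{2^N}}$ inside the formula via $\chi(U)$ from \cref{lem:double_exponential_small_value}, eliminates the inner quantifiers with \cref{cor:single_exponential_quantifier_elimination}, and then applies the Ball Theorem (\cref{thm:ball_theorem}) to bound $E$, reading $K$ off the resulting exponent. Your direct estimate $\abs{P(x)} \leq s \cdot \norm{P}_\infty \cdot C^d$ with $s, d, \log_2 \norm{P}_\infty \leq \abs{P}$ (all valid under the paper's encoding conventions, since exponents are unary and coefficients binary) avoids this machinery entirely and even gives a sharper bound, $K = O(N + \log \abs{P})$, rather than a generic $\poly(\abs{P}, N, n)$; it is clearly computable in polynomial time, so it serves the application in \cref{lem:universal_epsiolon_to_special_uetr} just as well. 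The paper's heavier argument appears to be chosen for uniformity with its other lemmas, not out of necessity.

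One caveat on your closing remark: the claim that the lemma would \emph{fail} if exponents were encoded in binary is not accurate. In that case one only has $d \leq 2^{\abs{P}}$, which gives $C^d \leq \cramped{2^{2^{N+\abs{P}}}}$ --- doubly, not triply, exponential --- so $K = N + \abs{P} + O(1)$ would still satisfy the statement; the doubly exponential form $E = \cramped{2^{2^K}}$ is forgiving enough that the lemma holds under either convention, and the unary convention merely buys the smaller $K$. This slip is harmless, since your proof for the paper's actual encoding is complete and correct.
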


\begin{proof}
    Because~$P$ is a polynomial,~$\abs{P}$ is continuous and therefore~$\abs{P}$ attains its maximum over any compact domain.
    We conclude that
    \[
        \exists E \in \R \sepdot
        \forall X \in [-C,C]^n :
        \abs{P(X)} \leq E
    \]
    is true.
    Note that, strictly speaking, we may not use $\abs{\cdot}$ inside the formula.
    However, $\abs{P(X)} \leq E$ is equivalent to $P(X) \leq E \land -P(X) \leq E$.
    
    To obtain an upper bound on~$E$, we first need to encode~$C$ inside the formula.
    We introduce~$N + 1$ new variables $U = \{U_0, \ldots, U_N\}$ and let~$\chi(U)$ be formula~\eqref{eqn:double_exponetial_small_value}.
    Recall that by \cref{lem:double_exponential_small_value}, $\chi(u)$ is true if and only if~$u_i = \cramped{2^{-2^i}}$.
    Now we can rewrite the above formula equivalently as
    \[
        \exists E \in \R \sepdot
        \forall X \in \R^n \sepdot
        \exists U \in \R^{N+1} :
        \bigwedge\limits_{i = 1}^n \abs{X_i U_N} \leq 1 \implies \abs{P(X)} \leq E
        \text{.}
    \]
    Let~$\varphi(E)$ be the subformula following the quantification of~$E$ (starting from $\forall$) and~$L := \abs{\varphi(E)}$.
    Applying quantifier elimination (\cref{cor:single_exponential_quantifier_elimination}) to $\varphi(E)$, we obtain a constant~$\alpha \in \R$ and an equivalent, quantifier-free formula~$\psi(E)$ of length
    \[
        \abs{\psi(E)} \leq L^{\alpha^3n(N+1)}
        \text{.}
    \]
    The ball theorem (\cref{thm:ball_theorem}) applied to $\exists E \in \R : \psi(E)$ now yields an upper bound for~$E$:
    \[
        E \leq
        2^{\abs{\psi}^8} \leq
        2^{L^{8 \alpha^3 n (N+1)}} =
        2^{2^{8 \log(\abs{\psi}) \alpha^3 n (N+1)}}
    \]
    Choose~$K = \bigl\lceil 8 \log(\abs{\psi}) \alpha^3 n (N+1) \bigr\rceil$.
    Obviously, $K \leq \poly(\log \abs{\psi}, N, n)$.
    Since~$\abs{\psi} \leq \poly(\abs{P}, n, N)$, the claim follows.
\end{proof}

Above \lcnamecref{lem:polynomial_over_compact_domain} is used to prove the following \lcnamecref{lem:universal_epsiolon_to_special_uetr} that allows us to transform some more general \UETR instances into equivalent \StrictUETR instances.

\begin{lemma}
    \label{lem:universal_epsiolon_to_special_uetr}
    Given a \UETR instance
    \[
        \forall \eps > 0, X \in \R^n \sepdot
        \exists Y \in \R^m :
        F(X)^2 > 0 \lor \bigl(
            G(Y) = 0 \land P(X,Y) < \eps
        \bigr)
        \text{,}
    \]
    where~$F : \R^n \to \R$, $G : \R^m \to \R$ and $P : \R^{n+m} \to \R$ are polynomials.
    Then we can compute in polynomial time an equivalent \StrictUETR instance.
\end{lemma}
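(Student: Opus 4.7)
The plan is to reduce the given sentence to the form required by \cref{lem:special_uetr_to_strict_uetr} and then invoke that lemma.

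First, I replace the quantifier $\forall \eps > 0$ by one over a closed positive interval. Applying \cref{lem:lower_bound_leading_eps} to the \emph{negation} of the sentence shows that any failing value of $\eps$ admits a representative with $\eps \geq 2^{-2^M}$ for some $M \leq \poly(\abs{\varphi})$. Combined with the trivial monotonicity of the body in $\eps$ (the only $\eps$-atom, $P < \eps$, becomes looser as $\eps$ grows), this lets me replace $\forall \eps > 0$ by $\forall \eps \in [2^{-2^M}, 1]$.

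Second, I bound the remaining variables. \cref{lem:bounding_universal_range} yields $X \in [-C, C]^n$ with $C = 2^{2^N}$ and $N \leq \poly(\abs{\varphi})$. The body is then \UStrict in the sense required by \cref{lem:bounding_existential_range}: the only non-strict atom $G(Y) = 0$ depends solely on the existential block $Y$, while $F(X)^2 > 0$ and $P(X, Y) < \eps$ are strict in the universal variables. \cref{lem:bounding_existential_range} thus bounds $Y \in [-D, D]^m$, and \cref{lem:polynomial_over_compact_domain} bounds $\abs{P(X, Y)} \leq E$ on this compact domain.

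Third, I encode the conjunction $G(Y) = 0 \land P(X, Y) < \eps$ as a single polynomial equation. Introducing an auxiliary existential $Z$, I use the identity
\[
    G(Y) = 0 \land P(X, Y) < \eps \wequiv \exists Z > 0 : G(Y)^2 + \bigl(Z^2 - \eps + P(X, Y)\bigr)^2 = 0,
\]
which holds because a sum of squares vanishes precisely when each summand vanishes, and $Z > 0$ then forces $Z^2 = \eps - P > 0$. A second application of \cref{lem:lower_bound_leading_eps}, this time to the slack $\eps - P$, supplies a uniform lower bound $\eps - P \geq 2^{-2^{N''}}$ whenever the original existential witness exists, placing $Z$ in the compact positive interval $[2^{-2^{N''-1}}, \sqrt{1 + E}]$. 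Encoding the strict atom $Z > 0$ as a further polynomial equation $W^2 Z = 1$ (with $W$ bounded via the lower bound on $Z$) and collecting all equalities into $H := G^2 + (Z^2 - \eps + P)^2 + (W^2 Z - 1)^2$, I rescale every variable to $[-1, 1]$ using \cref{lem:double_exponential_small_value} and obtain
\[
    \forall X' \in [-1, 1]^{n'} : \exists Y' \in [-1, 1]^{m'} : \til F(X')^2 > 0 \lor \til H(X', Y') = 0.
\]
This is precisely the input shape of \cref{lem:special_uetr_to_strict_uetr}, whose application delivers the desired equivalent \StrictUETR instance.

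The main obstacle I anticipate is bounding $Z$ (and the derived $W$) away from zero. Without such a bound, the relaxation of $\til H = 0$ to a strict atom of the form $\til H^2 < 2^{-2^{N'}}$ produced inside \cref{lem:special_uetr_to_strict_uetr} could admit a spurious witness along a sequence $Z \to 0^+$, even when the original sentence is false on account of the strict inequality $P < \eps$ rather than $P \leq \eps$. The second invocation of \cref{lem:lower_bound_leading_eps}, applied to the slack $\eps - P$, is exactly what keeps $Z$ in a compact positive interval and rules out this boundary pathology.
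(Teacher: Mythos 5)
Your route is genuinely different from the paper's at the decisive step, and it is mostly viable: the paper keeps $\eps$ quantified over all of $\R$ (via $\eps>0\equiv\exists\delta:\delta^2\eps-1=0$), bounds all ranges while the sentence is still \UStrict, then uses the leading $\forall\eps>0$ to replace the strict atom $P<\eps$ by the non-strict $P\le\eps$ (a max--min/compactness argument), and finally writes $P\le\eps$ as $\exists B:P-\eps+B^2=0$, where $B$ only needs an \emph{upper} bound (\cref{lem:polynomial_over_compact_domain}). You instead keep the strict inequality, encode it by $Z^2=\eps-P$ together with $W^2Z=1$, and therefore must bound $Z$ away from $0$ and $W$ from above. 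Your handling of $\eps$ (restricting to $[2^{-2^M},1]$ via \cref{lem:lower_bound_leading_eps} applied to the negation, using that the failing values of $\eps$ are downward closed) is correct and arguably slicker than the paper's $\delta$-trick.

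The gap is in the sentence claiming that ``a second application of \cref{lem:lower_bound_leading_eps} \dots supplies a uniform lower bound $\eps-P\ge 2^{-2^{N''}}$ whenever the original existential witness exists.'' That lemma cannot be applied pointwise for fixed real $(\eps,x)$ (the coefficients are then not integers and the bound would not be uniform); it can only convert \emph{non-emptiness} of the set $S'=\{\sigma>0\mid\forall\eps,X\,\exists Y:\dots\land\,\eps-P(X,Y)\ge\sigma\}$ into a quantitative bound. The substantive claim --- that \emph{some} uniform positive slack exists when the sentence is true --- is exactly the crux of the whole lemma and is not supplied; it needs a compactness argument, e.g.\ that $(\eps,x)\mapsto\eps-\min\{P(x,y):y\in[-D,D]^m,\,G(y)=0\}$ is continuous (\cref{lem:continuity_of_compact_minimum}) and positive on the compact set of relevant $(\eps,x)$ with $F(x)=0$, hence bounded below by a positive constant. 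Note also that \cref{lem:bounding_existential_range} is not available for $Z,W$ because the new equation $Z^2-\eps+P=0$ involves universally quantified variables, so the sentence is no longer \UStrict --- which is precisely why the paper passes to $P\le\eps$ first. Finally, your stated motivation for the lower bound on $Z$ is backwards: \cref{lem:special_uetr_to_strict_uetr} is sound as a black box once all variables lie in $[-1,1]$ (if the instance is false, $\til{H}^2$ attains a positive minimum on the compact box; along $Z\to0^+$ with bounded $W$ one has $(W^2Z-1)^2\to1$, so no spurious witness arises), so false instances are never accepted. The real danger without the slack bound is the opposite direction: confining $W$ (equivalently $Z$) to any fixed compact range could turn a true instance into a false one, since the needed witness $W\approx(\eps-P)^{-1/4}$ is a priori unbounded. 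With the compactness argument added, your construction goes through.
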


\begin{proof}
    Via a series of manipulations, we transform the given sentence into an equivalent \UETR instance that has the form required by \cref{lem:special_uetr_to_strict_uetr}.
    The first step is to move the condition that~$\eps > 0$ into the formula.
    We obtain an equivalent sentence
    \begin{align*}
        \forall \eps \in \R, X \in \R^n :~
        &(\eps > 0) \implies \\
        &\bigl(
            \exists Y \in \R^n :
            F(X)^2 > 0 \lor (G(Y) = 0 \land P(X,Y) < \eps)
        \bigr)
        \text{.}
    \end{align*}
    Now we observe that~$\eps > 0$ is equivalent to $\exists \delta \in \R : \delta^2\eps - 1 = 0$.
    Incorporating this yields an equivalent sentence
    \begin{align*}
        \forall \eps \in \R, X \in \R^n :~
        &(\exists \delta \in \R : \delta^2\eps - 1 = 0) \implies \\
        &\bigl(
            \exists Y \in \R^m :
            F(X)^2 > 0 \lor (G(Y) = 0 \land P(X,Y) < \eps)
        \bigr)
        \text{.}
    \end{align*}
    Rewriting the implication~$A \implies B$ as~$\neg A \lor B$ turns the existential quantifier in front of~$\delta$ into a universal quantifier.
    Furthermore, we replace $\neg(\delta^2\eps - 1 = 0)$ by the equivalent $(\delta^2\eps - 1)^2 > 0$.
    We get an equivalent sentence
    \begin{align*}
        \forall \eps \in \R, X \in \R^n :~
        &\bigl(
            \forall \delta \in \R : (\delta^2\eps - 1)^2 > 0
        \bigr)~\lor \\
        &\bigl(
             \exists Y \in \R^m :F(X)^2 > 0 \lor (G(Y) = 0 \land P(X,Y) < \eps)
        \bigr)
        \text{.}
    \end{align*}
    Moving all quantifiers to the front, turns this into an equivalent prenex normal form
    \begin{align*}
        &\forall \eps \in \R, \delta \in \R, X \in \R^n \sepdot
        \exists Y \in \R^m :~\\
        &(\delta^2\eps + 1)^2 > 0 \lor F(X)^2 > 0 \lor
        \bigl(
            G(Y) = 0 \land P(X,Y) < \eps
        \bigr)
        \text{.}
    \end{align*}
    This sentence is \UStrict, so \cref{lem:bounding_universal_range,lem:bounding_existential_range} are applicable.
    Thus, there are two integers~$N, M$ bounded by a polynomial in the length of the sentence, such that for $C := \cramped{2^{2^N}}$ and $D := \cramped{2^{2^M}}$ all universally quantified variables can be restricted to~$[-C,C]$ and all existentially quantified variables can be restricted to~$[-D,D]$.
    We obtain another equivalent sentence
    \begin{align*}
        &\forall \eps \in [-C,C], \delta \in [-C,C], X \in [-C,C]^n \sepdot
        \exists Y \in [-D,D]^m :~\\
        &(\delta^2\eps + 1)^2 > 0 \lor F(X)^2 > 0 \lor
        \bigl(
            G(Y) = 0 \land P(X,Y) < \eps
        \bigr)
        \text{.}
    \end{align*}
    In the next step, we replace the strict inequality~$P(X,Y) < \eps$ by the non-strict inequality $P(X,Y) \leq \eps$.
    For this step, we exploit the fact that a continuous function over a compact domain attains its minimum and maximum.
    Consequently, $\forall \eps >0, X\in [-C,C] \sepdot \exists Y\in [-D,D]: P(X,Y)<\eps$ is (true if and only if 
    $\max_{X\in[-C,C]}\min_{Y\in [-D,D]} P(X,Y)\leq 0$ and thus) 
    equivalent to $\forall \eps >0, X\in [-C,C]\sepdot Y\in [-D,D] : P(X,Y) \leq \eps$.
    We obtain the equivalent sentence
    \begin{align*}
        &\forall \eps \in [-C,C], \delta \in [-C,C], X \in [-C,C]^n \sepdot
        \exists Y \in [-D,D]^m :~\\
        &(\delta^2\eps + 1)^2 > 0 \lor F(X)^2 > 0 \lor
        \bigl(
            G(Y) = 0 \land P(X,Y) \leq \eps
        \bigr)
        \text{.}
    \end{align*}
    Going one step further, we now want to express~$P(X,Y) \leq \eps$ as a polynomial equation.
    To this end, we replace it by the equivalent formula $\exists B \in \R : P(X,Y) - \eps + B^2 = 0$.
    By \cref{lem:polynomial_over_compact_domain}, we can also bound the range over which~$B$ is quantified:
    We can compute in polynomial time an integer $K \leq \poly(\abs{P}, \max\{N,M\}, n + m + 1)$ such that~$\abs{B} \leq E := \cramped{2^{2^K}}$.
    We get another equivalent sentence
    \begin{align*}
        &\forall \eps \in [-C,C], \delta \in [-C,C], X \in [-C,C]^n \sepdot
        \exists Y \in [-D,D]^m, B \in [E,E] :~\\
        &(\delta^2\eps + 1)^2 > 0 \lor F(X)^2 > 0 \lor
        \bigl(
            G(Y) = 0 \land P(X,Y) - \eps + B^2 = 0
        \bigr)
        \text{.}
    \end{align*}
    At this point we define
    \begin{align*}
        \varphi_<(\eps, \delta, X)\ &:=\ (\delta^2\eps + 1)^2 \lor F(X)^2 > 0
        \quad \text{and} \\
        H(X, Y, \eps, B)\ &:=\ G(Y)^2 + (P(X,Y) - \eps + B^2)^2
        \text{.}
    \end{align*}
    Note that~$\varphi_< \in \QFF[<]$.
    We use these to get the equivalent sentence
    \begin{align*}
        &\forall \eps \in [-C,C], \delta \in [-C,C], X \in [-C,C]^n \sepdot
        \exists Y \in [-D,D]^m, B \in [E,E] :~\\
        &\varphi_<(\eps, \delta, X) \lor H(X, Y, \eps, B) = 0
        \text{.}
    \end{align*}
    The last step is to scale all variables to be in the interval~$[-1,1]$.
    For this, let $S := \max\{N, M, K\}$ and introduce~$S + 1$ new variables $U = \{U_0, \ldots, U_S\}$.
    Furthermore, let~$\chi(U)$ be formula~\eqref{eqn:double_exponetial_small_value}.
    Recall that by \cref{lem:double_exponential_small_value}, $\chi(u)$ is true if and only if~$u_i = \cramped{2^{-2^i}}$.
    We can rewrite our sentence to the equivalent sentence
    \begin{align*}
        &\forall \eps, \delta, X, U \in [-1,1]^{1 + 1 + n + S + 1} \sepdot
        \exists Y, B \in [-1,1]^{m + 1} :~\\
        &\neg\chi(U) \lor
        U_S^d \cdot \varphi_<\left(
            \frac{\eps}{U_S}, \frac{\delta}{U_S}, \frac{X}{U_S}
        \right) \lor
        U_S^d \cdot H\left(
            \frac{X}{U_S}, \frac{Y}{U_S}, \frac{\eps}{U_S}, \frac{B}{U_S}
        \right) = 0
        \text{.}
    \end{align*}
    Here~$d$ is the maximum degree of any polynomial in~$\varphi_<$ and~$H$.
    By~$X/U_S$ we denote that every variable~$X_i$ is replaced by~$X_i/U_S$.
    Multiplying by~$U_S^d$ makes sure that each atom remains a polynomial.
    What we obtained is a \UETR instance that has the form required by \cref{lem:special_uetr_to_strict_uetr} (note that~$\neg\chi(U)$ is strict).
    Therefore, we can transform this into an equivalent \StrictUETR instance in polynomial time.
\end{proof}

Now we finally have all the needed tools to prove \cref{thm:hausdorff_membership} which states that the \Hausdorff problem is contained in \UER[<].
We do this by transforming formula~\eqref{eqn:hausdorff_as_uetr} into the form required by \cref{lem:universal_epsiolon_to_special_uetr}.
This then yields an equivalent \StrictUETR instance, thus proving \UER[<]-membership.

\begin{proof}[Proof of \cref{thm:hausdorff_membership}]
    Recall that $\dDH(A,B) \leq t$ is equivalent to~\eqref{eqn:hausdorff_as_uetr}, which is
    \[
        \forall \eps > 0, a \in A \sepdot
        \exists b \in B :
        \norm{a - b}^2 < t^2 + \eps
        \text{.}
    \]
    In a first step, we resolve the shorthand notations $a \in A$ and $b\in B$ and we obtain
    \[
        \forall \eps > 0, a \in \R^n :
        \varphi_A(a) \implies
        \bigl(
            \exists b \in \R^n : \varphi_B(b) \land \norm{a - b}^2 < t^2 + \eps
        \bigr)
        \text{.}
    \]
    Next, we consider the (quantifier-free) formulas~$\varphi_A(a)$ and~$\varphi_B(b)$.
    Using \cref{lem:tseitin}, we obtain in polynomial time two integers~$k, \ell$ and two polynomials $F_A : \R^{n+k} \to \R$ and $F_B : \R^{n+\ell} \to \R$, such that $\varphi_A(a)$ is equivalent to $\exists U_a \in \R^k : F_A(a,U_a) = 0$ and similarly~$\varphi_B(b)$ is equivalent to $\exists U_b \in \R^\ell : F_B(b,U_b) = 0$.
    This yields the equivalent sentence
    \begin{align*}
        \forall \eps > 0, a \in \R^n :~
        &\bigl(
            \exists U_a \in \R^k : F_A(a, U_a) = 0
        \bigr) \implies \\
        &\bigl(
            \exists b \in \R^n, U_b \in \R^\ell : F_B(b, U_b) = 0 \land \norm{a - b}^2 < t^2 + \eps
        \bigr)
        \text{.}
    \end{align*}
    Rewriting the implication~$X \implies Y$ as~$\neg X \lor Y$ changes the existential quantifier in front of~$U_a$ into a universal quantifier, which we can move to the front.
    Also, the equation gets negated.
    Substituting~$\neg(F(a,U_a) = 0)$ by $F(a,U_a)^2 > 0$, we get the equivalent sentence
    \begin{align*}
        \forall \eps > 0, a \in \R^n, U_a \in \R^k :~
        &F_A(a, U_a)^2 > 0~\lor \\
        &\bigl(
            \exists b \in \R^n, U_b \in \R^\ell : F_B(b, U_b) = 0 \land \norm{a - b}^2 < t^2 + \eps
        \bigr)
        \text{.}
    \end{align*}
    Lastly, we move the existential quantifier after the universal one and get an equivalent sentence
    \begin{align*}
        &\forall \eps > 0, a \in \R^n, U_a \in \R^k \sepdot
        \exists b \in \R^n, U_b \in \R^\ell : \\
        & F_A(a, U_a)^2 > 0 \lor
        \bigl(
            F_B(b, U_b) = 0 \land \norm{a - b}^2 < t^2 + \eps
        \bigr)
    \end{align*}
    in prenex normal form.
    This sentence has the form required by \cref{lem:special_uetr_to_strict_uetr}, concluding the proof.
\end{proof}

\section{Exotic Quantifiers}
\label{sec:exotic_quantifiers}

In this section, we show an interesting connection of the complexity class $\UER[<]$ to a complexity class introduced by Bürgisser and Cucker when studying the computational complexity of many basic problems regarding semi-algebraic sets~\cite{Burgisser2009_ExoticQuantifiers}.
While they mainly work in the BSS-model, they also consider some problems in the bit-model of computation (which we use throughout this paper).
There, among others, the complexity classes \ER, \UR and \UER appear under the names $\text{\textsf{BP}}^0(\exists)$, $\text{\textsf{BP}}^0(\forall)$, and $\text{\textsf{BP}}^0(\forall\exists)$, respectively.
(Here the \textsf{BP} stands for \enquote{binary part} and the superscript~$0$ denotes that there are no machine constants in the BSS machine.)

They notice that the computational complexity of some natural problems defies to be classified into this hierarchy of complexity classes. 
In this paper, we have a similar situation because we prove that the \Hausdorff problem is complete for a the class \UER[<], supposedly between \ER/\UR and \UER.
See also the paper by D'Costa et al.~\cite{DCosta2021_EscapeProblem}, where they show that their \emph{escape problem} is \EUR[\leq]-complete, though supposedly between \ER/\UR and \EUR.

The approach in this paper and in~\cite{DCosta2021_EscapeProblem} is to make syntactic restrictions to the quantifier-free parts of the sentences (i.e., allowing only strict inequalities as in \StrictUETR).
This defines new decision problems that in turn are used to define new complexity classes like \UER[<].
Bürgisser and Cucker take a different approach.
They define new quantifiers, called \emph{exotic quantifiers},  that make a topological restriction on the sentences~\cite{Burgisser2009_ExoticQuantifiers}. Two of them are highly related to our work:
\begin{align*}
    \forall^* X \in \R^n : \varphi(X)
    &\wdequiv
    \forall \eps > 0, X \in \R^n \sepdot
    \exists \til{X} \in \R^n :
    \norm{X - \til{X}}^2 < \eps \land \varphi(\til{X}) \\
    \exists^* X \in \R^n : \varphi(X)
    &\wdequiv
    \exists \eps > 0, X \in \R^n \sepdot
    \forall \til{X} \in \R^n :
    \norm{X - \til{X}}^2 < \eps \implies \varphi(\til{X})
\end{align*}
Intuitively, $\forall^* X \in \R^n : \varphi(X)$ means that~$\varphi(x)$ does not need to be true for all~$x \in \R^n$ but just for all $x \in D$, where~$D$ is some dense subset of~$\R^n$.
Conversely, $\exists^* X \in \R^n : \varphi(X)$ means that there must be an~$x \in \R^n$ and some radius~$r > 0$, such that for all~$\til{x} \in B_n(x, r)$ it holds that $\varphi(\til{x})$, where $B_n(x, r)$ denotes the $n$-dimensional open ball of radius~$r$ centered at~$x$.

As one would expect, it holds that
\begin{align*}
    \neg\forall^* X \in \R^n : \varphi(X)
    &\wequiv
    \exists^* X \in \R^n : \neg\varphi(X)
    \quad \text{and} \\
    \neg\exists^* X \in \R^n : \varphi(X)
    &\wequiv
    \forall^* X \in \R^n : \neg\varphi(X)
    \text{.}
\end{align*}

\cref{thm:exotic_uetr}, the main theorem of this section, establishes a relation between the two approaches described above.
As it turns out, the topological restrictions on the formulas by Bürgisser and Cucker are equivalent to the syntactical restrictions done for example in this paper.
For this purpose, let \ExoticUETR denote the decision problem whether a sentence of the form
\[
    \forall^* X \in \R^n \sepdot
    \exists Y \in \R^m : 
    \varphi(X,Y)
\]
with quantifier-free~$\varphi$ is true.
Further, we define the complexity class $\forall^*\exists\R$ to contain all problems that polynomial-time many-one reduce to \ExoticUETR.
We show that the complexity classes $\forall^*\exists\R$ and $\UER[<]$ coincide.

\exoticuetr*

\begin{proof}
    The \UER[<]-hardness of \ExoticUETR follows directly from \cref{thm:open_ball_counterexamples}.
    For a given  \StrictUETR instance
    \[
        \Phi \wdequiv
        \forall X \in \R^n \sepdot
        \exists Y \in \R^m :
        \varphi_<(X,Y)
    \]
    with~$\varphi_< \in \QFF[<]$, \cref{thm:open_ball_counterexamples} allows to compute in polynomial time  an equivalent \UETR instance
    \[
        \Psi \wdequiv
        \forall X \in \R^k \sepdot
        \exists Y \in \R^\ell :
        \psi(X,Y)
    \]
    with~$\psi \in \QFF$.
    Recall that on the one hand, if~$\Psi$ is false, then the set of counterexamples~$\counterexample{\Psi}$ contains an open ball.
    On the other hand, if~$\Psi$ is true, then~$\counterexample{\Psi} = \emptyset$.
    Therefore,~$\Psi$ is true if and only if it is true for a dense subset of~$\R^k$.
    It follows that the $\forall$-quantifier can be replaced by the exotic $\forall^*$-quantifier in~$\Psi$ and we get
    \[
        \Phi \wequiv \Psi \wequiv
        \forall^* X \in \R^k \sepdot
        \exists Y \in \R^\ell :
        \psi(X,Y)
        \text{.}
    \]
    Consequently, \ExoticUETR is  \UER[<]-hard.
    
    To prove that \ExoticUETR is contained in \UER[<], we transform
    \[
        \forall^* X \in \R^n \sepdot
        \exists Y \in \R^m :
        \varphi(X,Y)
    \]
    in polynomial time into an equivalent sentence of the form required by \cref{lem:universal_epsiolon_to_special_uetr}.
    This \lcnamecref{lem:universal_epsiolon_to_special_uetr} allows us to construct an equivalent \StrictUETR instance in polynomial time, thereby proving \UER[<]-membership.
    We start by expressing the exotic quantifier~$\forall^*$ in terms of classical quantifiers~$\forall$ and~$\exists$, obtaining an equivalent sentence
    \[
        \forall \eps > 0, X \in \R^n \sepdot
        \exists X_0 \in \R^n, Y \in \R^m :
        \norm{X - X_0}^2 < \eps \land \varphi(X_0, Y)
        \text{.}
    \]
    By \cref{lem:tseitin} we can compute in polynomial time an integer~$k$ and a polynomial $G : \R^{n + m + k} \to \R$ such that~$\varphi(X_0,Y)$ is equivalent to $\exists U \in \R^k : G(X_0, Y, U) = 0$.
    Plugging this into above sentence, we get another equivalent sentence
    \[
        \forall \eps > 0, X \in \R^n \sepdot
        \exists X_0 \in \R^n, Y \in \R^m, U \in \R^k :
        \norm{X - X_0}^2 < \eps \land G(X_0, Y, U) = 0
        \text{.}
    \]
    This has the form required by \cref{lem:universal_epsiolon_to_special_uetr}.
    Hence, \UER[<]-membership follows.
\end{proof}

\subsection{Applications to the \Hausdorff Problem}
\label{sec:erd_hausdorff}

We now use our insights to establish the exact computational complexity of \problemname{Euclidean Relative Denseness (ERD)} which was left as an open problem by Bürgisser and Cucker~\cite{Burgisser2009_ExoticQuantifiers}.
In ERD, we are given two semi-algebraic sets~$A$ and~$B$ and wonder whether~$A$ is contained in the closure of~$B$, which is denoted by~$\overline{B}$.
Note that \problemname{ERD} is equivalent to deciding whether~$\dDH(A,B) = 0$.
Bürgisser and Cucker show:

\begin{theorem}[{\cite[Corollary 5.6]{Burgisser2009_ExoticQuantifiers}}]
    \label{thm:erd_bc}
    \problemname{ERD} is in \UER and $\forall^*\exists\R$-hard.
\end{theorem}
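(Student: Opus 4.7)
The plan is to prove the two parts separately, using the standard correspondence between topological closeness and the $\forall\eps\exists$ quantifier pattern.

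For \UER-membership, I would observe that the closure inclusion $A \subseteq \overline{B}$ unfolds directly into a $\forall\exists$-statement: it is equivalent to
\[
    \forall a \in \R^n, \eps > 0 \sepdot
    \exists b \in \R^n :
    \neg \varphi_A(a) \lor \bigl(
        \varphi_B(b) \land \norm{a-b}^2 < \eps^2
    \bigr),
\]
where $\varphi_A, \varphi_B$ are the (quantifier-free) formulas describing $A$ and $B$. After pushing the negation into $\varphi_A$ via the transformations from \cref{sec:prelim}, this is a \UETR sentence of polynomial size, giving the desired polynomial-time reduction from \problemname{ERD} to \UETR.

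For $\forall^*\exists\R$-hardness, I would reduce directly from \ExoticUETR. Given an instance
\[
    \Phi \wdequiv
    \forall^* X \in \R^n \sepdot
    \exists Y \in \R^m :
    \varphi(X,Y),
\]
the semantics of the exotic quantifier assert that $\Phi$ holds if and only if the set $S := \{x \in \R^n : \exists y \in \R^m, \varphi(x,y)\}$ is dense in $\R^n$, equivalently $\R^n \subseteq \overline{S}$. I would therefore emit the \problemname{ERD} instance $A := \R^n$ (described by the trivial formula $X_1 = X_1$) and $B := S$, which is semi-algebraic as the projection of the quantifier-free set $\{(x,y) : \varphi(x,y)\}$ onto the first $n$ coordinates. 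By construction $\Phi$ holds iff $A \subseteq \overline{B}$, proving hardness.

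The main obstacle is the representation of $B$. In the Bürgisser--Cucker framework, \problemname{ERD} takes semi-algebraic sets described by arbitrary first-order formulas (in particular, projections are allowed), so the reduction above goes through unchanged. If one instead demanded a quantifier-free description of $B$, then eliminating the existential quantifier in $S$ is infeasible in polynomial time \cite{Davenport1988_QuantifierElimination}. A naive lifting $A' := \R^n \times \{0\}^m$, $B' := \{(x,y) : \varphi(x,y)\}$ fails because $A' \subseteq \overline{B'}$ additionally forces the achievable $Y$-values to accumulate at $0$, a condition strictly stronger than $\Phi$. A correct quantifier-free reduction would require first bounding the ranges of $X$ and $Y$ (using \cref{lem:bounding_universal_range,lem:bounding_existential_range}) and then arguing via a compactness argument akin to the one in \cref{thm:open_ball_counterexamples}, but this refinement is not needed for BC's stated form of \problemname{ERD}.
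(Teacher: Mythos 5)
First, a point of comparison: the paper does not prove this statement at all --- it is quoted from B\"urgisser and Cucker (their Corollary~5.6), with only the remark that the BSS-model proof carries over to the bit-model. So there is no in-paper argument to measure against, and your attempt has to stand on its own. The membership half does: unfolding $A \subseteq \overline{B}$ into $\forall a, \eps \sepdot \exists b : \neg\varphi_A(a) \lor (\varphi_B(b) \land \norm{a-b}^2 < \eps^2)$ (with $\eps > 0$ folded into the matrix) is a polynomial-size \UETR sentence, so \problemname{ERD} is in \UER.

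The hardness half has a genuine gap. Your reduction outputs $B = S = \{x \mid \exists y\, \varphi(x,y)\}$, i.e.\ a set described by a formula with an existential block. In the setting in which this theorem is used --- \problemname{ERD} instances are quantifier-free descriptions of $A$ and $B$, exactly like \Hausdorff instances, since the paper later combines this hardness with \UER[<]-membership obtained from \cref{thm:hausdorff_membership} and with the reduction $\dDH(A,B)=0$ --- your output is not a legal instance, and the claim that the B\"urgisser--Cucker formulation admits arbitrary first-order (projected) descriptions is unsupported; with quantified inputs the membership side would not follow either. Your fallback sketch for the quantifier-free case (range bounding plus a compactness argument) is not carried out, and it is also heavier than necessary. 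A direct fix: add one coordinate $T$ and substitute $Y = Z/T$. Let $\til{\varphi}(\til{X},Z,T)$ be obtained from $\varphi(\til{X}, Z/T)$ by multiplying each atom with a suitable positive power of $T$ (sign-preserving once $T>0$ is enforced), and set $A := \R^n \times \{0\}^{m+1}$ and $B := \{(\til{x},z,t) \mid 0 < t < 1 \land \til{\varphi}(\til{x},z,t)\}$; both are quantifier-free and of polynomial size. If $\forall^* X \sepdot \exists Y : \varphi$ holds, then for any $x$ and $\eps>0$ there are $\til{x}$ close to $x$ and $y$ with $\varphi(\til{x},y)$, and choosing $t$ small with $z = ty$ places a point of $B$ within $\eps$ of $(x,0,0)$, so $A \subseteq \overline{B}$. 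Conversely, a point $(\til{x},z,t) \in B$ within $\eps$ of $(x,0,0)$ yields $\norm{x-\til{x}}^2 < \eps$ and $y := z/t$ with $\varphi(\til{x},y)$, which is exactly the unfolded $\forall^*$-sentence. This gives the $\forall^*\exists\R$-hardness for the quantifier-free \problemname{ERD} actually needed here.
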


They prove this in the BSS-model, but the same proof also works in the bit-model.
Building upon our insights, we are able to determine the exact computational complexity of \problemname{ERD} (in the bit-model).

\begin{theorem}
    \problemname{ERD} is \UER[<]-complete.
\end{theorem}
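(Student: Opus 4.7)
The plan is to combine two results already established in this paper. First, Theorem~\ref{thm:erd_bc} of Bürgisser and Cucker gives us that \problemname{ERD} is $\forall^*\exists\R$-hard. Second, Theorem~\ref{thm:exotic_uetr} establishes that $\forall^*\exists\R = \UER[<]$. Chaining these two results, \problemname{ERD} is immediately $\UER[<]$-hard. Since hardness is usually the delicate direction, it is pleasing that here it comes essentially for free from the equivalence of the exotic and syntactically restricted complexity classes that we just proved.

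For containment in $\UER[<]$, the plan is to observe that \problemname{ERD} is nothing but the \DirectedHausdorff problem with the specific threshold $t = 0$: we have $A \subseteq \overline{B}$ if and only if $\dDH(A,B) \leq 0$ (equivalently, $= 0$). Theorem~\ref{thm:hausdorff_membership} already shows that the general \DirectedHausdorff problem lies in $\UER[<]$, and its proof handles $t$ as part of the input, so nothing in that reduction breaks when $t = 0$; the encoding of $t^2$ as an integer coefficient in formula~\eqref{eqn:hausdorff_as_uetr} and the subsequent transformations via \cref{lem:tseitin}, \cref{lem:universal_epsiolon_to_special_uetr} and \cref{lem:special_uetr_to_strict_uetr} are oblivious to the specific value of $t$. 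Hence $\problemname{ERD} \in \UER[<]$.

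Combining hardness and membership yields $\UER[<]$-completeness. The only point requiring a little care is to check that the hardness reduction for \ExoticUETR from \cref{thm:exotic_uetr} indeed produces an equivalent \ExoticUETR instance in polynomial time (so that the composition with the Bürgisser--Cucker reduction remains polynomial), but this is exactly what \cref{thm:exotic_uetr} guarantees. There is no genuine obstacle here: the main intellectual content is entirely in the earlier \cref{thm:open_ball_counterexamples} and \cref{thm:hausdorff_membership}, and the present statement is essentially a corollary tying them together with the prior work of Bürgisser and Cucker.
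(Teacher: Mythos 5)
Your proposal is correct and follows exactly the paper's own argument: hardness by chaining the Bürgisser--Cucker $\forall^*\exists\R$-hardness (\cref{thm:erd_bc}) with $\forall^*\exists\R = \UER[<]$ (\cref{thm:exotic_uetr}), and membership by viewing \problemname{ERD} as the \DirectedHausdorff problem with threshold $t = 0$ and invoking \cref{thm:hausdorff_membership}.
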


\begin{proof}
    By \cref{thm:exotic_uetr} and $\forall^*\exists\R$-hardness from \cref{thm:erd_bc}, it follows that \problemname{ERD} is \UER[<]-hard.
    Further, \cref{thm:hausdorff_membership} implies that \problemname{ERD} is contained in \UER[<].
    Consequently, \problemname{ERD} is \UER[<]-complete.
\end{proof}

Moreover, we remark that  \UER[<]-hardness of \problemname{ERD} implies \UER[<]-hardness of the general directed \Hausdorff problem (for any distance~$t \geq 0$):
Given an instance~$A,B \subseteq \R^n$ of \problemname{ERD}, we define $A' := (A,0) \subseteq \R^{n+1}$ and $B' := (B,1) \subseteq \R^{n+1}$.
Then $\dDH(A,B) = 0$ if and only if~$\dDH(A',B') \leq 1$. 

\section{Open Problems}
\label{sec:open_problems}

We showed that the \Hausdorff problem is \UER[<] complete.
We conclude the paper with a list of interesting open questions:\medskip

While $\ER = \exists_{\leq}\R  = \exists_<\R$ and $\UR = \forall_{\leq}\R = \forall_{<}\R$ are known \cite{Schaefer2017_FixedPointsNash}, similar results are unknown for higher levels of the hierarchy.
In this context, we are particularly interested in the following question:

\begin{openproblem}
    \label{op:complexity_classes_equal}
    Are the two complexity classes \UER[<] and \UER actually the same?
\end{openproblem}

Regarding \cref{op:complexity_classes_equal}, we are not aware of any algorithms that are more efficient if the atoms are restricted to strict polynomial inequalities only.
An answer to this question is interesting in its own right.
If the two classes are indeed different, this would imply $\NP \neq \PSPACE$, so we do not expect such a proof any time soon.

\begin{openproblem}
    \label{op:forall_strict_uetr}
    What is the computational complexity of deciding \UStrict~\UETR instances?
\end{openproblem}

By definition, deciding \UStrict~\UETR instances is \UER[<]-hard and in \UER.
The most important tools for showing \UER[<]-membership in this paper and also in~\cite{DCosta2021_EscapeProblem} are the range restrictions for the quantified variables (\cref{lem:bounding_universal_range,lem:bounding_existential_range}).
These are still applicable to \UStrict formulas.
However, this seems not enough to transform an arbitrary \UStrict~\UETR instance efficiently into an equivalent \StrictUETR instance, in which all atoms are not negated and are strict inequalities.
If this can be done, then this would significantly simplify the proof of \cref{lem:universal_epsiolon_to_special_uetr} (and therefore \cref{thm:hausdorff_membership} proving that \Hausdorff is in \UER[<]).\medskip

Our third open problem concerns even more restricted versions of the \Hausdorff problem that remain hard.
As already noted in \cref{cor:simple_formulas}, \Hausdorff remains \UER[<]-hard if all atoms in the formulas describing~$A$ and~$B$ are quadratic equations.

\begin{openproblem}
    \label{op:simlest_hard_setting}
    What is the most restricted version of the \Hausdorff problem that is still \UER[<]-hard?
\end{openproblem}

There are several directions to explore on \cref{op:simlest_hard_setting}.
Identifying meaningful restrictions on the sets~$A$ and~$B$ might lead to an easier problem.
For example, it is \UR-complete to decide if closed semi-algebraic sets~$A,B$ have Hausdorff distance exactly zero (they do if and only if~$A = B$):

\begin{theorem}
    Deciding if two semi-algebraic sets are equal is \UR-complete.
\end{theorem}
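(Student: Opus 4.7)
The plan is to show both directions separately, each by a direct one-line argument that mirrors the standard characterisation of \UR via universally quantified polynomial sentences.

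For \UR-membership, let $A = \{x \in \R^n \mid \varphi_A(x)\}$ and $B = \{x \in \R^n \mid \varphi_B(x)\}$ be the two input sets, given by quantifier-free formulas $\varphi_A, \varphi_B \in \QFF$. Then $A = B$ holds if and only if
\[
    \forall X \in \R^n :
    \bigl( \varphi_A(X) \implies \varphi_B(X) \bigr) \land
    \bigl( \varphi_B(X) \implies \varphi_A(X) \bigr)
\]
is true. The right-hand side is a \UETR sentence whose quantifier-free part has length linear in $\abs{\varphi_A} + \abs{\varphi_B}$, so the equality problem many-one reduces to \UETR with only universal quantifiers, i.e., to \UR.

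For \UR-hardness, we reduce from \UR itself. Given an arbitrary \UR instance $\Phi \dequiv \forall X \in \R^n : \varphi(X)$ with $\varphi \in \QFF$, define
\[
    A \ :=\ \{ x \in \R^n \mid \varphi(x) \}
    \quad \text{and} \quad
    B \ :=\ \R^n = \{ x \in \R^n \mid 0 = 0 \}.
\]
Both sets are presented by quantifier-free formulas of size polynomial in $\abs{\Phi}$ (for $B$ we simply use the tautological atom $0 = 0$). By construction, $A \subseteq B$ trivially, and $B \subseteq A$ iff $\varphi(x)$ holds for every $x \in \R^n$, iff $\Phi$ is true. Hence $A = B$ iff $\Phi$ is true, which is the required polynomial-time many-one reduction.

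I do not foresee any real obstacle here: unlike the \Hausdorff problem, set equality is symmetric in both sets and places no metric requirement, so neither the counter-example expansion machinery of \cref{thm:open_ball_counterexamples} nor the range-bounding tools of \cref{sec:range_restrictions} are needed. The only thing worth double-checking is that allowing a tautological atom like $0 = 0$ to describe $B = \R^n$ is consistent with the paper's input encoding of semi-algebraic sets; this is immediate from the definitions in \cref{sec:fotr}, since a basic semi-algebraic set is allowed to be specified by empty families $\mathcal{P}, \mathcal{Q}$, which yields all of $\R^n$.
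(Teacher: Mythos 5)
Your proposal is correct and follows essentially the same route as the paper: \UR-membership via the universally quantified sentence expressing $\varphi_A(X) \iff \varphi_B(X)$, and \UR-hardness by reducing $\forall X \in \R^n : \varphi(X)$ to testing $\{x \in \R^n \mid \varphi(x)\} = \R^n$. The extra care you take in spelling out the encoding of $B = \R^n$ by a tautological atom is fine and consistent with the paper's definitions.
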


\begin{proof}
    Given quantifier-free formulas $\varphi_A(X)$ and~$\varphi_B(X)$, it holds that $A = B$ if and only if $\forall X \in \R^n : \varphi_A(X) \iff \varphi_B(X)$.
    This shows \UR-membership.
    
    To prove \UR-hardness, note that the sentence $\forall X \in \R^n : \varphi(X)$ is equivalent to deciding whether $\{x \in \R^n \mid \varphi(x)\} \stackrel{?}{=} \R^n$.
\end{proof}

\bibliographystyle{plainurl}
\bibliography{references}

\end{document}